\newtheorem{lemma}{Lemma}
\newtheorem{definition}{Definition}
\newtheorem{theorem}{Theorem}
\newtheorem{proof}{Proof}
\begin{document}

\title{Inner-product Functional Encryption with Fine-grained Revocation for Flexible EHR Sharing}

\author{Yue Han, Jinguang Han,~\IEEEmembership{Senior Member,~IEEE,} Liqun Chen,~\IEEEmembership{Senior Member,~IEEE,} and Chao Sun
        % <-this % stops a space
\thanks{Yue Han and Chao Sun are with the School of Cyber Science and Engineering, Southeast University, Nanjing 210096, China (e-mail: yuehan@seu.edu.cn; sunchaomt@seu.edu.cn).}% <-this % stops a space
\thanks{Jinguang Han is with the School of Cyber Science and Engineering,
Southeast University, Nanjing 210096, China, and also with the Engineering
Research Center of Blockchain Application, Supervision and Management
(Southeast University), Minister of Education, Nanjing 210096, China (e-mail: jghan@seu.edu.cn). }
\thanks{Liqun Chen is with the Department of Computer Science, University of Surrey, Guildford, 
Surrey GU27XH, United Kingdom (e-mail: liqun.chen@surrey.ac.uk)}
}

%\author{IEEE Publication Technology,~\IEEEmembership{Staff,~IEEE,}
        % <-this % stops a space
%\thanks{This paper was produced by the IEEE Publication Technology Group. They are in Piscataway, NJ.}% <-this % stops a space
%\thanks{Manuscript received April 19, 2021; revised August 16, 2021.}}

% The paper headers
%\markboth{Journal of \LaTeX\ Class Files,~Vol.~14, No.~8, August~2021}%
%{Shell \MakeLowercase{\textit{et al.}}: A Sample Article Using IEEEtran.cls for IEEE Journals}

%\IEEEpubid{0000--0000/00\$00.00~\copyright~2021 IEEE}
% Remember, if you use this you must call \IEEEpubidadjcol in the second
% column for its text to clear the IEEEpubid mark.

\maketitle

\begin{abstract}
E-health record (EHR) contains a vast amount of continuously growing medical data and enables medical institutions to access patient health data conveniently. 
This provides opportunities for medical data mining which has important applications in identifying high-risk patients and improving disease diagnosis, etc.
Since EHR contains sensitive patient information, how to protect patient privacy and enable mining on EHR data is important and challenging. 
Traditional public key encryption (PKE) can protect patient privacy, but cannot support flexible selective computation on encrypted EHR data.
Functional encryption (FE) allows authorised users to compute function values of encrypted data without releasing other information, hence supporting selective computation on encrypted data.
Nevertheless, existing FE schemes do not support fine-grained revocation and update, so they are unsuitable for EHR system.
%This is because in EHR data sharing systems, with the adjustment of work assignments, medical institutions’ function %computing rights are dynamically changing. 
%Moreover, although some schemes support revocation, 
%they cannot ensure that revoked users cannot access encrypted data generated prior to revocation, 
%which could result in illegal access to EHR data.
In this paper, to support fine-grained revocation and update,  
we first propose an inner-product functional encryption with fine-grained 
revocation (IPFE-FR) scheme, and then apply it to a flexible EHR sharing system. 
Our scheme possesses the following features: 
(1) a group manager can revoke a specific function computation of medical institutions on encrypted EHR data,  
instead of all function computation rights.
(2) a revoked medical institution is not allowed to compute the function value of encrypted EHR data not only generated after the revocation, but also generated before the revocation.
(3) secret keys issued to the same medical institution are bound together to prevent collusion attacks. 
The formal definition and security model of the IPFE-FR scheme are proposed. 
Furthermore, we present a concrete construction and reduce its security to the Learning with Errors (LWE) assumption which is quantum-resistant. 
Finally, the theoretical analysis and experimental implementation of our scheme are conducted to show its efficiency.  
Our results show that the proposed scheme can flexibly and securely share EHR data.
\end{abstract}

\begin{IEEEkeywords}
Functional encryption, Inner product, fine-grained revocation,  E-health record, lattice-based cryptography
\end{IEEEkeywords}

\section{Introduction}
\IEEEPARstart{E}HR is a digital medical information management tool used to integrate, store, manage, and share patients' health data.
Compared with traditional health data management methods, 
EHR improves the efficiency of medical services and the utilisation of health data.
At present, some EHR systems were developed, such as Oracle Health eService \cite{OracleHealth} and IBM Watson Health \cite{IBMWatson}.
To ease maintenance burdens and facilitate data sharing, EHR data is always
outsourced to remote cloud servers.
Using these massive amounts of data stored in cloud servers, medical institutions can employ data mining techniques to identify undiagnosed patients, track infections, etc.
However, with the growing concern over privacy issues, many legal frameworks, such as Health Insurance Portability and Accountability Act (HIPAA) \cite{HIPAA} and General Data Protection Regulation (GDPR) \cite{GDPR}, have come into effect.
Since EHR inherently contain a large amount of patient privacy information, how to achieve selective computation of EHR data while ensuring its confidentiality is an urgent issue that needs to be addressed, which is particularly important in EHR data mining. 

To support fine-grained access control on EHR data, attribute-based encryption (ABE) was integrated with EHR systems \cite{pussewalage2022delegatable,wang2021medshare,li2021ehrchain,zaghloul2020d}. 
However, the decryption mode of ABE is all-or-nothing, namely, EHR data users either fully decrypt the data or cannot know anything about the data. This prevents the execution of
selective computations on encrypted EHR data. 

As a new promising paradigm of public key encryption (PKE), functional encryption (FE) \cite{boneh2011functional} was introduced to enable authorised users to compute a specific function of encrypted data, without disclosing any other information.
By leveraging functional encryption, medical institutions can conduct data analysis without disclosing EHR data. This effectively protects patients' privacy while supporting selective computation of EHR data. 

Inner product plays an important role in the analysis of gene expression data. The weighted mean is a useful tool to describe the characteristics of gene data \cite{kadota2008weighted}. 
Inner-product FE (IPFE) is a specific form of FE and enables an authorised user to compute the inner product of the vectors embedded in his/she secret key and the ciphertext, respectively, without releasing any other information about the encrypted data.
With IPFE, an authorised medical institution
can calculate the weighted means of gene data
without revealing sensitive information, thus protecting patient privacy.
However, there are still some important issues that have not been well considered.

(1) Fine-grained revocation. Consider the following scenario: 
In an EHR sharing system, suppose that a medical institution $MI_1$ has the computing rights of functions $f_1$, $f_2$ and $f_3$. 
When the work assignment of $MI_1$ changes, its partial computing rights of function (e.g., $f_1$ and $f_2$) should be revoked.
In this case, the system needs to revoke $MI_1$'s function computing rights of $f_1$ and $f_2$ while retaining $MI_1$'s function computing right of $f_3$. Without this fine-grained control over medical institution function computing rights, EHR data could be misused or its usability compromised.

(2) Forward security. Moreover, in an EHR data sharing system, forward security must be ensured: revoked medical institutions cannot calculate the function values of encrypted data generated before revocation. Otherwise, medical institutions that have been revoked can still calculate the function values of much encrypted data, which would compromise patient privacy.

(3) Collusion attacks. Furthermore, collusion attacks, where unauthorised users may collude to compute the function values of encrypted EHR data, must be considered. If an EHR data sharing system is not collusion-resistant, then revoked medical institutions may combine their keys to illegally calculate the function value of encrypted EHR data. 

%Unfortunately, the existing IPFE schemes cannot address the above issues well. 
%Although schemes \cite{zhu2025revocable,luo2022generic,luo2024fully} introduce revocation, they only support user revocation %and cannot achieve fine-grained control over user function computing rights.
%In the schemes \cite{luo2022generic,luo2024fully}, 
%revoked users can still access ciphertexts that were generated before revocation. 
%Although scheme \cite{zhu2025revocable} considers forward security, it is identity-based, 
%so when revoking a user, only the corresponding ciphertext needs to be randomised to achieve forward security. But in EHR %sharing systems, each encrypted EHR data can be accessed by many users, so achieving forward security by directly randomising %encrypted EHR data is ineffective. 
%Considering the fine-grained management of user function computing rights, it is necessary to maintain part of the user's %function computing rights after ciphertext update, which makes it more difficult to ensure forward security.
%Moreover, in the scheme \cite{zhu2025revocable},  
%both encryption and ciphertext update are performed by a central authority (CA), 
%which imposes a huge computational burden on CA.

In this paper, to meet the
security and function requirements in EHR sharing schemes, we first propose IPFE with fine-grained revocation (IPFE-FR).
Our scheme has the following features: 
(1) the revocation of partial function computing rights of medical institutions is supported, namely, fine-grained revocation;
(2) encrypted EHR data update and functional key update are supported, so our scheme achieves forward security. If a medical institution's function computing right is revoked, it cannot compute the
function of priorly encrypted EHR data.
Moreover, updating encrypted EHR data is performed by a cloud server to alleviate the burden on CA;
(3) a medical
institution's secret keys are tied together to resist collusion attacks; 
(4) our scheme supports indirect revocation, namely, the revocation is conducted by a group manager instead of EHR holders, which is suitable for data sharing schemes. 

\subsection{Contributions}
Our main contributions are as follows: 
(1) We present a new IPFE scheme for flexible EHR sharing, named IPFE-FR, and formalise its formal definition and security model; 
(2)We construct a concrete IPFE-FR scheme and prove that its security is based on the well-established Learning With Errors (LWE) assumption, which can resist quantum computing attacks;
(3)We theoretically analyse the performance of the concrete IPFE-FR scheme. 
Furthermore, we conduct experimental implementations of the concrete IPFE-FR scheme to show its efficiency. 

\subsection{Challenges and Techniques}
$\mathit{Challenges:}$ 
The following technical challenges arose during the construction of the IPFE-FR scheme.
\begin{enumerate}
    \item 
    How to realise the fine-grained revocation without re-initialising the EHR sharing system, 
    namely, there is no need to re-generate the master secret-public key pair for each revocation.
    \item When revocation occurs, how to update ciphertexts to 
    ensure forward security. 
    Particularly, we want to outsource the operation of updating ciphertexts to a semi-trusted server 
    to reduce the burden on the central authority. 
    In this process, data confidentiality must be guaranteed. 
    \item How to update the key of a specified function for unrevoked users.      

\end{enumerate}
$\mathit{Techniques:}$
The following techniques were employed to address the aforementioned challenges.
\begin{enumerate}
    \item To address the first challenge, we introduce system version numbers into our scheme.  
    Specifically, the central authority generates a master secret-public key pair and initialises the system. 
    Subsequently, the group manager generates a group secret-public key pair associated with the system version number. 
    Secret keys issued to authorised users are associated with system version numbers. 
    When encrypting data, the group public key is applied, ensuring that inner product calculations can be performed only 
    when the system version number embedded in the ciphertext matches that in the user's secret key. 
    Upon the occurrence of a revocation, the group manager updates both the system version number and the group secret-public key pair.  
    Notably, the master secret-public key pair maintained by the central authority does not need to be updated. 
    \item
    To resolve the second challenge, 
    we extend the idea of PRE. 
    The central authority generates update keys for a cloud server. 
    Using the update keys, the cloud server can update the ciphertexts to a new version 
    without compromising the confidentiality of encrypted data.
    Specifically, we employ the IPFE proxy re-encryption (IPFEPRE) technique proposed in \cite{luo2024public}.
    \item To overcome the third challenge, 
    broadcast encryption \cite{agrawal2017efficient} is applied to broadcast the updating information of 
    a specified function to all unrevoked users.
\end{enumerate}

\subsection{Related work} 
Functional encryption (FE) enables selective computation over encrypted data, 
addressing the limitations of traditional PKE which only supports all-or-nothing decryption.
In FE schemes, a function is embedded into a private key, 
so that decryption yields only the corresponding function value of the encrypted data, 
without revealing any additional information.
Boneh et al. \cite{boneh2011functional} first proposed the formal definition and security model of FE. 
%Because of its properties, FE has attracted the interest of many researchers. 
At present, existing FE schemes are mainly divided into two types \cite{abdalla2020inner}: 
(1) feasible construction for general functionalities and 
(2) efficient and concrete construction for specific functions, i.e., inner-product, quadratic functions, etc. 

In order to efficiently perform inner-product computations on encrypted data, Abdalla et al. \cite{abdalla2015simple} proposed a selectively secure IPFE scheme. 
In this scheme, the secret key corresponding to vector $\mathbf{x}$ allows the user to obtain only the inner product with the encrypted vector $\mathbf{y}$, that is, $\langle \mathbf{x}, \mathbf{y} \rangle$, without learning anything else.
To enhance the security guarantees of this scheme, Agrawal et al. \cite{agrawal2016fully} introduced an adaptively secure IPFE scheme.
To guarantee that user inputs are valid, Nguyen et al. \cite{nguyen2023verifiable} presented a verifiable IPFE scheme.
To trace malicious users who disclose or trade their secret keys, Do et al. \cite{do2020traceable} introduced a traceable IPFE scheme, where each secret key is bound to a user's identity, enabling black-box traceability.
Qiu et al. \cite{qiu2025privacy} proposed a privacy-preserving and traceable FEIP scheme where privacy and accountability are balanced. 
By integrating IPFE with ABE, Abdalla et al. \cite{abdalla2020inner} designed IPFE schemes that support attribute-based access control using bilinear pairings.
In efforts to protect function privacy, some function-hiding IPFE schemes have been introduced \cite{bishop2015function,abdalla2018multi,shi2023multi}.
All the aforementioned IPFE schemes rely on a trusted central authority for secret key generation.
To enhance privacy and reduce reliance on centralised infrastructure, 
Han et al. \cite{han2023privacy} proposed a decentralised IPFE scheme where function keys are distributed by independently operating authorities without disclosing users' identities.

With the advancement of quantum computing \cite{liu2021efficient,lu2024quantum}, IPFE schemes based on classical hardness assumptions (e.g., factoring problems and discrete logarithm problems) face significant security threats \cite{shor1999polynomial}.
To resist quantum attacks, the first lattice-based IPFE scheme was introduced by Abdalla et al. \cite{abdalla2015simple}. 
In their security model, adversaries are required to submit challenge messages during the initialisation phase.
To achieve adaptive security, Agrawal et al. \cite{agrawal2016fully} introduced the first adaptively secure lattice-based IPFE scheme.
To address challenges in managing public keys and certificates, Abdalla et al. \cite{abdalla2020inner} proposed two lattice-based identity-based IPFE constructions, achieving IND-CPA security in the random oracle and standard models, respectively.
To achieve fine-grained access control, lattice-based IPFE schemes with attribute-based access control were proposed \cite{pal2021attribute,luo2021generic}.
Extending the work of Abdalla et al. \cite{abdalla2020inner}, Cini et al. \cite{cini2024inner} proposed a lattice-based updatable IPFE scheme, where tags are associated with ciphertexts and function keys. 
The function value of the encrypted data can be computed if and only if the tags match, and the scheme further supports the updating of ciphertext tags.
More recently, Lai et al. \cite{lai2025pqhealthcare} introduced an efficient lattice-based IPFE scheme aimed at enabling secure sharing of health data, employing two-stage sampling and rejection sampling techniques to enhance performance and security.

To trace and revoke malicious users who leak their secret keys, Luo et al. \cite{luo2022generic} proposed a generic construction of a trace-and-revoke IPFE scheme that supports public traceability.
However, their initial construction \cite{luo2022generic} can only tolerate bounded collusion, meaning the number of secret keys obtained by a pirate decoder must be limited in advance.
To address this limitation, Luo et al. \cite{luo2024fully} introduced a fully collusion-resistant trace-and-revoke IPFE scheme, building upon \cite{kim2020collusion}.
In the revocable IPFE schemes \cite{luo2022generic,luo2024fully}, direct user revocation is realized by 
specifying a revocation list when encrypting. 
Recently, Zhu et al. \cite{zhu2025revocable} proposed an indirect revocable
hierarchical identity-based IPFE scheme where user revocation and traceability are supported. 

Our scheme is different from the existing revocable IPFE schemes \cite{zhu2025revocable,luo2022generic,luo2024fully}. 
Schemes \cite{zhu2025revocable,luo2022generic,luo2024fully} only support user revocation, 
namely, all function computing rights of a user are revoked, instead of only specific functions.
The IPFE scheme \cite{zhu2025revocable} revokes users by updating the corresponding ciphertexts. 
The scheme \cite{zhu2025revocable} is identity-based and only supports user revocation, but key update is not considered. 
The existing updatable IPFE scheme \cite{luo2024updatable} does not consider revocation, so it cannot support key updates for the specified users.
Moreover, in the schemes \cite{luo2022generic,luo2024fully}, user revocation is direct, namely, the data owner specifies the revocation list when encrypting and 
needs to maintain the revocation lists \cite{attrapadung2009attribute}. This 
is not applicable in a secure data sharing scheme, because the EHR holders cannot 
control the EHR data after outsourcing it to cloud servers \cite{hur2010attribute}. 
The scheme \cite{zhu2025revocable} supports the indirect user revocation, namely, the revocation is conducted by the central authority, but its security relies on the BDHE assumption, which is not post-quantum secure. 
Additionally, the schemes \cite{luo2022generic} and \cite{luo2024fully} are not 
forward secure because revoked users can still compute the function on previously encrypted data. 
Although the scheme \cite{zhu2025revocable} is forward secure, both encryption and ciphertext updating are performed by a CA, which imposes a significant computational burden on the CA. 
 
Table \ref{tab:compare} shows a comparison of features between our scheme and related schemes. 
\begin{table*}[!ht] 
    \caption{Feature Comparison with Related Schemes}
    \label{tab:compare}
    \centering
    \normalsize
    \resizebox{0.9\textwidth}{!}{
    \begin{tabular}{|c|c|c|c|c|c|c|c|c|}
    \hline
    Scheme&Revocation mode&Revocation type&Forward security&Collusion resistance&Key update&Post-quantum security\\
    \hline
    \cite{luo2022generic}&Direct&User revocation&\ding{55}&N/A&\ding{55}&\ding{51}\\
    \hline
    \cite{luo2024fully}&Direct&User revocation&\ding{55}&\ding{51}&\ding{55}&\ding{51}\\
    \hline
    \cite{zhu2025revocable}&Indirect&User revocation&\ding{51}&N/A&\ding{55}&\ding{55}\\
    \hline
    Our scheme&Indirect&Fine-grained revocation&\ding{51}&\ding{51}&\ding{51}&\ding{51}\\
    \hline
    \end{tabular}}
   %\begin{tablenotes}
    %\small \item \ding{51}: supported function; \ding{55}: unsupported function; $-$: not applicable.
%\end{tablenotes}
    \end{table*}

\subsection{Organization}
The remainder of this paper is structured as follows. 
Section \ref{section:second} introduces the preliminaries.
In Section \ref{section:third}, we describe the system architecture 
of the flexible EHR sharing scheme.
Section \ref{section:fourth} presents the formal definition and security model of the IPFE-FR scheme.
The detailed construction of our lattice-based IPFE-FR scheme is provided in Section \ref{section:fifth}.
In Section \ref{section:seventh}, we evaluate the scheme through theoretical comparisons and implementation results.
Section \ref{section:sixth} presents the security proof of the proposed scheme.
Finally, Section \ref{section:eighth} concludes the paper.

\section{Preliminaries}\label{section:second}
Table \ref{tab:table_notation} provides a summary of the notations used throughout this paper.

\begin{table*}[!ht]
    \caption{Notation Summary}\label{tab:table_notation}
    \centering
    \begin{tabular}{l|l||l|l}
        \hline
        Notation & Description & Notation & Description  \\
        \hline
        $\lambda$& Security parameter & $\epsilon(\lambda)$ & A negligible function in $\lambda$\\
        $p$ & A prime number & $\top$ & A symbol denoting transposition\\
        $\mathcal{N}$ & The bound of medical institutions & $\bot$ & A symbol denoting termination of algorithm\\
        $ver$ & The version number of system & $\oplus$ & A symbol denoting XOR\\
        $\mathbf{x}$ & A vector & $A(x) \rightarrow y$ & $y$ is obtained by running algorithm $A(\cdot)$ On input $x$.\\
        $\langle \mathbf{x},\mathbf{y} \rangle$ & The inner-product of $\mathbf{x}$ and $\mathbf{y}$ &$PPT$ & Probable polynomial-time\\
        $\mathsf{bin}(x)$ & The binary version of $x$ & $CA$ & Central authority \\
        $\left\lceil a \right\rceil$ & The smallest integer not less than $a$ & $GM$ & Group manager\\
        $\mathsf{id}$ & The user identity & $CS$ & Cloud server\\
        $\mathcal{R}_{\mathbf{x}}$ & The set of revoked users for $\mathbf{x}$ & $EH$ & EHR holder \\
        $\mathsf{pd}$ & A public directory & $MI$ & Medical institution \\
        \hline
    \end{tabular}
    \end{table*}

\subsection{Preliminaries}

$\mathbf{Lattice.}$ 
The lattice $\Lambda$ generated by $n$-linearly independent vectors 
$\mathbf{b}_1,\ldots ,\mathbf{b}_n \in \mathbb{R}^n$ is defined as 
$\Lambda= \{ \sum_{i = 1}^{n} x_i\mathbf{b}_i: x_i \in \mathbb{Z} \}$. 
Moreover, $\Lambda^{\bot} = \{\mathbf{y} \in span(\Lambda): \forall \mathbf{x}\in \Lambda, \left\langle\mathbf{x}, \mathbf{y}\right\rangle \in \mathbb{Z}\}$. 
Given $\mathbf{A} \in \mathbb{Z}_q^{n\times m}$,    
$\Lambda_q^{\bot}(\mathbf{A}) = \{\mathbf{u}\in \mathbb{Z}^m:\mathbf{Au} = \mathbf{0}\ (\mathrm{mod\ } q)\}$ 
and 
$\Lambda_q^{\mathbf{z}}(\mathbf{A}) = \{\mathbf{u}\in \mathbb{Z}^m:\mathbf{Au} = \mathbf{z}\ (\mathrm{mod\ } q)\}$, 
where $q$ is a prime.

$\mathbf{Matrix\ Norms.}$ Let $\left\lVert \mathbf{u} \right\rVert$ denote the $l_2$ norm of a vector $\mathbf{u}$.
Given $\mathbf{A} \in \mathbb{Z}^{k \times m}$, 
$\widetilde{\mathbf{A}}$ represents the Gram-Schmidt orthogonalization of its columns.
Moreover, 
$\left\lVert \mathbf{A} \right\rVert$, $\left\lVert \mathbf{A} \right\rVert_{2}$ 
and $s_1(\mathbf{A})$ denotes the longest column, the operator norm and the spectral norm of $\mathbf{A}$, 
respectively. 

$\mathbf{Gaussian\ Distribution.}$
For any lattice $\Lambda \in \mathbb{Z}^n$, 
the discrete Gaussian distribution over $\Lambda$ 
is defined as 
$\mathcal{D}_{\Lambda,\sigma,\mathbf{c}}(\mathbf{y}) = \rho_{\sigma,\mathbf{c}}(\mathbf{y})/\rho_{\sigma,\mathbf{c}}(\Lambda)$, where $\mathbf{y} \in \Lambda$, 
$\mathbf{c}\in \mathbb{R}^n$, $\sigma>0$, 
$\rho_{\sigma,\mathbf{c}}(\mathbf{y}) = \mathrm{exp}(-\pi \left\lVert \mathbf{y}-\mathbf{c} \right\rVert^2/\sigma^2 )$, 
and $\rho_{\sigma,\mathbf{c}}(\Lambda) = \sum_{\mathbf{x}\in \Lambda}\rho_{\sigma,\mathbf{c}}(\mathbf{x})$.

\begin{lemma}[Preimage Sampleable Functions \cite{gentry2008trapdoors}] \label{TrapGen}
    Given prime $q=poly(n)$, $m \geq O (n \mathrm{log\ }q)$, and $s \geq \lVert \widetilde{\mathbf{T}_\mathbf{A}} \rVert \cdot \omega(\sqrt{\mathrm{log\ } m} )$, 
    there are two PPT algorithms 
    $\mathsf{TrapGen}$ and $\mathsf{SamplePre}$. 
    
$\mathsf{TrapGen}(1^n,1^m,q) \to (\mathbf{A},\mathbf{T}_\mathbf{A})$, 
     where $\mathbf{A} \in \mathbb{Z}^{n\times m}_q$ is statistically close to uniform, 
     and $\mathbf{T}_\mathbf{A} \subset \Lambda ^{\bot}_q(\mathbf{A})$ is a basis satisfying $\lVert \widetilde{\mathbf{T}_\mathbf{A}} \rVert \leq  O(\sqrt{n \mathrm{log\ }q} )$.
     %The matrix $\mathbf{A}$ and the modulus $q$ are public, while $\mathbf{T}_\mathbf{A}$ is the trapdoor. 
     
$\mathsf{SamplePre}(\mathbf{A},\mathbf{T}_\mathbf{A},\mathbf{U},s) \to \mathbf{Z}$, 
    where $\mathbf{Z}\in\mathbb{Z}^{m\times l}$ and $\mathbf{U}\in \mathbb{Z}_q^{n \times l}$ such that $\mathbf{U}=\mathbf{A}\mathbf{Z}$.

    Furthermore, the distributions $D_1$ and $D_2$ are statistically indistinguishable:\\
\begin{equation*}
    \begin{aligned}
    &D_1=(\mathbf{A},\mathbf{Z'},\mathbf{A}\mathbf{Z}'), \mathrm{where\ } \mathbf{A} \gets \mathbb{Z}^{n\times m}_q, 
    \mathbf{Z}' \gets \mathcal{D}_{\mathbb{Z}^{m \times l},s};\\
    &D_2=(\mathbf{A},\mathbf{Z},\mathbf{U}), \mathrm{where\ } (\mathbf{A},\mathbf{T}_\mathbf{A}) \gets \mathsf{TrapGen}(1^n,1^m,q), \\
    & \hspace{1.8cm} \mathbf{Z} \gets \mathsf{SamplePre}(\mathbf{A},\mathbf{T}_\mathbf{A},\mathbf{U},s), \mathbf{U} \gets \mathbb{Z}_q^{n \times l}.
    \end{aligned}
\end{equation*}  
\end{lemma}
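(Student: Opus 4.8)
The plan is to establish the three assertions in turn, relying on the gadget-trapdoor construction of Micciancio--Peikert together with the Gaussian-sampling and smoothing tools of Gentry--Peikert--Vaikuntanathan and Micciancio--Regev.

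\textbf{Step 1 (the algorithm $\mathsf{TrapGen}$).} I would fix the gadget matrix $\mathbf{G} \in \mathbb{Z}_q^{n \times w}$ with $w = n\lceil \log q \rceil$, which has a publicly known short basis $\mathbf{T}_\mathbf{G}$ of $\Lambda_q^{\bot}(\mathbf{G})$ satisfying $\lVert \widetilde{\mathbf{T}_\mathbf{G}} \rVert = O(1)$. Sampling $\bar{\mathbf{A}} \gets \mathbb{Z}_q^{n \times \bar m}$ uniformly and a Gaussian matrix $\mathbf{R} \gets \mathcal{D}_{\mathbb{Z},\sigma}^{\bar m \times w}$, I would set $\mathbf{A} = [\,\bar{\mathbf{A}} \mid \mathbf{G} - \bar{\mathbf{A}}\mathbf{R}\,]$ with $m = \bar m + w$. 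The leftover hash lemma (regularity of random $q$-ary matrices) shows that for $\bar m \ge O(n\log q)$ the pair $(\bar{\mathbf{A}}, \bar{\mathbf{A}}\mathbf{R})$ is within negligible statistical distance of uniform, so $\mathbf{A}$ is statistically close to uniform over $\mathbb{Z}_q^{n\times m}$. A short basis $\mathbf{T}_\mathbf{A}$ of $\Lambda_q^{\bot}(\mathbf{A})$ is then obtained by transporting $\mathbf{T}_\mathbf{G}$ through the unimodular map determined by $\mathbf{R}$ and reducing; using $s_1(\mathbf{R}) = O(\sqrt{\bar m})$ with overwhelming probability, one gets $\lVert \widetilde{\mathbf{T}_\mathbf{A}} \rVert \le O(\sqrt{n\log q})$ as claimed.

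\textbf{Step 2 (the algorithm $\mathsf{SamplePre}$).} Given $\mathbf{U} \in \mathbb{Z}_q^{n\times l}$, I would first compute, by linear algebra over $\mathbb{Z}_q$ followed by lifting to $\mathbb{Z}$, an arbitrary integer matrix $\mathbf{Z}_0$ with $\mathbf{A}\mathbf{Z}_0 = \mathbf{U} \pmod q$; applied columnwise, the set of all integer preimages is the coset $\mathbf{Z}_0 + \Lambda_q^{\bot}(\mathbf{A})$. Then I would run Klein's randomized nearest-plane procedure with the basis $\mathbf{T}_\mathbf{A}$ and width $s \ge \lVert \widetilde{\mathbf{T}_\mathbf{A}} \rVert \cdot \omega(\sqrt{\log m})$ to sample from $\mathcal{D}_{\Lambda_q^{\bot}(\mathbf{A}) + \mathbf{Z}_0,\, s}$, and output $\mathbf{Z}$ equal to that sample. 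Since $s$ exceeds the smoothing parameter $\eta_\epsilon(\Lambda_q^{\bot}(\mathbf{A}))$, the GPV analysis guarantees that the output is within negligible statistical distance of $\mathcal{D}_{\mathbb{Z}^{m\times l},s}$ conditioned on mapping to $\mathbf{U}$, and by construction $\mathbf{A}\mathbf{Z} = \mathbf{U}$.

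\textbf{Step 3 (indistinguishability of $D_1$ and $D_2$).} For uniform $\mathbf{A}$, the Micciancio--Regev bound on the smoothing parameter of random $q$-ary lattices gives $\eta_\epsilon(\Lambda_q^{\bot}(\mathbf{A})) \le \omega(\sqrt{\log m})$ with overwhelming probability, so the chosen $s$ lies above it. Consequently, the Gaussian regularity lemma yields that if $\mathbf{Z}' \gets \mathcal{D}_{\mathbb{Z}^{m\times l},s}$ then $\mathbf{A}\mathbf{Z}'$ is within $2\epsilon$ of uniform over $\mathbb{Z}_q^{n\times l}$, and conditioned on the event $\mathbf{A}\mathbf{Z}' = \mathbf{U}$ the matrix $\mathbf{Z}'$ is distributed exactly as $\mathcal{D}_{\Lambda_q^{\bot}(\mathbf{A}) + \mathbf{Z}_0,\, s}$. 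Hence $D_1$ is statistically close to the distribution that samples $\mathbf{A}$ uniform, $\mathbf{U}$ uniform, and $\mathbf{Z} \gets \mathcal{D}_{\Lambda_q^{\bot}(\mathbf{A}) + \mathbf{Z}_0,\, s}$; replacing the uniform $\mathbf{A}$ by the $\mathsf{TrapGen}$ output (Step 1) and the conditional Gaussian by the $\mathsf{SamplePre}$ output (Step 2) each costs only a negligible statistical distance, and the resulting distribution is exactly $D_2$. The triangle inequality over the $O(1)$ negligible terms completes the argument. The delicate point is the smoothing argument gluing the steps together: a lower bound on $s$ phrased in terms of $\lVert \widetilde{\mathbf{T}_\mathbf{A}} \rVert$ (which controls correctness of the sampler in Step 2) must simultaneously place $s$ above $\eta_\epsilon(\Lambda_q^{\bot}(\mathbf{A}))$ for a \emph{uniform} $\mathbf{A}$ (needed both for near-uniformity of $\mathbf{A}\mathbf{Z}'$ and for exactness of the conditional discrete Gaussian in Step 3), and the matrix versions of these one-column statements must compose without amplifying the statistical distance beyond negligible; this needs the quantitative $\eta_\epsilon$ estimates of Micciancio--Regev and careful tracking of $q$, $m$, $s$, but introduces no genuinely new idea.
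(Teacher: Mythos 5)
The paper does not prove this lemma at all; it is imported verbatim as a black-box preliminary from the cited source \cite{gentry2008trapdoors}, so there is no in-paper argument to compare against. Your reconstruction is correct and follows the standard proof structure from that literature: coset decomposition plus Klein/GPV sampling for $\mathsf{SamplePre}$, and the smoothing-parameter/regularity argument for the statistical equivalence of $D_1$ and $D_2$, with the observation that $s \geq \lVert \widetilde{\mathbf{T}_\mathbf{A}} \rVert \cdot \omega(\sqrt{\log m}) \geq \eta_\epsilon(\Lambda_q^{\bot}(\mathbf{A}))$ correctly gluing the two halves together. The only place you depart from the cited source is Step 1, where you instantiate $\mathsf{TrapGen}$ with the Micciancio--Peikert gadget construction $\mathbf{A} = [\bar{\mathbf{A}} \mid \mathbf{G} - \bar{\mathbf{A}}\mathbf{R}]$ rather than the Ajtai/Alwen--Peikert-style construction used there; this buys a simpler and tighter trapdoor but requires the (routine) extra step of converting the gadget trapdoor $\mathbf{R}$ into an actual short basis $\mathbf{T}_\mathbf{A}$ so that the Gram--Schmidt bound $\lVert \widetilde{\mathbf{T}_\mathbf{A}} \rVert \leq O(\sqrt{n \log q})$ is met in the form the lemma states. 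Either instantiation yields the lemma as stated, and the per-column statistical distances accumulate only over $l = \mathrm{poly}(n)$ columns, so the matrix version follows as you note.
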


\begin{lemma}[Bounding Gaussian Noise \cite{micciancio2007worst}] \label{Bound Gaussian}
    For any n-dimension lattice $\varLambda$, $\mathbf{c}\in span(\Lambda)$, real $\epsilon \in (0,1)$ 
    and $s \geq \eta(\Lambda)$: 
    $$
        \underset{\mathbf{x}{\gets}\mathcal{D}_{\Lambda,s,\mathbf{c}}}{\mathrm{Pr}}
        [\lVert \mathbf{x}-\mathbf{c} \rVert > s\sqrt{n}] \leq \frac{1+\epsilon}{1-\epsilon} \frac{1}{2^n}.
    $$
    Furthermore, for any $\omega(\sqrt{\mathrm{log\ }n})$, there exists a negligible function $\epsilon(n)$ satisfying 
    $\eta_{\epsilon}(\mathbb{Z}) \leq \omega(\sqrt{\mathrm{log\ }n})$. Specifically, 
    when sampling integers from the discrete Gaussian distribution $\mathcal{D}_{\mathbb{Z},s,c}$, 
    we have that: 
    $$
        \underset{x{\gets}\mathcal{D}_{\mathbb{Z},s,c}}{\mathrm{Pr}}[\lvert x-c \rvert > s\cdot t]
        \leq \mathsf{negl}(n).
    $$
    where $\epsilon \in(0,\frac{1}{2})$, $s\geq \eta_{\epsilon}(\mathbb{Z})$ 
    and $t\geq \omega(\sqrt{\mathrm{log\ }n})$. 
\end{lemma}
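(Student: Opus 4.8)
\emph{Proof sketch (plan).} The plan is to rewrite the stated probability as a ratio of Gaussian masses and then bound the two pieces separately. First I would apply the translation $\mathbf{y} = \mathbf{x} - \mathbf{c}$, which sends the distribution $\mathcal{D}_{\Lambda,s,\mathbf{c}}$ to the origin-centred discrete Gaussian over the shifted lattice $\Lambda - \mathbf{c}$ and turns the event $\lVert\mathbf{x}-\mathbf{c}\rVert > s\sqrt{n}$ into $\lVert\mathbf{y}\rVert > s\sqrt{n}$. Writing $\rho_{s}$ for the origin-centred Gaussian weight, this gives
\begin{equation*}
\Pr_{\mathbf{x}\gets\mathcal{D}_{\Lambda,s,\mathbf{c}}}\bigl[\lVert\mathbf{x}-\mathbf{c}\rVert > s\sqrt{n}\bigr] \;=\; \frac{\rho_{s}\bigl(\{\mathbf{y}\in\Lambda-\mathbf{c}:\lVert\mathbf{y}\rVert > s\sqrt{n}\}\bigr)}{\rho_{s}(\Lambda-\mathbf{c})}.
\end{equation*}
For the denominator, since $s \geq \eta_{\epsilon}(\Lambda)$ the standard smoothing-parameter inequality yields $\rho_{s}(\Lambda-\mathbf{c}) \geq \tfrac{1-\epsilon}{1+\epsilon}\,\rho_{s}(\Lambda)$ for every shift $\mathbf{c}\in\mathrm{span}(\Lambda)$. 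For the numerator I would invoke Banaszczyk's tail bound on the Gaussian measure of a (shifted) lattice lying outside a ball, which after rescaling by $s$ states that the mass of $\Lambda-\mathbf{c}$ outside the radius-$s\sqrt{n}$ ball is at most $2^{-n}\,\rho_{s}(\Lambda)$. Dividing the two estimates reproduces exactly $\tfrac{1+\epsilon}{1-\epsilon}\cdot 2^{-n}$.

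For the second part, I would first recall the explicit bound $\eta_{\epsilon}(\mathbb{Z}) \leq \sqrt{\ln(2(1+1/\epsilon))/\pi}$; choosing $\epsilon = \epsilon(n)$ to be any negligible function keeps the right-hand side of order $\omega(\sqrt{\log n})$, which establishes the existence claim. The tail bound on $\mathcal{D}_{\mathbb{Z},s,c}$ then follows by re-running the ratio argument in dimension one, with the ball replaced by the interval $[-ts,ts]$: Banaszczyk's inequality applied with parameter $t$ bounds the numerator by a term of order $(t\,e^{-\pi t^{2}})\,\rho_{s}(\mathbb{Z})$, while the denominator is again within a constant factor of $\rho_{s}(\mathbb{Z})$ because $s \geq \eta_{\epsilon}(\mathbb{Z})$; since $t \geq \omega(\sqrt{\log n})$ forces $e^{-\pi t^{2}} = n^{-\omega(1)}$, the ratio is negligible in $n$.

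The main obstacle is securing the sharp exponential factor $2^{-n}$ in the numerator: a crude estimate---comparing the discrete sum to the continuous Gaussian integral over the complement of the ball, or summing over lattice shells---loses a polynomial (or larger) factor in $n$ and fails to reproduce the stated bound. Obtaining the clean constant requires the Banaszczyk argument, i.e. applying Poisson summation to an appropriately reweighted Gaussian, together with the (by now standard) observation that translating the lattice by $\mathbf{c}$ does not degrade the exponential decay rate at radius $s\sqrt{n}$. The remaining ingredients---the change of variables and the smoothing-parameter estimate---are routine.
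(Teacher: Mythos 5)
This lemma is imported verbatim from the cited source \cite{micciancio2007worst} and the paper gives no proof of its own, so there is nothing internal to compare against; your sketch correctly reconstructs the standard Micciancio--Regev argument (write the probability as a ratio of Gaussian masses, lower-bound the denominator by $\tfrac{1-\epsilon}{1+\epsilon}\rho_s(\Lambda)$ via the smoothing parameter, upper-bound the numerator by $2^{-n}\rho_s(\Lambda)$ via Banaszczyk's tail bound), and the one-dimensional tail bound follows the same pattern. The only imprecision is in the existence claim for $\eta_\epsilon(\mathbb{Z})$: it is not true that \emph{any} negligible $\epsilon(n)$ works (e.g.\ $\epsilon = 2^{-n}$ gives $\eta_\epsilon(\mathbb{Z}) = \Theta(\sqrt{n})$, which exceeds a slowly growing $\omega(\sqrt{\log n})$ function); rather, given the target function $f(n) = \omega(\sqrt{\log n})$ one must set $\epsilon(n) \approx 2e^{-\pi f(n)^2}$, and it is precisely the hypothesis $f = \omega(\sqrt{\log n})$ that makes this choice negligible.
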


\begin{lemma}[Bounding Spectral Norm of a Gaussian Matrix \cite{ducas2014improved}] \label{BoundGauss}
    Let $\mathbf{Z} \in \mathbb{R}^{n\times m}$ be a sub-Gaussian random matrix with parameter $\rho$. 
    Then, there exists a universal constant $C\approx \frac{1}{\sqrt{2\pi}}$ such that for any $t \geq 0$, we have 
    $s_1(\mathbf{Z}) \leq C \cdot \rho (\sqrt{n}+\sqrt{m}+t)$ except with probability at most $\frac{2}{e^{\pi t^2}}$. 
\end{lemma}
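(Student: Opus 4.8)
The plan is to reduce the operator-norm tail bound for $s_1(\mathbf{Z})$ to a one-dimensional sub-Gaussian tail estimate combined with a covering (net) argument on the unit spheres. Write $s_1(\mathbf{Z}) = \sup_{\mathbf{u}\in S^{n-1},\,\mathbf{v}\in S^{m-1}}\mathbf{u}^{\top}\mathbf{Z}\mathbf{v}$, where $S^{k-1}$ is the unit sphere in $\mathbb{R}^{k}$. By the definition of a sub-Gaussian matrix with parameter $\rho$, for every fixed pair $(\mathbf{u},\mathbf{v})$ the scalar $\mathbf{u}^{\top}\mathbf{Z}\mathbf{v}$ is sub-Gaussian with parameter $\rho$, so $\Pr[\,\mathbf{u}^{\top}\mathbf{Z}\mathbf{v}\ge\tau\,]\le\exp(-\pi\tau^{2}/\rho^{2})$ for all $\tau\ge 0$, and twice that for the absolute value.

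Next I would discretize. Fix $\epsilon\in(0,\tfrac12)$ and choose $\epsilon$-nets $\mathcal{N}_{n}\subseteq S^{n-1}$ and $\mathcal{N}_{m}\subseteq S^{m-1}$ with $|\mathcal{N}_{n}|\le(1+2/\epsilon)^{n}$ and $|\mathcal{N}_{m}|\le(1+2/\epsilon)^{m}$, the standard volumetric covering-number bound. A routine approximation argument gives $s_1(\mathbf{Z})\le(1-2\epsilon)^{-1}\max_{\mathbf{u}\in\mathcal{N}_{n},\,\mathbf{v}\in\mathcal{N}_{m}}\mathbf{u}^{\top}\mathbf{Z}\mathbf{v}$. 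A union bound over the at most $(1+2/\epsilon)^{n+m}$ pairs yields $\Pr[\,\max_{\mathbf{u}\in\mathcal{N}_n,\mathbf v\in\mathcal N_m}\mathbf{u}^{\top}\mathbf{Z}\mathbf{v}\ge\tau\,]\le 2(1+2/\epsilon)^{n+m}\exp(-\pi\tau^{2}/\rho^{2})$. Taking $\tau=\rho\sqrt{(n+m)\ln(1+2/\epsilon)/\pi+t^{2}}$ makes the right-hand side equal $2\exp(-\pi t^{2})$, and using $\sqrt{a+b+c}\le\sqrt a+\sqrt b+\sqrt c$ together with $\sqrt{n+m}\le\sqrt n+\sqrt m$ gives $\tau\le\rho\big(\sqrt{\ln(1+2/\epsilon)/\pi}\,(\sqrt n+\sqrt m)+t\big)$. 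Absorbing the factor $(1-2\epsilon)^{-1}$ delivers the stated inequality with an absolute constant $C$ depending only on the chosen $\epsilon$.

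The delicate point is the sharp value $C\approx 1/\sqrt{2\pi}$: the elementary net argument only yields $C$ of order $1$, since the logarithmic covering-number factor cannot be removed by tuning $\epsilon$. To recover $1/\sqrt{2\pi}$ — precisely the Davidson--Szarek constant, i.e. the value attained by a continuous Gaussian matrix whose entries have variance $\rho^{2}/(2\pi)$ — I would proceed in two steps: (i) bound the expectation $\mathbb{E}[s_1(\mathbf{Z})]\le\tfrac{\rho}{\sqrt{2\pi}}(\sqrt n+\sqrt m)$ via a Gaussian comparison inequality in the genuinely Gaussian case (Gordon's min--max theorem), or a Dudley-type chaining bound in general; and (ii) add the deviation term using the fact that $\mathbf{Z}\mapsto s_1(\mathbf{Z})$ is $1$-Lipschitz in the Frobenius norm, so $s_1(\mathbf{Z})-\mathbb{E}[s_1(\mathbf{Z})]$ is sub-Gaussian with parameter $\rho$ and exceeds $\rho t$ with probability at most $\exp(-\pi t^{2})$, the factor $2$ in $2/e^{\pi t^{2}}$ accounting for the usual two-sided slack.

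I expect the main obstacle to be step (i): getting the constant $1/\sqrt{2\pi}$ rather than merely ``some universal constant'' forces one out of the self-contained net argument and into a Gaussian-comparison or chaining estimate, where keeping the constant tight requires care and, in the comparison route, an extra reduction from the sub-Gaussian matrix to a dominating Gaussian one. Since the statement is quoted verbatim from \cite{ducas2014improved}, in the paper itself I would simply cite it; the sketch above is the proof I would reconstruct if required.
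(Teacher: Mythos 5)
The paper offers no proof of this lemma at all --- it is imported verbatim from \cite{ducas2014improved} (ultimately from Vershynin's non-asymptotic random matrix theory) --- so your closing decision to simply cite it is exactly what the paper does. Your reconstruction is the standard $\epsilon$-net-plus-union-bound argument and is sound for establishing the bound with \emph{some} universal constant, and your diagnosis that the sharp $C\approx 1/\sqrt{2\pi}$ requires a Gaussian comparison (Davidson--Szarek/Gordon) plus concentration is also right; the one caveat is that Lipschitz concentration of $s_1$ in step (ii) is a genuinely Gaussian (or bounded/convex) phenomenon rather than a general sub-Gaussian one, which is precisely why the constant in the cited statement is only an ``$\approx$'' and is rigorously justified only in the Gaussian case.
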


$\mathbf{Vector\ Decomposition.}$ For integers $n, m, p \geq 1$, any vector can be decomposed into its bitwise representation and formally expanded as a linear combination of powers of two. 
Specifically, the details are as follows: 

$\mathsf{BitD}_{p}(\mathbf{x})$: 
    On input $\mathbf{x}\in \mathbb{Z}_{p}^{n}$, 
    output $(\mathbf{x}_{0},\ldots,\mathbf{x}_{\left\lceil \mathrm{log\ } p \right\rceil-1}) \in 
    \{ 0,1\}^{n\left\lceil \mathrm{log\ } p \right\rceil}$ 
    where $\mathbf{x}_{i}\in \{0,1 \}^n$ and 
    $\mathbf{x}= \sum_{i=0}^{\left\lceil \mathrm{log\ } p \right\rceil-1} 2^i \cdot \mathbf{x}_{i}$ (mod $p$).

$\mathsf{PowerT}_{p}(\mathbf{y})$: 
     On input $\mathbf{y}\in \mathbb{Z}^{n}_p$, 
    output $(\mathbf{y},2\cdot \mathbf{y},\ldots, 2^{\left\lceil \mathrm{log\ }p -1\right\rceil }\cdot \mathbf{y}) 
    \in \mathbb{Z}^{n \left\lceil \mathrm{log\ } p\right\rceil}_p$. 

\begin{lemma} \label{Decomposition}
    For any $\mathbf{x},\mathbf{y} \in \mathbb{Z}^{n}_p$, it holds that $\langle \mathbf{x},\mathbf{y} \rangle
    =\langle \mathsf{BitD}_{p}(\mathbf{x}), \mathsf{PowerT}_{p}(\mathbf{y})\rangle$ (mod $p$). 
\end{lemma}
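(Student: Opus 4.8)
The plan is to prove the identity by a direct computation that relies only on the defining property of $\mathsf{BitD}_{p}$ together with the bilinearity of the inner product. Write $\ell = \left\lceil \mathrm{log\ } p\right\rceil$ and let $\mathsf{BitD}_{p}(\mathbf{x}) = (\mathbf{x}_{0},\ldots,\mathbf{x}_{\ell-1})$ with each $\mathbf{x}_{i}\in\{0,1\}^{n}$; by definition of $\mathsf{BitD}_{p}$ we have $\mathbf{x}= \sum_{i=0}^{\ell-1} 2^{i}\mathbf{x}_{i}\ (\mathrm{mod\ } p)$. Likewise, by definition of $\mathsf{PowerT}_{p}$, the $i$-th block of $\mathsf{PowerT}_{p}(\mathbf{y})$ is exactly $2^{i}\mathbf{y}$ for $i = 0,\ldots,\ell-1$.

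First I would expand the right-hand side block by block. Since both $\mathsf{BitD}_{p}(\mathbf{x})$ and $\mathsf{PowerT}_{p}(\mathbf{y})$ are length-$n\ell$ vectors partitioned into $\ell$ consecutive blocks of length $n$, their inner product splits as
\begin{equation*}
\langle \mathsf{BitD}_{p}(\mathbf{x}), \mathsf{PowerT}_{p}(\mathbf{y})\rangle
= \sum_{i=0}^{\ell-1} \langle \mathbf{x}_{i},\, 2^{i}\mathbf{y}\rangle
= \sum_{i=0}^{\ell-1} 2^{i}\,\langle \mathbf{x}_{i},\mathbf{y}\rangle .
\end{equation*}
Next I would pull the scalars $2^{i}$ inside using bilinearity once more and collect the terms:
\begin{equation*}
\sum_{i=0}^{\ell-1} 2^{i}\,\langle \mathbf{x}_{i},\mathbf{y}\rangle
= \Big\langle \textstyle\sum_{i=0}^{\ell-1} 2^{i}\mathbf{x}_{i},\ \mathbf{y}\Big\rangle .
\end{equation*}
Finally, reducing modulo $p$ and substituting the defining relation $\sum_{i=0}^{\ell-1} 2^{i}\mathbf{x}_{i} = \mathbf{x}\ (\mathrm{mod\ } p)$ yields $\langle \mathbf{x},\mathbf{y}\rangle\ (\mathrm{mod\ } p)$, which is the claimed equality.

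There is no substantive obstacle here; the only point that deserves a line of care is the reduction modulo $p$. The inner product on the right-hand side is a priori an integer (the blocks $\mathbf{x}_i$ are $0/1$ vectors), so the congruence $\sum_{i} 2^{i}\mathbf{x}_{i}\equiv \mathbf{x}\pmod p$ must be applied entrywise and only \emph{after} the rewriting above, which is why the statement is an equality in $\mathbb{Z}_p$ rather than over $\mathbb{Z}$. With that observation the chain of equalities is immediate, and the lemma follows.
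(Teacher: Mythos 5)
Your computation is correct, and it is the standard argument for this bit-decomposition (gadget) identity: split the length-$n\lceil\log p\rceil$ inner product into $\lceil\log p\rceil$ blocks, use bilinearity to factor out the powers of two, and apply the defining congruence $\sum_i 2^i\mathbf{x}_i \equiv \mathbf{x} \pmod p$ entrywise at the end. The paper states Lemma~\ref{Decomposition} without proof, so there is no authorial argument to compare against; your write-up supplies exactly the routine verification one would expect, including the correct observation that the reduction modulo $p$ is applied only after the integer-level rearrangement.
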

Lemma \ref{Decomposition} also extends to matrices, which can be viewed as collections of column or row vectors.  

$\mathbf{Learning\ with\ Errors\ (LWE)}$ \cite{regev2009lattices}. For integers $n,m$, a prime $q$, 
and $\mathcal{D}_{\mathbb{Z}^m,\alpha q}$ with $\alpha\in \left( 0,1 \right)$, 
the $\mathrm{LWE}_{n,q,m,\alpha}$ problem is to distinguish the distribution 
$(\mathbf{A},\mathbf{A}^{\top}\cdot \mathbf{s}+\mathbf{e})$ from $(\mathbf{A},\mathbf{u})$ 
where $\mathbf{A} \gets \mathbb{Z}_{q}^{n \times m}$, $\mathbf{s}\gets \mathbb{Z}_{q}^{n}$, 
$\mathbf{e}\gets \mathcal{D}_{\mathbb{Z}^{m},\alpha q}$ and $\mathbf{u} \gets \mathbb{Z}_{q}^m$. 
LWE assumption holds if the advantage of all PPT $\mathcal{A}$ in solving the above problem 
is negligible, namely
\begin{equation*}
Adv^{LWE}_{\mathcal{A}}(\lambda) = 
\left\lvert 
\begin{aligned}
&\mathrm{Pr}\left[ \mathcal{A}(\mathbf{A},\mathbf{A}^{\top}\cdot \mathbf{s}+\mathbf{e})=1\right] \\
&\ \ \ \ -\mathrm{Pr}\left[ \mathcal{A}(\mathbf{A},\mathbf{u})=1\right]
\end{aligned}
\right\rvert
\leq \epsilon(\lambda).
\end{equation*} 

\subsection{ALS-IPFE} \label{ALS} 
Our IPFE-FR scheme builds upon the ALS-IPFE scheme introduced in \cite{agrawal2016fully}. 
To optimize the parameters of the ALS-IPFE scheme and simplify its proof, 
Wang et al. \cite{wang2019fe} applied a re-randomization technique \cite{katsumata2016partitioning} 
to modify the distribution of the master secret key. 
For the sake of completeness, we briefly revisit the modified ALS-IPFE scheme \cite{abdalla2020inner} in the following.
\begin{itemize}
    \item[$\bullet$] $\mathbf{Setup}(1^l,1^\lambda,p):$ On input $1^l$, $1^\lambda$, 
    set integers $n$, $m$, $X,Y$, $K=lXY$, prime $p\geq 2$, reals $\alpha\in(0,1)$, $\rho>0$, 
    spaces $\mathcal{X} = \{0,\ldots X-1\}^l$ and 
    $\mathcal{Y} = \{0,\ldots Y-1\}^l$. 
    Sample  
    $\mathbf{A} \gets \mathbb{Z}_p^{n\times m}$, $\mathbf{Z}\gets \mathcal{D}_{\mathbb{Z}^{m\times l},\rho}$, and 
    compute $\mathbf{U}= \mathbf{A}\mathbf{Z} \in \mathbb{Z}_p^{n\times l}$. 
    Let the master secret key $\mathsf{msk}=\mathbf{Z}$ and the master public key 
    $\mathsf{mpk}=(\mathbf{A},\mathbf{U})$. 

    \item[$\bullet$] $\mathbf{KeyGen}(\mathsf{msk},\mathbf{x}):$ On input  $\mathsf{msk}$ and a vector $\mathbf{x} \in \mathcal{X}$, 
    compute  
    %Let the request record $\mathsf{rd}$ contains tuples $(\mathbf{x}_{i},\mathbf{x}_{i},\mathbf{z}_{i})$, 
    %where $\mathbf{x}_{i}$ is key queries that have been made so far and $(\mathbf{x}_{i},\mathbf{z}_{i})$ 
    %are corresponding function keys. 
    %If $\mathbf{x}$ is linearly independent from $\mathbf{x}_{i}$ modulo $p$, 
    %set $\mathbf{x}=\mathbf{x} \in \mathbb{Z}^l$, $\mathbf{z}_{\mathbf{x}} = \mathbf{Z} \cdot \mathbf{x} \in \mathbb{Z}^m$ 
    %and append $(\mathbf{x},\mathbf{x},\mathbf{z}_{\mathbf{x}})$ to $\mathsf{rd}$.
    %If $\mathbf{x}=\sum_{i} k_{i} \mathbf{x}_{i}$ mod $p$ for some $k_i \in [0,p)$, 
    %set $\mathbf{x} = \sum_{i} k_{i}\mathbf{x}_{i} \in \mathbb{Z}^l$ and 
    %$\mathbf{z}_{\mathbf{x}} = \sum_{i} k_{i} \mathbf{z}_{i} \in \mathbb{Z}^m$. 
    $\mathbf{z}_{\mathbf{x}} = \mathbf{Z} \cdot \mathbf{x} \in \mathbb{Z}^m$, and
    return $\mathsf{sk}_{\mathbf{x}}=(\mathbf{x},\mathbf{z}_{\mathbf{x}})$. 
    
    \item[$\bullet$] $\mathbf{Enc}(\mathsf{mpk},\mathbf{y}):$ 
    On input $\mathsf{mpk}$ and a vector $\mathbf{y}\in \mathcal{Y}$, 
    sample $\mathbf{s}\gets \mathbb{Z}_p^n$, $\mathbf{e}_1 \gets \mathcal{D}_{\mathbb{Z}^m,\sigma}$,  
    $\mathbf{e}_2 \gets \mathcal{D}_{\mathbb{Z}^l,\sigma}$, compute 
    $\mathbf{c}_1 = \mathbf{A}^{\top}\cdot \mathbf{s}+\mathbf{e}_1 \in \mathbb{Z}_p^m$, 
    $\mathbf{c}_2 = \mathbf{U}^{\top} \cdot \mathbf{s} +\mathbf{e}_2+ \left\lfloor \frac{p}{K} \right\rfloor \cdot \mathbf{y} \in \mathbb{Z}_p^l$, and return the ciphertext $\mathbf{ct}=(\mathbf{c}_1,\mathbf{c}_2)$. 

    \item[$\bullet$] $\mathbf{Dec}(\mathsf{mpk},\mathsf{sk}_{\mathbf{x}},\mathbf{ct}):$ 
    On input the $\mathsf{mpk}$, $\mathsf{sk}_{\mathbf{x}}$ and $\mathbf{ct}$, 
    compute $\mu' = {\mathbf{x}}^{\top} \cdot \mathbf{c}_{2} -  
    \mathbf{z}^{\top}_{\mathbf{x}} \cdot \mathbf{c}_{1} $ mod $p$, and 
    output $\mu \in\{0,\ldots,K-1 \}$, where $\mu$ is the value 
    that minimizes $\lvert \left\lfloor \frac{p}{K} \right\rfloor  \cdot \mu - \mu' \rvert$.
\end{itemize}
Following the \cite{abdalla2020inner} (Appendix A), 
the parameters can be set as: 
$\frac{p}{2K} \geq l \sqrt{l} X \omega(\mathrm{log^2\ }n),
\rho \geq \omega(\sqrt{\mathrm{log\ }n}),
\sigma \geq 2 C\alpha p (\sqrt{n}+\sqrt{m}+\sqrt{l}), 
m=2n\cdot \mathrm{log\ }p$, 
where $C$ denoting the constant specified in the Lemma \ref{BoundGauss}. 

\subsection{Pseudorandom Function}
Let $\mathcal{K}$, $\mathcal{P}$ and $\mathcal{V}$  be a key-space, a domain and a range, respectively.
Given the security parameter $\lambda$, a pseudorandom function (PRF) is an efficiently computable 
function $F: \mathcal{K} \times \mathcal{P} \to \mathcal{V}$ satisfying the following security property:  
for all PPT adversaries $\mathcal{A}$, the advantage 
\begin{equation*}
    Adv_{\mathcal{A}}(\lambda) = 
    \left\lvert
    \begin{aligned}
    \mathrm{Pr}[\mathcal{A}^{F(k,\cdot)}(1^\lambda)=1]\\
    - \mathrm{Pr}[\mathcal{A}^{R(\cdot)}(1^\lambda)=1] 
    \end{aligned}
    \right\lvert
    \leq \epsilon(\lambda),
\end{equation*}
is negligible, 
where $k \gets \mathcal{K}$ and $R: \mathcal{P} \to \mathcal{V}$ is a random function.

\section{The System Model of  flexible EHR
sharing scheme}\label{section:third}
Fig.\ref{figure:overview} illustrates the system model of flexible EHR sharing scheme. 
\begin{figure*}[!ht]
    \centering
    \includegraphics[width=0.7\linewidth]{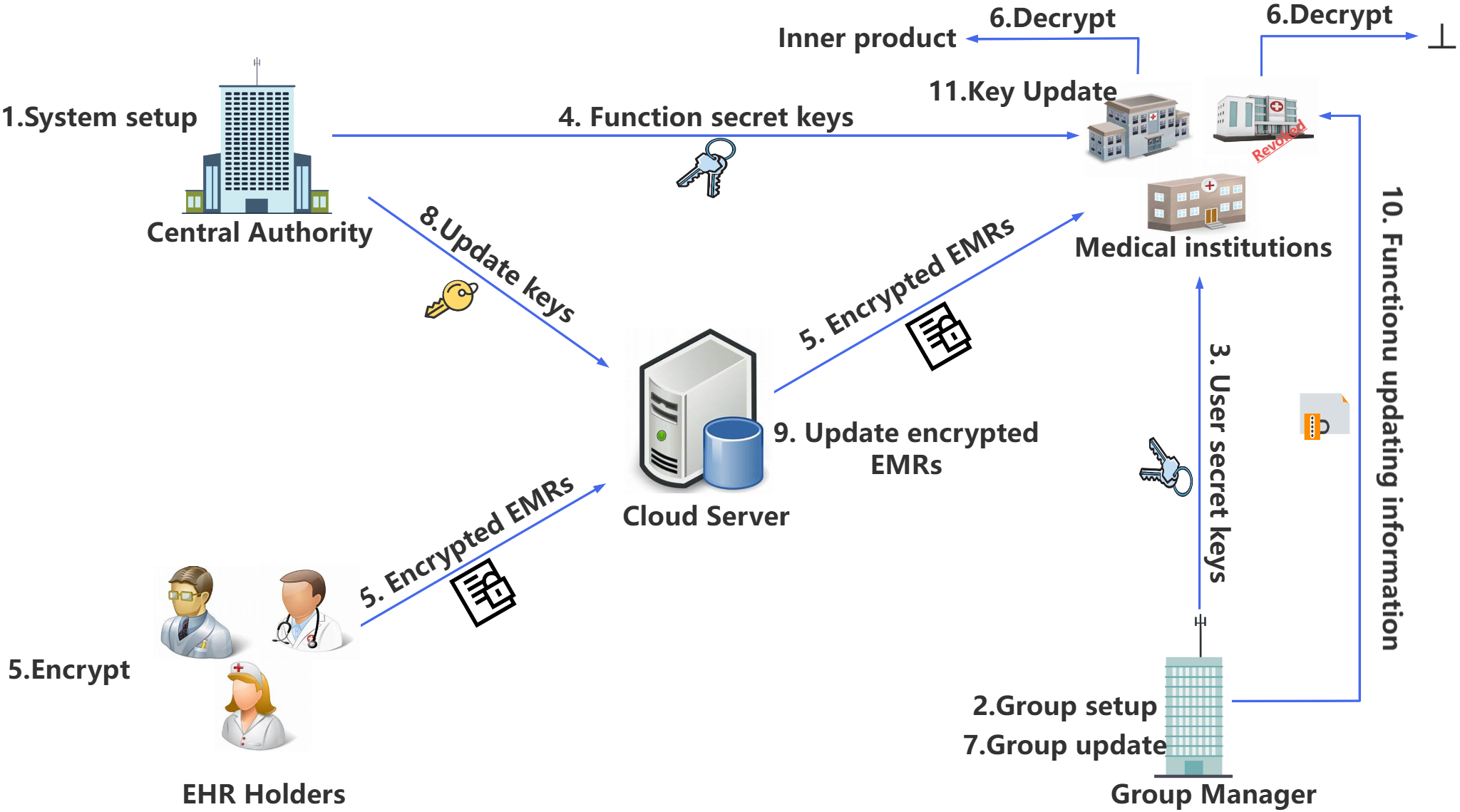}
    \caption{System model of flexible EHR sharing scheme}\label{figure:overview}
\end{figure*}

The system model consists of five entities: 
a central authority, a group manager, a cloud server, EHR holders and medical institutions. 

Central authority ($CA$): $CA$ is 
    a trusted entity that sets up the system and generates function keys for medical institutions.
    Additionally, it generates update keys for the cloud server when revocation is required.
    
Group manager ($GM$): $GM$ is a trusted entity that generates user keys for medical institutions 
    and  broadcasts 
    function update information to medical institutions. 
    
Cloud server ($CS$): $CS$ is a semi-honest entity that stores encrypted EHR data. 
    When revocation occurs, $CS$ obtains the update key from the $GM$ and updates the encrypted EHR data. 
    
EHR holders ($EH$s): $EH$s are entities with EHR  
    that usually contain sensitive personal information, such as age, gene data, etc. 
    To share EHR data securely, $EH$s
    encrypt them and upload encrypted EHR data to $CS$ honestly.  
    
Medical institutions ($MI$s): $MI$s, such as doctors, researchers and nurses, are entities that want to use these EHR data.  
    To access EHR data, $MI$s request function keys and user keys from $CA$ and 
    $GM$, respectively. 
    Then, unrevoked $MI$s can compute the inner product of the vector 
    contained in the encrypted EHR data and the vector associated with their secret key. 

The workflow of our scheme is as follows. 
    $CA$ initialises the system and 
    generates a master secret-public key pair and public parameters. 
    Then, $GM$  
    generates the group secret-public key pair.
    To obtain the user keys, $MI$s send their identities 
    to $GM$.  
    $GM$ issues user keys to $MI$s.  
    To obtain function keys, $MI$s send their identities and 
    vectors to $CA$.  
    $CA$ generates function keys for $MI$s. 
    $EH$s encrypt EHR data and upload the 
    encrypted EHR data to $CS$. 
    $MI$s can obtain encrypted EHR data from $CS$. 
    If a $MI$'s function computation rights are not revoked, 
    it can calculate the inner product of the EHR data and the vector associated with function key.  
    When revocation is required, $GM$ updates the system version and generates new version group secret-public key pair. 
    $CA$ generates the update key for $CS$, and
    $CS$ updates the encrypted EHR data.  
    Then, $GM$ selects a function to be updated, sets the revocation list of the function, generates function update information, and 
    broadcasts it to $MI$s. 
    Using the function update information, unrevoked $MI$s can update their function keys to the new version. 

\section{Formal Definition and Security Model}\label{section:fourth}
This section introduces the formal definition and security model of IPFE-FR. 

\subsection{Formal Definition}
IPFE-FR is composed of the following algorithms:

$\mathbf{SystemSetup}(1^\lambda,\mathcal{N}) \to (\mathsf{msk},\mathsf{mpk},\mathsf{pp})$. 
    $CA$ 
    takes 
    a security parameter $\lambda$ and a bound $\mathcal{N}$ of the number of users as inputs, and outputs the master secret-public key 
    pair $(\mathsf{msk},\mathsf{mpk})$ and public parameters $\mathsf{pp}$. 
    For brevity, we omit the public parameters $\mathsf{pp}$ from the remaining algorithms' inputs.
 
$\mathbf{GroupSetup}(\mathsf{mpk}) \to (\mathsf{gsk}_1,\mathsf{gpk}_1)$. 
    $GM$  
    takes the master public key $\mathsf{mpk}$ as inputs, and outputs
    the group secret-public key pair $(\mathsf{gsk}_1,\mathsf{gpk}_1)$, where $1$ denotes the initial system version number. 
    When revocation occurs, the group secret-public key pair will be updated to a new version which increases by $1$. 
    We denote the current system version number by $ver$ in our scheme.
    Notably, $\mathsf{gsk}_1$ consists of two parts: $\mathsf{guk}$ and $\mathsf{gfk_1}$. 
    $\mathsf{guk}$ is constant and is used to generate user keys for $MI$s, 
    while $\mathsf{gfk}_1$ is changed with the system version number and is used to generate function update information for $MI$s.

$\mathbf{UKeyGen}(\mathsf{guk},\mathsf{id}) \to \mathsf{usk}_{\mathsf{id}}$. 
    $GM$ takes the part group secret key $\mathsf{guk}$ and 
    a $MI$'s identity $\mathsf{id}$ as inputs, 
    and generates the user key $\mathsf{usk}_{\mathsf{id}}$ 
    for the $MI$. 

$\mathbf{FKeyGen}(\mathsf{msk}, \mathsf{mpk},\mathsf{gpk}_{ver},\mathbf{x},\mathsf{id}) \to \mathsf{fsk}_{\mathbf{x},\mathsf{id},ver}$. 
    $CA$ takes the master secret-public key pair $(\mathsf{msk},\mathsf{mpk})$,  
    the group secret key $\mathsf{gpk}_{ver}$, 
    a vector $\mathbf{x}$ and 
    a $MI$'s identity $\mathsf{id}$ as inputs, and generates the function key $\mathsf{fsk}_{\mathbf{x},\mathsf{id},ver}$ 
    for the $MI$. 

$\mathbf{Enc}(\mathsf{mpk},\mathsf{gpk}_{ver},\mathbf{y})\to \mathsf{CT}_{ver}$. 
    $EH$ 
    takes the master public key $\mathsf{mpk}$, 
    the group public key $\mathsf{gpk}_{ver}$ and 
    EHR data $\mathbf{y}$ as inputs,  
    and uploads encrypted EHR data $\mathsf{CT}_{ver}$ 
    to $CS$. 

$\mathbf{Dec}(\mathsf{CT}_{ver},\mathsf{usk}_{\mathsf{id}},\mathsf{fsk}_{\mathbf{x},\mathsf{id},ver}) 
    \to \langle  \mathbf{x},\mathbf{y} \rangle$. 
    $MI$ 
    takes the encrypted EHR data $\mathsf{CT}_{ver}$, 
    a user key $\mathsf{usk}_{\mathsf{id}}$ and 
    the function key $\mathsf{fsk}_{\mathbf{x},\mathsf{id},ver}$ as inputs, 
    and outputs the inner product $\langle \mathbf{x},\mathbf{y} \rangle$. 

$\mathbf{GroupUpdate}
    (\mathsf{mpk},\mathsf{gsk}_{ver},\mathsf{gpk}_{ver})\to (\mathsf{gsk}_{ver+1},$ $\mathsf{gpk}_{ver+1})$. 
    $GM$ takes the master public key $\mathsf{mpk}$ and 
    the group secret-public key pair $(\mathsf{gsk}_{ver},\mathsf{gpk}_{ver})$ as inputs, 
    and generates the 
    new version group secret-public key pair $(\mathsf{gsk}_{ver+1},\mathsf{gpk}_{ver+1})$ 
    for $MI$s. 

$\mathbf{UptKeyGen}(\mathsf{msk}, \mathsf{mpk},\mathsf{gpk}_{ver+1}) 
    \to \mathsf{uptk}_{ver+1}$. 
    $CA$ takes the master secret-public key pair $(\mathsf{msk},\mathsf{mpk})$ and the group public key 
    $\mathsf{gpk}_{ver+1}$ as inputs, and generates the update key $\mathsf{uptk}_{ver+1}$ for $CS$. 

$\mathbf{CTUpdate}(\mathsf{uptk}_{ver+1},\mathsf{CT}_{ver})\to \mathsf{CT}_{ver+1}$. 
    $CS$ takes the update key $\mathsf{uptk}_{ver+1}$ and the encrypted EHR data $\mathsf{CT}_{ver}$ as inputs,  
    and outputs the updated EHR data $\mathsf{CT}_{ver+1}$. 

$\mathbf{FUpdate}(\mathsf{mpk},\mathsf{gfk}_{ver},\mathsf{gfk}_{ver+1},\mathbf{x},\mathcal{R}_{\mathbf{x}}) 
    \to \mathsf{UPI}_{\mathbf{x},{ver+1}}$.
    $GM$ takes the master public key pair $\mathsf{mpk}$, 
    the part group secret keys $\mathsf{gfk}_{ver}, \mathsf{gfk}_{ver+1}$ and 
    the vector $\mathbf{x}$, the set of revoked users $\mathcal{R}_{\mathbf{x}}$ for $\mathbf{x}$ as inputs, 
    and outputs the function update information $\mathsf{UPI}_{\mathbf{x},ver+1}$ for $MI$s. 

$\mathbf{KeyUpdate}(\mathsf{usk}_{\mathsf{id}},\mathsf{fsk}_{\mathbf{x},\mathsf{id},ver}, \mathsf{UPI}_{\mathbf{x},{ver+1}}) 
    \to \mathsf{fsk}_{\mathbf{x},\mathsf{id},{ver+1}}$. 
     $MI$ takes 
    the user key $\mathsf{usk}_{\mathsf{id}}$, the function key $\mathsf{fsk}_{\mathbf{x},\mathsf{id},ver}$ and 
    the function update information $\mathsf{UPI}_{\mathbf{x},ver+1}$ as inputs, 
    and outputs the function key $\mathsf{fsk}_{\mathbf{x},\mathsf{id},ver+1}$. 

\begin{definition} An inner-product functional encryption with fine-grained revocation (IPFE-FR) is correct if

    \begin{equation*}
    \begin{split}
        \mathsf{Pr}\left[\begin{array}{l|l}
        &\mathbf{SystemSetup}(1^\lambda,\mathcal{N}) \to(\mathsf{msk}, \\
        &\mathsf{mpk},\mathsf{pp});\\
        &\mathbf{GroupSetup}(\mathsf{mpk}) \to(\mathsf{gsk}_1,\\
        &\mathsf{gpk}_1);\\
        &\mathbf{UKeyGen}(\mathsf{guk},\mathsf{id}) \to \mathsf{usk}_{\mathsf{id}};\\
        &\mathbf{FKeyGen}(\mathsf{msk}, \mathsf{mpk},\mathsf{gpk}_{ver},\mathbf{x},\\
        \mathbf{Dec}(&\mathsf{id}) \to \mathsf{fsk}_{\mathbf{x},\mathsf{id},ver};\\
        \mathsf{CT}_{ver},&\mathbf{Enc}(\mathsf{mpk},\mathsf{gpk}_{ver},\mathbf{y})\to\mathsf{CT}_{ver}; \\
        \mathsf{usk}_{\mathbf{x},\mathsf{id}}, &\mathbf{GroupUpdate} 
        (\mathsf{mpk},\mathsf{gsk}_{ver},\\
        \mathsf{fsk}_{\mathbf{x},\mathsf{id},ver},&\mathsf{gpk}_{ver}) \to (\mathsf{gsk}_{ver+1},\mathsf{gpk}_{ver+1});\\
        ) \to&\mathbf{UptKeyGen}(\mathsf{msk}, \mathsf{mpk},\mathsf{gpk}_{ver+1})\\
        \langle  \mathbf{x},\mathbf{y} \rangle&\to\mathsf{uptk}_{ver+1}\\
        &\mathbf{CTUpdate}(\mathsf{uptk}_{ver+1},\mathsf{CT}_{ver})\to\\
        &\mathsf{CT}_{ver+1};\\
        &\mathbf{FUpdate}(\mathsf{mpk},\mathsf{gfk}_{ver},\mathsf{gfk}_{ver+1},\\
        &\mathbf{x},\mathcal{R}_{\mathbf{x}})\to \mathsf{UPI}_{\mathbf{x},{ver+1}};\\
        &\mathbf{KeyUpdate}(\mathsf{usk}_{\mathbf{x},\mathsf{id}},\mathsf{fsk}_{\mathbf{x},\mathsf{id},ver},\\ &\mathsf{UPI}_{\mathbf{x},{ver+1}})\to\mathsf{fsk}_{\mathbf{x},\mathsf{id},{ver+1}};\\
        \end{array}
        \right]=1
    \end{split}
    \end{equation*}

\end{definition}

\subsection{Security Model} \label{model}
The sIND-CPA security game between the adversary $\mathcal{A}$ and the challenger $\mathcal{C}$ 
is defined as follows. 

$\mathbf{Init.}$ $\mathcal{A}$ submits a version number $ver^*$, a vector $\mathbf{x}^*$ 
and a set of revoked users $\mathcal{R}^*$ to  $\mathcal{C}$. 

$\mathbf{Setup.}$ $\mathcal{C}$ runs $\mathbf{SystemSetup}(1^\lambda,1^\mathcal{N}) \to (\mathsf{msk},\mathsf{mpk},\mathsf{pp})$ and 
$\mathbf{GroupSetup}(\mathsf{mpk})
\to (\mathsf{gsk}_1,\mathsf{gpk}_1)$. 
Then, $\mathcal{C}$ runs $\mathbf{GroupUpdate}(\mathsf{mpk},\mathsf{gsk}_{ver},\mathsf{gpk}_{ver}) \to (\mathsf{gsk}_{ver+1},
\mathsf{gpk}_{ver+1})$ for $ver = 1,\dots,ver^*-1$. 
Finally, $\mathcal{C}$ keeps $(\mathsf{msk}, \{\mathsf{gsk}_{ver}\}_{ver\in \left[1,ver^*\right]})$ and 
sends $(\mathsf{mpk},\mathsf{pp},\{\mathsf{gpk}_{ver}\}_{ver\in \left[1,ver^*\right]})$ to $\mathcal{A}$. 

$\mathbf{Phase\ 1.}$ 

$\mathsf{User\ Key\ Query}.$ $\mathcal{A}$ adaptively submits an identity $\mathsf{id}$ 
%and a version number $i \in [1,ver^{*}]$ 
to $\mathcal{C}$.  
Then, $\mathcal{C}$ runs $\mathbf{UKeyGen}(\mathsf{guk},\mathsf{id}) \to \mathsf{usk}_{\mathsf{id}}$ and 
forwards $\mathsf{usk}_{\mathsf{id}}$ to $\mathcal{A}$. 

$\mathsf{Function\ Key\ Query}.$ $\mathcal{A}$ adaptively submits $(\mathbf{x},\mathsf{id},ver)$ 
to $\mathcal{C}$, where $\mathbf{x}\in \mathbb{Z}_p^l$ and $ver \in \left[ 1,ver^*\right]$, 
with the restriction that $(\mathbf{x},ver) \neq (\mathbf{x}^*,ver^*)$. 
Then, $\mathcal{C}$ runs 
$\mathbf{FKeyGen}(\mathsf{msk}, \mathsf{mpk},\mathsf{gpk}_{ver},\mathbf{x},\mathsf{id}) \to \mathsf{fsk}_{\mathbf{x},\mathsf{id},ver}$
and 
forwards $\mathsf{fsk}_{\mathbf{x},\mathsf{id},ver}$ to $\mathcal{A}$. 
$\mathcal{C}$ %appends $(\mathsf{id},\mathbf{k}_{\mathsf{id}})$ to the (initially empty) public directory $\mathsf{pd}$, 
adds $\mathbf{x}$ into a (initially empty) set $T_x$ if $\mathbf{x} \notin T_x$. 
%If $\mathbf{x}=\mathbf{x}^{*}$, $\mathcal{C}$ appends $\mathsf{id}$ to a set of (initially empty) revoked data users $\mathcal{R}^{*}$. 

$\mathsf{Update\ Key\ Query}.$ $\mathcal{A}$ adaptively submits a version number $ver \in [1,ver^{*}-1]$ to $\mathcal{C}$. 
Then, $\mathcal{C}$ runs $\mathbf{UptKeyGen}(\mathsf{msk}, \mathsf{mpk},\mathsf{gpk}_{ver+1}) 
\to \mathsf{uptk}_{ver+1}$, and returns the update key $\mathsf{uptk}_{ver+1}$ to $\mathcal{A}$. 

$\mathsf{Function\ Update\ Query.}$ $\mathcal{A}$ submits a version number $ver \in [1,ver^{*}-1]$, 
a vector $\mathbf{x} \in T_{x}$ 
and a set of revoked users $\mathcal{R}_{\mathbf{x}}$ for $\mathbf{x}$, 
with the restriction that $\mathcal{R}^* \subseteq \mathcal{R}_{\mathbf{x}}$ 
if $(\mathbf{x},ver) = (\mathbf{x}^*,ver^*-1)$. 
Then, $\mathcal{C}$ runs $\mathbf{FUpdate}(\mathsf{mpk},\mathsf{gfk}_{ver},\mathsf{gfk}_{ver+1},\mathbf{x},\mathcal{R}_{\mathbf{x}}) \to \mathsf{UPI}_{\mathbf{x},ver+1}$, 
and forwards $\mathsf{UPI}_{\mathbf{x},ver+1}$ to $\mathcal{A}$. 
%If $\mathbf{x}=\mathbf{x}^*$ and $i= ver^{*}$, $\mathcal{R}^{*} \subseteq \mathcal{R}_{\mathbf{x}^*}$ must be satisfied.

$\mathbf{Challenge.}$ $\mathcal{A}$ submits two vectors $\mathbf{y}_{0}$ and $\mathbf{y}_{1}$,   %and a vector $\mathbf{x}^* \in \mathsf{Table}_{\mathbf{x}}$, 
constrained such that for all $\mathbf{x} \in T_x \setminus \{\mathbf{x}^*\}$, the equality $\langle \mathbf{x}, \mathbf{y}_{0} \rangle = \langle \mathbf{x}, \mathbf{y}_{1} \rangle$ holds. 
$\mathcal{C}$ samples a random bit $b$ from $\{0,1\}$, runs $\mathbf{Enc}(\mathsf{mpk},\mathsf{gpk}_{ver^*},\mathbf{y}_b)\to \mathsf{CT}_{ver^*}$, and 
returns $\mathsf{CT}_{ver^*}$ to $\mathcal{A}$.
%and $\mathsf{Update}(\mathsf{mpk}^{(ver^*)},\mathsf{msk}^{(ver^*)},\mathbf{x}^{*},\mathcal{R}_{\mathbf{x}^{*}},\mathsf{pp}) \to (\mathsf{ADI}_{\mathbf{x}^{*}},K)$, 
%where $\mathcal{R}_{\mathbf{x}^{*}}$ is the set of all identities that have requested $\mathbf{x}^{*}$. 
%and $\mathsf{ADI}_{\mathbf{x}^{*}} = (ver,\mathsf{Hdr}_{\mathbf{x}^{*}},\mathsf{C}_{\mathbf{x}^{*}},\mathbf{v}_{\mathbf{x}^{*}})$. 
%$\mathcal{C}$ sets $K_{b} = K$ and selects a random $K_{1-b} \in \mathbb{Z}_p$. 
%$(\mathsf{ADI}_{\mathbf{x}^{*}},K_{0},K_{1},\mathsf{CT}^{*})$ to $\mathcal{A}$.

$\mathbf{Phase\ 2.}$ $\mathcal{A}$ is allowed to make queries  
    for user keys, function keys, update keys, and function update information, all of which are answered by
    $\mathcal{C}$ following the procedure defined in $\mathbf{Phase\ 1}$. 
%that $\langle \mathbf{x}, \mathbf{y}_{0} \rangle = \langle \mathbf{x}, \mathbf{y}_{1} \rangle$ mod $p$ 
%for each function key query $(\mathbf{x},i)$ made by $\mathcal{A}$. 

$\mathbf{Guess.}$ $\mathcal{A}$ produces a guess $b'\in\{0,1\}$ for the bit $b$. 
The adversary $\mathcal{A}$ wins If $b=b'$. 
       
    %The adversary $\mathcal{A}$ issues queries to $\mathsf{Function\ Key\ Query}$ on $(\mathsf{id},\mathbf{x}^*)$, 
    
    %but every $\mathsf{id} \in \mathcal{R}_{\mathbf{x}^*} $ on version $ver^*$. 

Within this security model, to ensure the forward security of IPFE-FR, the adversary $\mathcal{A}$ 
is permitted to adaptively query function keys $\mathsf{fsk}_{\mathbf{x}^*,\mathsf{id},ver}$, 
with the restriction that $ver\in [1,ver^{*}-1]$. 
\begin{definition}
    An inner-product functional encryption with fine-grained revocation scheme is sIND-CPA secure 
    if the advantage of all PPT adversaries $\mathcal{A}$ 
    in winning the above game is negligible, namely:
    \begin{align*}
        Adv^{sIND-CPA}_{\mathcal{A}}(\lambda) = \left|Pr[b=b']-\frac{1}{2} \right| \leq \epsilon(\lambda).
    \end{align*}
\end{definition}

\section{Concrete Construction}\label{section:fifth}
To instantiate the system framework of flexible EHR sharing scheme
and resist quantum attacks, we propose a lattice-based instantiation of IPFE-FR. 
We first present a high-level overview of the lattice-based IPFE-FR, 
and then show the concrete construction of it in Fig. \ref{figure:CC}.

$\mathbf{High-level\ overview.}$ 
$CA$ runs $\mathbf{SystemSetup}$ to initialize the flexible EHR sharing scheme. 
Suppose that $H_1: \{0,1\}^* \to \mathbb{Z}_p^{l_2}$, $H_2: \mathbb{Z}_p \to \{ 0,1\} ^ t $ are cryptographic hash functions, 
and 
$\mathsf{PRF}:\mathbb{Z}_p \times \mathcal{X} \to \mathbb{Z}_p^{l_2}$ is a pseudorandom function. 
In $\mathbf{SystemSetup}$, $CA$ runs $\mathsf{TrapGen}(1^n,1^m,p)$ to obtain 
a trapdoor $\mathbf{T}_\mathbf{A}$, which is associated with $\mathbf{A}\in \mathbb{Z}_q^{n\times m}$. 
Moreover, $CA$ selects $k_{p} \gets \mathbb{Z}_p$ as the key of $\mathsf{PRF}$. 
Then, $CA$ sets master secret key $\mathsf{msk}=(\mathbf{T}_{\mathbf{A}},k_p)$ 
and master public key $\mathsf{mpk} = (\mathbf{A},\mathbf{V},\mathbf{C})$. 

Then, $GM$ sets group secret key $\mathsf{gsk}_1 = (\mathbf{D},\mathbf{B}_1)$ and 
$\mathsf{gpk}_1 = (\mathbf{F} ,\mathbf{U}_1)$, where $1$ denotes the initial system version number.  
Notably, $\mathsf{gsk}_1$ consists of two parts: $\mathsf{guk}=\mathbf{D}$ and $\mathsf{gfk}_1 = \mathbf{B}_1$. 
$\mathsf{guk}$ is constant and is used to generate user keys for $MI$s, 
while $\mathsf{gfk}_1$ is changed with the system version number and is used to generate function update information for $MI$s.

To obtain the user key from $GM$, $MI$ sends his/her identity $\mathsf{id}$ 
to $GM$. 
$GM$ uses $\mathsf{guk}$ to compute $\mathbf{u}_{\mathsf{id}} = \mathbf{D} \cdot \mathbf{id}$ and 
generate the user key $\mathsf{usk}_{\mathsf{id}}= 
(\mathbf{id}, \mathbf{u}_{\mathsf{id}})$ for $MI$, where $\mathbf{id} = H_1(\mathsf{id})$. 

To obtain the function key, $MI$ sends his/her identity $\mathsf{id}$ 
and a vector $\mathbf{x}\in \mathcal{X}$ to $CA$. 
Using $\mathbf{T}_{\mathbf{A}}$, $CA$ can sample $\mathbf{Z}_{ver} \in \mathcal{D}_{\mathbb{Z}^{m\times l_1},\rho_1}$ such 
that $\mathbf{C}+\mathbf{U}_{ver} = \mathbf{A}\cdot \mathbf{Z}_{ver}$, and further compute 
$\mathbf{f}_{\mathbf{x},ver} = \mathbf{Z}_{ver} \cdot \mathbf{x} \in \mathbb{Z}^{m} $. 
Moreover, to resist user collusion attack, $CA$ runs $\mathsf{PRF}(k_p,\mathbf{x})$ to generate a secret
vector $\mathbf{t}_{\mathbf{x}} \in \mathbb{Z}_p^{l_2}$, and encrypts it as 
$(pd_{\mathbf{x},1},pd_{\mathbf{x},2})$, 
where $pd_{\mathbf{x},1} = \mathbf{V}^{\top} \cdot \mathbf{s}_1 + \mathbf{e}_{1} \in \mathbb{Z}_{q}^{m}$, 
$pd_{\mathbf{x},2} = \mathbf{F}^{\top} \cdot \mathbf{s}_1 + \mathbf{e}_{2} + p^{k-1} \cdot \mathbf{t}_{\mathbf{x}} \in \mathbb{Z}_{q}^{l_2}$. 
Using $\mathsf{usk}_{\mathsf{id}}$, a $MI$ can only computes $\langle\mathbf{id}, \mathbf{t}_{\mathbf{x}}\rangle$ mod $p$ 
from $(pd_{\mathbf{x},1},pd_{\mathbf{x},2})$, without any
further information on $\mathbf{t}_{\mathbf{x}}$. 
Then, $CA$ computes $\mathsf{fsk}_{\mathbf{x},\mathsf{id},ver} = 
\mathbf{f}_{\mathbf{x},ver} - (\overbrace{ v_{\mathbf{x},\mathsf{id}},\cdots,v_{\mathbf{x},\mathsf{id}}}^{m})$ 
to bind $\mathsf{fsk}_{\mathbf{x},\mathsf{id},ver}$ and $\mathsf{usk}_{\mathsf{id}}$, 
where $v_{\mathbf{x},\mathsf{id}}=\langle\mathbf{id}, \mathbf{t}_{\mathbf{x}}\rangle$ mod $p$. 

When encrypting EHR data $\mathbf{y} \in \mathcal{Y}$, 
$EH$ uses $\mathsf{mpk}$ and $\mathsf{gpk}_{ver}$ to compute the ciphertexts
$\mathsf{CT}_{ver}=(\mathbf{c}_{ver,1},\mathbf{c}_{ver,2})$, where 
$\mathbf{c}_{ver,1} = \mathbf{A}^{\top} \cdot \mathbf{s}_2 +\mathbf{e}_{3} \in \mathbb{Z}_{p}^{m} , 
    \mathbf{c}_{ver,2} = (\mathbf{C}+\mathbf{U}_{ver})^{\top} \cdot\mathbf{s}_2 + \mathbf{e}_4
    + \left\lfloor \frac{p}{K} \right\rfloor \cdot \mathbf{y} \in \mathbb{Z}_{p}^{l_1}$. 
Then, $EH$ uploads the encrypted EHR data $\mathsf{CT}_{ver}$ to $CS$. 

To perform data mining on EHR data, 
a $MI$ obtains $v_{\mathbf{x},\mathsf{id}} = \langle \mathbf{id},\mathbf{t}_{\mathbf{x}} \rangle$ mod $p$ from 
$(pd_{\mathbf{x},1},pd_{\mathbf{x},2})$ using $\mathsf{usk}_{\mathsf{id}}$, 
and then computes $\mathbf{f}_{\mathbf{x},ver} = \mathbf{f}_{\mathbf{x},\mathsf{id},ver} + (\overbrace{v_{\mathbf{x},\mathsf{id}},\cdots, v_{\mathbf{x},\mathsf{id}}}^{m})^{\top}$.
Using $\mathbf{f}_{\mathbf{x},ver}$, 
he/she can obtain $\langle \mathbf{x},\mathbf{y} \rangle$ by decrypting encrypted EHR data $\mathsf{CT}_{ver}$. 
Notably, the inner product can be calculated only if the ciphertext and decryption key include the same version. 

When revocation is required, $GM$ runs $\mathbf{GroupUpdate}$ to generate an updated group secret key
$\mathsf{gsk}_{ver+1} = (\mathbf{D},\mathbf{B}_{ver+1})$, and an updated group public key $\mathsf{gpk}_{ver+1} = (\mathbf{F},\mathbf{U}_{ver+1})$. 
Moreover, $GM$ sets $ver =ver+1$. 

To generate an update key for $CS$, 
$CA$ uses $\mathbf{T}_{\mathbf{A}}$ to sample $\mathbf{Z}_{ver} \in \mathcal{D}_{\mathbb{Z}^{m\times l_1},\rho_1}$ such 
that $\mathbf{C}+\mathbf{U}_{ver} = \mathbf{A}\cdot \mathbf{Z}_{ver}$. 
According to $\mathsf{gpk}_{ver+1}$, $CA$ then generates a update 
key $\mathsf{uptk}_{ver+1}=$
        \begin{equation*} 
            \begin{bmatrix} 
            \mathbf{E}_{1}\mathbf{A}+\mathbf{E}_{2} & \ \ \mathbf{E}_{1}\cdot  (\mathbf{C}+\mathbf{U}_{ver+1}) + \mathbf{E}_{3}-\mathsf{PowerT}_{p}(\mathbf{Z}_{ver})\\ 
           \mathbf{0}_{l_1\times m} & \ \ \mathbf{I}_{l_1\times l_1} \\
            \end{bmatrix}
        \end{equation*}
        for $CS$.

After obtaining $\mathsf{uptk}_{ver+1}$,  
$CS$ can use $\mathsf{uptk}_{ver+1}$ to update encrytped EHR data $\mathsf{CT}_{ver}$ to $\mathsf{CT}_{ver+1}$.
The form of $\mathsf{CT}_{ver+1}$ is
$(\mathbf{c}_{ver+1,1} = \mathbf{A}^{\top} \cdot \mathbf{s}' + error, 
    \mathbf{c}_{ver+1,2} = (\mathbf{C}+\mathbf{U}_{ver+1})^{\top} \cdot\mathbf{s}' + error
    + \left\lfloor \frac{p}{K} \right\rfloor \cdot \mathbf{y})$.  

To broadcast function update information about function key associated with $\mathbf{x}$ for unrevoked $MI$s whose $\mathsf{id} \notin \mathcal{R}_{\mathbf{x}}$ on version $ver+1$, 
$GM$ computes a $\mathbf{v}_{\mathcal{R}_{\mathbf{x}}} \in \mathbb{Z}_p^{l_2}$ such that 
$\langle \mathbf{id},\mathbf{v}_{\mathcal{R}_{\mathbf{x}}} \rangle = 0$ (mod $p$) 
for every $\mathsf{id}\in \mathcal{R}_{\mathbf{x}}$. 
Then, $GM$ computes and broadcasts
$\mathsf{UPI}_{\mathbf{x},ver+1}=(\mathsf{upi}_{\mathbf{x},ver+1,1},\mathsf{upi}_{\mathbf{x},ver+1,2},
\mathsf{upi}_{\mathbf{x},ver+1,3},\mathbf{v}_{\mathcal{R}_\mathbf{x}})$. 
Specifically, $\mathsf{upi}_{\mathbf{x},ver+1,1},\mathsf{upi}_{\mathbf{x},ver+1,2}$ 
are the ciphertexts of $k_t \cdot \mathbf{v}_{\mathcal{R}_{\mathbf{x}}}$, 
where $k_t \in \mathbb{Z}_p$ is randomly selected. 
$\mathsf{upi}_{\mathbf{x},ver+1,3}$ is the ciphertext of $(\mathbf{B}_{ver+1} -\mathbf{B}_{ver}) \cdot \mathbf{x}$, which is encrypted by $k_t$.

In order to update function key associated with $\mathbf{x}$, $MI$s whose 
$\mathsf{id} \notin \mathcal{R}_{\mathbf{x}}$ can use $\mathsf{usk}_{\mathsf{id}}$ to obtain 
the key $k_t$ from $(\mathsf{upi}_{\mathbf{x},ver+1,1},\mathsf{upi}_{\mathbf{x},ver+1,2})$. 
Furthermore, they can obtain the $\mathsf{fsk}_{\mathbf{x},\mathsf{id},ver+1}
=(\mathbf{\mathbf{x}},\mathbf{f}_{\mathbf{x},\mathsf{id},ver+1}=
\mathbf{f}_{\mathbf{x},\mathsf{id},ver} + (\mathbf{B}_{ver+1}-\mathbf{B}_{ver}) \cdot \mathbf{x})$, 
where $(\mathbf{B}_{ver+1}-\mathbf{B}_{ver}) \cdot \mathbf{x}$ can be obtaining by decrypting 
$\mathsf{upi}_{\mathbf{x},ver+1.3}$ using $k_t$.
Because $\mathbf{A} \cdot (\mathbf{Z}_{ver} +(\mathbf{B}_{ver+1} - \mathbf{B}_{ver})) 
= \mathbf{C} + \mathbf{U}_{ver} + \mathbf{U}_{ver+1} - \mathbf{U}_{ver}
= \mathbf{C} +\mathbf{U}_{ver+1}$, 
$\mathsf{fsk}_{\mathbf{x},\mathsf{id},ver+1}$ can decrypt the encrypted EHR data $\mathsf{CT}_{ver+1}$. 
\begin{figure*}
\framebox[\textwidth]{
\parbox{0.97\textwidth}{
    $\mathbf{SystemSetup}$. 
    $CA$ selects the security parameter $1^\lambda$, the bound of $MI$s $\mathcal{N}$, 
    a prime $p$, integers $n,m,l_1,l_2 = \mathcal{N} + 1,X,Y,K=l_1XY,q=p^{k}$ for integer $k \geq 2$, 
    reals $\rho_1,\rho_2, \sigma_1, \sigma_2 > 0$, 
    and spaces $\mathcal{X} = \{0,\ldots,X-1\}^{l_1}$, $\mathcal{Y} = \{0,\ldots,Y-1\}^{l_1}$. 
    Suppose that  
    $H_1: \{0,1\}^* \to \mathbb{Z}_p^{l_2}$, $H_2: \mathbb{Z}_p \to \{ 0,1\} ^ t $ are cryptographic hash functions, and 
    $\mathsf{PRF}:\mathbb{Z}_p \times \mathcal{X} \to \mathbb{Z}_p^{l_2}$ is a 
            pseudorandom function, 
            where 
            $t = m \cdot\left\lceil \mathrm{log\ }(2Xl_1\rho_2) \right\rceil$.
    Then, $CA$ runs $(\mathbf{A},\mathbf{T}_{\mathbf{A}}) \gets \mathsf{TrapGen}(1^n,1^m,p)$ where $\mathbf{A} \in \mathbb{Z}^{n\times m}_p$, 
    and selects 
     $\mathbf{V}\in \mathbb{Z}^{n\times m}_q, \mathbf{C}\gets \mathbb{Z}_p^{n\times l_1}, k_{p} \gets \mathbb{Z}_p$. 
    Let $\mathsf{pd}$ be an initially empty public directory, $\mathsf{msk} = (\mathbf{T}_{\mathbf{A}},k_p)$, 
    $\mathsf{mpk}=(\mathbf{A},\mathbf{V},\mathbf{C})$, $\mathsf{pp}=(n,m,l_1,l_2,p,q,k,\rho_1,\rho_2,\sigma_1,\sigma_2,\mathsf{pd},\mathsf{PRF},H_1,H_2)$.
    \medskip

    $\mathbf{GroupSetup}$.
    $GM$ samples $\mathbf{B}_1 \gets \mathcal{D}_{\mathbb{Z}^{m\times l_1},\rho_1}, \mathbf{D} \gets \mathcal{D}_{\mathbb{Z}^{m\times l_2},\rho_2}$, 
            computes $\mathbf{U}_1 = \mathbf{A} \cdot \mathbf{B}_1 \in \mathbb{Z}_p^{n\times l_1} $,  
            $\mathbf{F} = \mathbf{V} \cdot \mathbf{D} \in \mathbb{Z}_q^{n\times l_2}$,  
            and sets system version number $ver$ as $1$.
            Let $\mathsf{guk} = \mathbf{D}$, $\mathsf{gfk}_1 = \mathbf{B}_1$,
            $\mathsf{gsk}_1 = (\mathsf{guk},\mathsf{gfk}_1)$, and  $\mathsf{gpk}_1 = (\mathbf{F},\mathbf{U}_1)$.
    \medskip
    
    $\mathbf{UKeyGen}$. 
    Given an $MI$ identity $\mathsf{id}$, 
    $GM$ 
            computes $\mathbf{id} = H_1(\mathsf{id}) \in \mathbb{Z}^{l_2}_p$, 
            $\mathbf{u}_{\mathsf{id}}=\mathbf{D}\cdot \mathbf{id} \in \mathbb{Z}^{m}$, and sends user key  
            $\mathsf{usk}_{\mathsf{id}}= (\mathbf{id}, \mathbf{u}_{\mathsf{id}})$ to the $MI$. 
    \medskip
    
    $\mathbf{FKeyGen}$.  
    Given a vector $\mathbf{x} \in \mathcal{X}$ and 
    an $MI$ identity $\mathsf{id}$, 
    $CA$
        samples $\mathbf{Z}_{ver} \gets \mathsf{SamplePre}(\mathbf{A},\mathbf{T}_{\mathbf{A}}, \mathbf{C}+\mathbf{U}_{ver},\rho_1)$, and computes
        $\mathbf{f}_{\mathbf{x},ver} = \mathbf{Z}_{ver} \cdot \mathbf{x} \in \mathbb{Z}^{m} $,  
        $\mathbf{t}_{\mathbf{x}} = \mathsf{PRF}(k_{p},\mathbf{x}) \in \mathbb{Z}^{l_2}_p$, 
        $\mathbf{id} = H_1(\mathsf{id}) \in \mathbb{Z}^{l_2}_p$, 
        $\mathbf{f}_{\mathbf{x},\mathsf{id},ver} = \mathbf{f}_{\mathbf{x},ver} - 
        (\overbrace{ v_{\mathbf{x},\mathsf{id}},\cdots, v_{\mathbf{x},\mathsf{id}}}^{m})^{\top} \in \mathbb{Z}^m$, where 
        $v_{\mathbf{x},\mathsf{id}} = \langle \mathbf{id}, \mathbf{t}_{\mathbf{x}}\rangle$ mod $p$.
        Let the public directory $\mathsf{pd}$ contain $(p_{\mathbf{x},1},p_{\mathbf{x},2})$, 
        where $\mathbf{x}$ is the function key queries that have been made so far. 
        If $(pd_{\mathbf{x},1},pd_{\mathbf{x},2}) \notin \mathsf{pd}$,  
        $CA$ samples $\mathbf{s}_1 \gets \mathbb{Z}_{q}^{n}$, 
        $\mathbf{e}_{1} \gets \mathcal{D}_{\mathbb{Z},\sigma_1}^{m}$, 
        $\mathbf{e}_{2} \gets \mathcal{D}_{\mathbb{Z},\sigma_1}^{l_2}$, 
        computes $pd_{\mathbf{x},1} = \mathbf{V}^{\top} \cdot \mathbf{s}_1 + \mathbf{e}_{1} \in \mathbb{Z}_{q}^{m}$
        $pd_{\mathbf{x},2} = \mathbf{F}^{\top} \cdot \mathbf{s}_1 + \mathbf{e}_{2} + p^{k-1} \cdot \mathbf{t}_{\mathbf{x}} \in \mathbb{Z}_{q}^{l_2}$, 
        and appends $(pd_{\mathbf{x},1},pd_{\mathbf{x},2})$ to $\mathsf{pd}$.
        $CA$ sends function key
        $\mathsf{fsk}_{\mathbf{x},\mathsf{id},ver} = (\mathbf{x},\mathbf{f}_{\mathbf{x},\mathsf{id},ver})$ to the $MI$. 
    \medskip
    
    $\mathbf{Enc}$. 
    To encrypt EHR data 
    $\mathbf{y}\in \mathcal{Y}$, 
    $EH$ samples $\mathbf{s}_2 \gets \mathbb{Z}_{p}^{n}$, $\mathbf{e}_{3} \gets \mathcal{D}_{\mathbb{Z},\sigma_2}^{m}$, 
    $\mathbf{e}_{4} \gets \mathcal{D}_{\mathbb{Z},\sigma_2}^{l_1}$, 
    computes $\mathbf{c}_{ver,1} = \mathbf{A}^{\top} \cdot \mathbf{s}_2 +\mathbf{e}_{3} \in \mathbb{Z}_{p}^{m} , 
    \mathbf{c}_{ver,2} = (\mathbf{C}+\mathbf{U}_{ver})^{\top} \cdot\mathbf{s}_2 + \mathbf{e}_4
    + \left\lfloor \frac{p}{K} \right\rfloor \cdot \mathbf{y} \in \mathbb{Z}_{p}^{l_1}$, 
    and uploads encrypted EHR data $\mathsf{CT}_{ver}=(\mathbf{c}_{ver,1},\mathbf{c}_{ver,2})$ to $CS$. 
    \medskip
    
    $\mathbf{Dec}$. 
    To decrypt the ciphertext of EHR data,  
    $\mathsf{CT}_{ver}$, 
    $EH$
    computes $\theta '=  {\mathbf{id}}^{\top} \cdot pd_{\mathbf{x},2} -  
    \mathbf{u}_{\mathsf{id}}^{\top} \cdot pd_{\mathbf{x},1} $ mod $q$ 
    and obtains $\theta \in \mathbb{Z}_{p}$, where $\theta$  is the value that minimizes $\lvert p^{k-1} \cdot \theta - \theta' \rvert$. 
    Note that $\theta =\langle \mathbf{id}, \mathbf{t}_{\mathbf{x}}\rangle$ mod $p$. 
    Then, $EH$
    computes $\mathbf{f}_{\mathbf{x},ver} = \mathbf{f}_{\mathbf{x},\mathsf{id},ver} + (\overbrace{ \theta,\cdots, \theta}^{m})^{\top} \in \mathbb{Z}^m$, 
    $\mu' = {\mathbf{x}}^{\top} \cdot \mathbf{c}_{ver,2} -  
    {\mathbf{f}_{\mathbf{x},ver}^{\top}} \cdot \mathbf{c}_{ver,1} $ mod $p$, and 
    outputs $\mu \in \{0,\ldots,K-1 \}$, where $\mu$ is the value that minimizes $\lvert \left\lfloor \frac{p}{K} \right\rfloor  \cdot \mu - \mu' \rvert$. 
    \medskip
    
    $\mathbf{GroupUpdate}$. 
    $GM$
    samples $\mathbf{B}_{ver+1} \gets \mathcal{D}_{\mathbb{Z}^{m\times l_1},\rho_1}$, and 
    computes $\mathbf{U}_{ver+1} = \mathbf{A} \cdot \mathbf{B}_{ver+1} \in \mathbb{Z}_p^{n\times l_1}$.
    Let $\mathsf{gsk}_{ver+1} = (\mathbf{D},\mathbf{B}_{ver+1})$, $\mathsf{gpk}_{ver+1} = (\mathbf{F},\mathbf{U}_{ver+1})$, and $ver=ver+1$.
    \medskip

    $\mathbf{UptKeyGen}$.  
    $CA$ samples $\mathbf{Z}_{ver} \gets \mathsf{SamplePre}(\mathbf{A},\mathbf{T}_{\mathbf{A}}, \mathbf{C}+\mathbf{U}_{ver},\rho_1)$, 
    $\mathbf{E}_{1} \gets \mathbb{Z}_{p}^{hm \times n}$, 
    $\mathbf{E}_{2} \gets \mathcal{D}_{\mathbb{Z}^{hm\times m},\sigma_2}$, 
    and $\mathbf{E}_{3} \gets \mathcal{D}_{\mathbb{Z}^{hm\times l_1},\sigma_2}$, where $h=\lceil \mathrm{log\ }p \rceil$. 
    Then, $CA$
    computes and sends update key $\mathsf{uptk}_{ver+1}=$
        \begin{equation*} 
            \begin{bmatrix} 
            \mathbf{E}_{1}\mathbf{A}+\mathbf{E}_{2} & \ \ \mathbf{E}_{1}\cdot  (\mathbf{C}+\mathbf{U}_{ver+1}) + \mathbf{E}_{3}-\mathsf{PowerT}_{p}(\mathbf{Z}_{ver})\\ 
           \mathbf{0}_{l_1\times m} & \ \ \mathbf{I}_{l_1\times l_1} \\
            \end{bmatrix}
        \end{equation*}
    to $CS$. 
    \medskip

    $\mathbf{CTUpdate}$. 
    To update the ciphertext $\mathsf{CT}_{ver}$ of EHR data to a new version, 
    $CS$ 
    computes $(\mathbf{c}^{\top}_{ver+1,1},\mathbf{c}^{\top}_{ver+1,2})=(\mathsf{BitD}_{p}(\mathbf{c}_{ver,1}^{\top}),\mathbf{c}_{ver,2}^{\top}) \cdot \mathsf{uptk}_{ver+1}$, 
    and outputs updated encrypted EHR data $\mathsf{CT}_{ver+1}=(\mathbf{c}_{ver+1,1},\mathbf{c}_{ver+1,2})$. 
    \medskip
    
    $\mathbf{FUpdate}$. 
   To broadcast the function update information about  
    a vector $\mathbf{x}\in \mathcal{X}$ 
    to unrevoked $MI$s, 
    $GM$ 
        computes $\mathbf{v}_{\mathcal{R}_{\mathbf{x}}}\in \mathbb{Z}^{l_2}_{p}\setminus  \{\mathbf{0}\}$ such that 
        $\langle \mathbf{id},\mathbf{v}_{\mathcal{R}_{\mathbf{x}}} \rangle = 0$ (mod $p$) for every $\mathsf{id} \in  \mathcal{R}_{\mathbf{x}}$, 
        and samples $\mathbf{s}_3 \gets \mathbb{Z}_{q}^{n}$, 
        $\mathbf{e}_{5} \gets \mathcal{D}_{\mathbb{Z},\sigma_1}^{m}$, 
        $\mathbf{e}_{6} \gets \mathcal{D}_{\mathbb{Z},\sigma_1}^{l_2}$ 
        $k_t \gets \mathbb{Z}_p$. 
        Then, $GM$
        computes
        $\mathsf{upi}_{\mathbf{x},ver+1,1}=\mathbf{V}^{\top} \cdot \mathbf{s}_3 + \mathbf{e}_{5} \in \mathbb{Z}_{q}^{m}$, 
        $\mathsf{upi}_{\mathbf{x},ver+1,2}=\mathbf{F}^{\top} \cdot \mathbf{s}_3 + \mathbf{e}_{6} + p^{k-1} \cdot (k_t \cdot \mathbf{v}_{\mathcal{R}_{\mathbf{x}}}) \in \mathbb{Z}_{q}^{l_2}$ and 
        $\mathsf{upi}_{\mathbf{x},ver+1,3} = H_2(k_t) \oplus \mathsf{bin}( (\mathbf{B}_{ver+1}-\mathbf{B}_{ver}) \cdot \mathbf{x}) \in \{0,1\}^t$, 
        and broadcasts function update information
        $\mathsf{UPI}_{\mathbf{x},ver+1}=(\mathsf{upi}_{\mathbf{x},ver+1,1},\mathsf{upi}_{\mathbf{x},ver+1,2},\mathsf{upi}_{\mathbf{x},ver+1,3},\mathbf{v}_{\mathcal{R}_\mathbf{x}})$
        to $MI$s. 
    \medskip
    
    $\mathbf{KeyUpdate}$. 
    A unrevoked $MI$ with function key $\mathsf{fsk}_{\mathbf{x},\mathsf{id},ver}$
    computes $\nu'= {\mathbf{id}}^{\top}\cdot \mathsf{upi}_{\mathbf{x},ver+1,2} -  
            \mathbf{u}_{\mathsf{id}}^{\top} \cdot \mathsf{upi}_{\mathbf{x},ver+1,1} $ mod $q$ and 
            $k_t = \frac{\nu}{\langle \mathbf{id}, \mathbf{v}_{\mathcal{R}_\mathbf{x}} \rangle} 
            = \frac{\langle \mathbf{id}, k_t \cdot \mathbf{v}_{\mathcal{R}_\mathbf{x}} \rangle}
            {\langle \mathbf{id}, \mathbf{v}_{\mathcal{R}_\mathbf{x}} \rangle}$ mod $p$, 
            where $\nu \in \mathbb{Z}_{p}$ is the value that minimizes $\lvert p^{k-1} \cdot \nu - \nu' \rvert$. 
            Then, $MI$
            computes $\mathsf{bin}((\mathbf{B}_{ver+1}-\mathbf{B}_{ver}) \cdot \mathbf{x})
            = H_2(k_t)\oplus \mathsf{upi}_{\mathbf{x},ver+1,3}$, 
            $\mathbf{f}_{\mathbf{x},\mathsf{id},ver+1}=
            \mathbf{f}_{\mathbf{x},\mathsf{id},ver} + (\mathbf{B}_{ver+1}-\mathbf{B}_{ver}) \cdot \mathbf{x} \in \mathbb{Z}^m$,
            and obtains the new version function key $\mathsf{fsk}_{\mathbf{x},\mathsf{id},ver+1}=
            (\mathbf{x},\mathbf{f}_{\mathbf{x},\mathsf{id},ver+1})$. 
    \medskip
}}
\caption{The concrete construction of lattice-based IPFE-FR}\label{figure:CC}
\end{figure*}

$\mathbf{Correctness.}$ 
The correctness of $\mathbf{Dec}$ can be shown as follows.
In $\mathbf{Dec}$, 
we have 
$\theta' = {\mathbf{id}^{\top}}\cdot pd_{\mathbf{x},2} - 
                {\mathbf{u}_{\mathsf{id}}^{\top}}\cdot pd_{\mathbf{x},1} 
                ={\mathbf{id}^{\top}}\cdot (\mathbf{F}^{\top} \cdot \mathbf{s}_1 + \mathbf{e}_{2} + p^{k-1} \cdot \mathbf{t}_{\mathbf{x}})-
                {(\mathbf{D}\cdot \mathbf{id})}^{\top} \cdot (\mathbf{V}^{\top} \cdot \mathbf{s}_1 + \mathbf{e}_{1})
                = p^{k-1} \cdot (\langle \mathbf{id},\mathbf{t}_{\mathbf{x}} \rangle\mathrm{\ mod\ } p)+
                {\mathbf{id}}^{\top}\cdot \mathbf{e}_2 -
                {\mathbf{id}}^{\top} \mathbf{D}^{\top} \cdot \mathbf{e}_{1} \mathrm{\ mod\ } q, $
relying on the fact that $\mathbf{F} = \mathbf{V}\cdot \mathbf{D} \in \mathbb{Z}_q^{n\times l_2}$. 

The above step is correct as long as 
$
\lvert
    {{\mathbf{id}^{\top}}\cdot \mathbf{e}_2
    - \mathbf{id}}^{\top} {\mathbf{D}}^{\top} \cdot \mathbf{e}_{1}
\rvert \leq \frac{p^{k-1}}{4}.$
Since $\mathbf{D} \gets \mathcal{D}_{\mathbb{Z}^{m\times l_2},\rho_2}$, 
$\mathbf{e}_{1} \gets \mathcal{D}_{\mathbb{Z}^{m},\sigma_1}$, 
$\mathbf{e}_{2} \gets \mathcal{D}_{\mathbb{Z}^{l_2},\sigma_1}$ and $\mathbf{id} \in \mathbb{Z}^{l_2}_p$, 
we know that 
$\left\lVert {\mathbf{D}^{\top}} \right\rVert \leq \sqrt{l_2} \cdot \rho_2$, 
$\left\lVert \mathbf{e}_{1} \right\rVert \leq \sqrt{m} \cdot \sigma_1$, 
$ \left\lVert\mathbf{e}_{2} \right\rVert\leq \sqrt{l_2} \cdot \sigma_1$
and $\left\lVert \mathbf{id} \right\rVert \leq \sqrt{l_2} p $
by Lemma \ref{Bound Gaussian}. 
%as long as $\rho_2 \leq \omega(\sqrt{\mathrm{log\ }n})$ and $\sigma \leq \omega(\sqrt{\mathrm{log\ }n})$. 

Then, we have $\lvert \mathbf{id}^{\top} \cdot \mathbf{e}_2 \rvert \leq l_{2} \cdot p \sigma_1$
and $ \lvert \mathbf{id}^{\top} {\mathbf{D}}^{\top} \cdot \mathbf{e}_{1} \rvert 
\leq l_{2} \sqrt{m} \cdot p \rho_2 \sigma_1$. 
To obtain the correct inner product 
$\langle \mathbf{id},\mathbf{t}_{\mathbf{x}}\rangle$ mod $p$, 
the inequation 
$\lvert
    {{\mathbf{id}^{\top}}\cdot \mathbf{e}_2
    - \mathbf{id}}^{\top} {\mathbf{D}}^{\top} \cdot \mathbf{e}_{1}
\rvert \leq
l_{2} p \sigma_1 + l_{2} \sqrt{m} p \rho_2 \sigma_1
\leq \frac{p^{k-1}}{4}$ needs to be satisfied. 

Furthermore, we have 
$\mathbf{f}_{x,ver+1} = \mathbf{Z}_{ver} \cdot \mathbf{x} - (\overbrace{ \theta,\cdots, \theta}^{m})^{\top} +
(\overbrace{ v_{\mathbf{x},\mathsf{id}},\cdots, v_{\mathbf{x},\mathsf{id}}}^{m})^{\top} 
= \mathbf{Z}_{ver} \cdot \mathbf{x}$ 
and
$\mu' = \mathbf{x}^{\top} \cdot \mathbf{c}_{2} - ({\mathbf{Z}_{ver} \cdot \mathbf{x}})^{\top}\cdot \mathbf{c}_{1} 
            = \mathbf{x}^{\top} \cdot ((\mathbf{C}+\mathbf{U}_{ver})^{\top} \cdot\mathbf{s}_2 + \mathbf{e}_4 
            + \left\lfloor \frac{p}{K} \right\rfloor \cdot \mathbf{y})-
            (\mathbf{Z}_{ver} \cdot \mathbf{x})^{\top}\cdot (\mathbf{A}^{\top} \cdot \mathbf{s}_2 +\mathbf{e}_{3})
            = \left\lfloor \frac{p}{K} \right\rfloor \cdot \langle \mathbf{x},\mathbf{y} \rangle + 
            \mathbf{x}^{\top} \cdot \mathbf{e}_{4}
            - \mathbf{x}^{\top} {\mathbf{Z}_{ver}}^{\top} \cdot \mathbf{e}_{3} \mathrm{\ mod\ } p, $
where we use the fact that $\mathbf{C}+\mathbf{U}_{ver} = \mathbf{A}\cdot \mathbf{Z}_{ver} \in \mathbb{Z}_p^{n\times l_1}$. 

Since $\mathbf{Z}_{ver} \in \mathbb{Z}^{m\times l_{1}}$ is sampled from $\mathsf{SamplePre}$ algorithm with parameter $\rho_1$, 
$\mathbf{e}_{3} \gets \mathcal{D}_{\mathbb{Z}^{m},\sigma_2}$, 
$\mathbf{e}_{4} \gets \mathcal{D}_{\mathbb{Z}^{l_1},\sigma_2}$ and $\mathbf{x} \in \mathcal{X}$, 
we know that 
$\left\lVert \mathbf{Z}_{ver}^{\top} \right\rVert \leq \sqrt{l_1} \cdot \rho_1$, 
$ \left\lVert \mathbf{e}_{3} \right\rVert \leq \sqrt{m} \cdot \sigma_2$ 
, $\left\lVert \mathbf{e}_{4} \right\rVert \leq \sqrt{l_1} \cdot \sigma_2$  
and $\left\lVert \mathbf{x} \right\rVert \leq \sqrt{l_1} X$
by Lemma \ref{Bound Gaussian}. 
%as long as $\rho_1 \leq \omega(\sqrt{\mathrm{log\ }n})$ and $\sigma \leq \omega(\sqrt{\mathrm{log\ }n})$. 

Then, we have $\lvert \mathbf{x}^{\top} \cdot \mathbf{e}_4 \rvert \leq l_{1} \cdot X \sigma_2$
and $ \lvert \mathbf{x}^{\top} {\mathbf{Z}_{ver}}^{\top} \cdot \mathbf{e}_{3} \rvert 
\leq l_{1} \sqrt{m} \cdot X \rho_1 \sigma_2$. 
%as long as  $\rho_1 \leq \omega(\sqrt{\mathrm{log\ }n})$ 
%and $\sigma \leq \omega(\sqrt{\mathrm{log\ }n})$.  %and $\tau \leq \omega(\sqrt{\mathrm{log\ }n})$. 
To ensure the correctness of the computed inner product  
$\langle \mathbf{x},\mathbf{y}\rangle$, 
the inequation 
$\lvert
        \mathbf{x}^{\top} \cdot \mathbf{e}_{4} 
        - \mathbf{x}^{\top} {\mathbf{Z}_{ver}}^{\top} \cdot \mathbf{e}_{3}
    \rvert \leq
l_{1} X \sigma_2 + l_{1} \sqrt{m} X \rho_1 \sigma_2
\leq \frac{p}{4K}$ 
needs to be satisfied.

The correctness of fine-grained revocation can be shown as follows. 
In $\mathbf{UptKeyGen}$ and $\mathbf{CTUpdate}$, 
we have 
$(\mathsf{BitD}_{p}(\mathbf{c}_{ver,1}^{\top}),\mathbf{c}_{ver,2}^{\top}) \cdot \mathsf{uptk}_{ver+1}
    =(\mathsf{BitD}_{p}(\mathbf{c}_{ver,1}^{\top}),\mathbf{c}_{ver,2}^{\top}) \cdot
    \begin{bmatrix} 
        \mathbf{E}_{1}\mathbf{A}+\mathbf{E}_{2} & \ \ \mathbf{E}_{1} \cdot (\mathbf{C}+\mathbf{U}_{ver+1}) + \mathbf{E}_{3}-\mathsf{PowerT}_{p}(\mathbf{Z}_{ver})\\ 
        \mathbf{0}_{l\times m} & \ \ \mathbf{I}_{l\times l} \\
    \end{bmatrix}\\
    = (
        \mathsf{BitD}_{p}(\mathbf{c}_{ver,1}^{\top}) \cdot (\mathbf{E}_{1}\mathbf{A}+\mathbf{E}_{2}), 
        \mathsf{BitD}_{p}(\mathbf{c}_{ver,1}^{\top}) \cdot 
    (\mathbf{E}_{1} \cdot (\mathbf{C}+\mathbf{U}_{ver+1}) + \mathbf{E}_{3}-\mathsf{PowerT}_{p}(\mathbf{Z}_{ver})) + \mathbf{c}_{ver,2}^{\top})
    %= (
    %    \mathsf{BitD}_{p}(\mathbf{c}_{ver,1}^{\top}) \cdot (\mathbf{E}_{1}\mathbf{A}+\mathbf{E}_{2}),
    %    \mathsf{BitD}_{p}(\mathbf{c}_{ver,1}^{\top}) \cdot 
    %(\mathbf{E}_{1}\cdot (\mathbf{C}+\mathbf{U}_{ver+1})+ \mathbf{E}_{3})-\mathsf{BitD}_{p}(\mathbf{c}_{ver,1}^{\top}) \cdot
    %\mathsf{PowerT}_{p}(\mathbf{Z}_{ver}) + \mathbf{c}_{ver,2}^{\top})
   % = (
    %    \mathsf{BitD}_{p}(\mathbf{c}_{ver,1}^{\top}) \cdot (\mathbf{E}_{1}\mathbf{A}+\mathbf{E}_{2}),
    %    \mathsf{BitD}_{p}(\mathbf{c}_{ver,1}^{\top}) \cdot  
    %(\mathbf{E}_{1}\cdot (\mathbf{C}+\mathbf{U}_{ver+1})+ \mathbf{E}_{3})-
    %\mathbf{c}_{ver,1}^{\top} \cdot \mathbf{Z}_{ver} + \mathbf{c}_{ver,2}^{\top})
    = (
        \mathsf{BitD}_{p}(\mathbf{c}_{ver,1}^{\top}) \cdot (\mathbf{E}_{1}\mathbf{A}+\mathbf{E}_{2}),
        \mathsf{BitD}_{p}(\mathbf{c}_{ver,1}^{\top}) \cdot  
        (\mathbf{E}_{1}\cdot (\mathbf{C}+\mathbf{U}_{ver+1})+ \mathbf{E}_{3})- 
        \mathbf{e}^{\top}_{3} \cdot \mathbf{Z}_{ver} + \mathbf{e}_{4}^{\top} + \left\lfloor \frac{p}{K} \right\rfloor \cdot \mathbf{y}^{\top}).$

Hence, we have 
$\mathbf{c}'_{ver,1} = \mathbf{A}^{\top} \mathbf{s}'+\mathbf{e}'_{3}, 
\mathbf{c}'_{ver,2} = (\mathbf{C}+\mathbf{U}_{ver+1})^{\top} \mathbf{s}' + \mathbf{e}'_{4} 
+ \left\lfloor \frac{p}{K} \right\rfloor \cdot \mathbf{y}$ is valid encrypted EHR data,  
where 
$\mathbf{s}' = (\mathsf{BitD}_{p}(\mathbf{c}_{ver,1}^{\top})\mathbf{E}_{1})^{\top}$, 
$\mathbf{e}'_{3} = (\mathsf{BitD}_{p}(\mathbf{c}^{\top}_{ver,1})\mathbf{E}_{2})^{\top}$ and
$\mathbf{e}'_{4} = (\mathsf{BitD}_{p}(\mathbf{c}^{\top}_{ver,1})\mathbf{E}_{3})^{\top} 
- \mathbf{Z}_{ver}^{\top} \cdot \mathbf{e}_{3} + \mathbf{e}_{4}$. 
Moreover, 
in $\mathbf{KeyUpdate}$, we have 
$\nu' = {\mathbf{id}}^{\top}\cdot \mathsf{upi}_{\mathbf{x},ver+1,2} -  
\mathbf{u}_{\mathsf{id}}^{\top} \cdot \mathsf{upi}_{\mathbf{x},ver+1,1} 
%= {\mathbf{id}^{\top}}\cdot (\mathbf{F}^{\top} \cdot \mathbf{s}_3 + \mathbf{e}_{6} + p^{k-1} \cdot (k_t \cdot %\mathbf{v}_{\mathcal{R}_{\mathbf{x}}}) ) -
%\mathbf{u}_{\mathsf{id}}^{\top} \cdot (\mathbf{V}^{\top} \cdot \mathbf{s}_3 + \mathbf{e}_{5})
%= {\mathbf{id}^{\top}}\cdot (\mathbf{F}^{\top} \cdot \mathbf{s}_3 + \mathbf{e}_{6} + p^{k-1} \cdot (k_t \cdot %\mathbf{v}_{\mathcal{R}_{\mathbf{x}}})) -
%{(\mathbf{D}\cdot \mathbf{id})}^{\top} \cdot (\mathbf{V}^{\top} \cdot \mathbf{s}_3 + \mathbf{e}_{5})
= p^{k-1} \cdot (\langle \mathbf{id},k_t \cdot \mathbf{v}_{\mathcal{R}_{\mathbf{x}}} \rangle\mathrm{\ mod\ } p) +
{\mathbf{id}}^{\top}\cdot \mathbf{e}_6 -
{\mathbf{id}}^{\top} {\mathbf{Z}}^{\top} \cdot \mathbf{e}_{5} \mathrm{\ mod\ } q$. 
Then, we can get 
$
\frac{\nu}{\langle \mathbf{id}_i, \mathbf{v}_{\mathcal{R}_\mathbf{x}} \rangle} 
= \frac{\langle \mathbf{id},k_t \cdot \mathbf{v}_{\mathcal{R}_{\mathbf{x}}} \rangle}
{\langle \mathbf{id}, \mathbf{v}_{\mathcal{R}_\mathbf{x}} \rangle} 
%= \frac{k_t \cdot \langle \mathbf{id}, \mathbf{v}_{\mathcal{R}_{\mathbf{x}}} \rangle}
%{\langle \mathbf{id}, \mathbf{v}_{\mathcal{R}_\mathbf{x}} \rangle}
=k_t \mathrm{\ mod\ } p$ 
and 
$H_2(k_t)\oplus \mathsf{upi}_{\mathbf{x},ver+1,3}
%= H_2(k_t)\oplus (H_2(k_t) \oplus \mathsf{bin}( (\mathbf{B}_{ver+1}-\mathbf{B}_{ver}) \cdot \mathbf{x}))
=\mathsf{bin}( (\mathbf{B}_{ver+1}-\mathbf{B}_{ver}) \cdot \mathbf{x})$. 
Furthermore, we have
$\mathbf{f}_{\mathbf{x},\mathsf{id},ver+1}=
\mathbf{f}_{\mathbf{x},\mathsf{id},ver} + (\mathbf{B}_{ver+1}-\mathbf{B}_{ver}) \cdot \mathbf{x}
%=\mathbf{Z}_{ver} \cdot \mathbf{x} 
%- (\overbrace{ v_{\mathbf{x},\mathsf{id}},\cdots, v_{\mathbf{x},\mathsf{id}}}^{m})^{\top}
%+(\mathbf{B}_{ver+1}-\mathbf{B}_{ver}) \cdot \mathbf{x}
= (\mathbf{Z}_{ver}+\mathbf{B}_{ver+1}-\mathbf{B}_{ver}) \cdot \mathbf{x} 
- (\overbrace{ v_{\mathbf{x},\mathsf{id}},\cdots, v_{\mathbf{x},\mathsf{id}}}^{m})^{\top}$ is a valid function key. 

$\mathbf{Setting\ Parameters}.$
We select the parameters $n, m, q, \sigma_1, \sigma_2, \rho_1, \rho_2$ as described in the ALS-IPFE in Section \ref{ALS}. 
We subsequently refine the parameters to maintain our scheme's correctness and security, without affecting the ALS-IPFE's correctness and security.
The parameters of our scheme can be set as follows:
$l_{2} p \sigma_1 + l_{2} \sqrt{m} p \rho_2 \sigma_1
    \leq \frac{p^{k-1}}{4}$,
    $l_{1} X \sigma_2 + l_{1} \sqrt{m} X \rho_1 \sigma_2
    \leq \frac{p}{4K}$, 
    $\alpha p >2\sqrt{n}$,
    $m \geq 2n \mathrm{log}\ p $,
    $\rho_1,\rho_2 \geq \omega(\mathrm{log\ }n)$,
    $\sigma_1 = 2C\alpha q (\sqrt{m}+\sqrt{n}+\sqrt{l_2})$,
    $\sigma_2 = 2C\alpha p (\sqrt{m}+\sqrt{n}+\sqrt{l_1})$,
where $C$ denoting the constant specified in the Lemma \ref{BoundGauss}. 

\section{Performance} \label{section:seventh}
\subsection{Comparison}
This section presents a cost comparison between our IPFE-FR scheme and the related scheme.
Since the schemes \cite{luo2022generic,luo2024fully} support direct revocation, 
we compare the communication cost, storage cost and computational cost with the indirect revocable IPFE scheme \cite{zhu2025revocable}, 
which is the most similar scheme to our IPFE-FR scheme. 

We present a comparison of the communication cost, storage cost and computational cost between our scheme and the indirect revocable IPFE scheme \cite{zhu2025revocable} in Tables \ref{tab:communication_cost}, \ref{tab:storge_cost} and \ref{tab:computation_cost}, respectively. 
In Tables \ref{tab:communication_cost} and \ref{tab:storge_cost}, $l$, $l_d$, and $l_{id}$ represent the dimension of the vectors, the maximum depth of identity, and the depth of identity in \cite{zhu2025revocable}, respectively. For a bilinear map $e: \mathbb{G} \times \mathbb{G} \to \mathbb{G}_T$, let $E_G$, $E_T$, and $E_p$ denote the length of one element in the groups $\mathbb{G}$, $\mathbb{G}_T$, and $\mathbb{Z}_p$, respectively.

In Table. \ref{tab:computation_cost}, the following notations are used: 
$b_T$, $m_T$ and $p_T$ are the time of executing the bilinear pairing operation, the multiplication operation 
and power operation on group $\mathbb{G}_T$, respectively.
$p_G$ and $m_G$ are the time of executing the power operation and the multiplication operation on group $\mathbb{G}$, respectively.
Moreover, $t_M$ is the time of executing the multiplication
operations between matrices, 
$t_G$ is the time of executing the trapdoor generation algorithm, 
$t_S$ is the time of executing the sample preimage function algorithm, 
$t_B$ is time of executing the algorithm $\mathsf{BitD_p}$
and $t_P$ is time of executing the algorithm $\mathsf{PowerT_p}$. 

\begin{table*}[!ht] 
    \caption{The comparison of communication cost} \label{tab:communication_cost}
    \centering
    \normalsize
    \resizebox{0.9\textwidth}{!}{
    \begin{tabular}{|c|c|c|c|c|c|c|}
    \hline
    Scheme&$\mathbf{SystemSetup}$&$\mathbf{GroupSetup}$&$\mathbf{UKeyGen}$&\multicolumn{2}{c|}{$\mathbf{FKeyGen}$}&$\mathbf{Enc}$\\
    \hline
    \cite{zhu2025revocable}&$(l_d+4)E_G+E_T$&$-$&$-$&\multicolumn{2}{c|}{$lE_p + (l_d-l_{id}+2)E_G$}& $lE_T+2E_G$\\
    \hline
    IPFE-FR&$(m+mk+l_1)n\mathrm{log} p$&$(l_{1} + kl_2)n \mathrm{log}p$&$(m+l_2) \mathrm{log} p + m \mathrm{log} (l_2 \rho_2)$&\multicolumn{2}{c|}{$(m+l_1) \mathrm{log}X + m\mathrm{log} (l_1\rho_1) + (km+kl_2)\mathrm{log}p$}&$(m+l_1) \mathrm{log} p$\\
    \hline
    &$\mathbf{Dec}$&$\mathbf{GroupUpdate}$&$\mathbf{UptKeyGen}$& $\mathbf{CTUpdate}$&$\mathbf{FUpdate}$&$\mathbf{KeyUpdate}$\\
    \hline
    \cite{zhu2025revocable}&$-$&$-$&$-$&$-$&$-$&$-$\\
    \hline
    IPFE-FR&$-$&$(l_{1} + kl_2)n \mathrm{log}p$&$(m+l_1) (hm\mathrm{log}p +l_1)$&$-$&$(kl_2+km+l_2)\mathrm{log}p +t$&$-$\\
    \hline
    \end{tabular}}
\end{table*}

\begin{table*}[!ht] 
    \caption{The comparison of storage cost} \label{tab:storge_cost}
    \centering
    \normalsize
    \resizebox{0.9\textwidth}{!}{
    \begin{tabular}{|c|c|c|c|c|c|}
    \hline
    Scheme&master secret key&group secret key& user key &function key&update key\\
    \hline
    \cite{zhu2025revocable}& $E_G$&$-$&$-$&$ lE_p + (l_d-l_{id}+2)E_G$&$E_p$\\
    \hline
    IPFE-FR&$nm\mathrm{log} p$&$m (l_{1}\mathrm{log}\rho_1+ l_2\mathrm{log}\rho_2)$&$(m+l_2) \mathrm{log} p + m \mathrm{log} (l_2 \rho_2)$&$(m+l_1) \mathrm{log}X + m\mathrm{log} (l_1\rho_1)$&$(m+l_1) (hm\mathrm{log}p +l_1)$\\
    \hline
    \end{tabular}}
\end{table*}

\begin{table*}[!ht] 
    \caption{The comparison of computational cost} \label{tab:computation_cost}
    \centering
    \normalsize
    \resizebox{0.8\textwidth}{!}{
    \begin{tabular}{|c|c|c|c|c|c|c|}
    \hline
    Scheme&$\mathbf{SystemSetup}$&$\mathbf{GroupSetup}$&$\mathbf{UKeyGen}$&\multicolumn{2}{c|}{$\mathbf{FKeyGen}$}&$\mathbf{Enc}$\\
    \hline
    \cite{zhu2025revocable}&$2p_G+b_t$&$-$&$-$&\multicolumn{2}{c|}{$(l+2)p_G+lm_G$}&$(l+1)p_G+lm_G+2p_T+m_T$\\
    \hline
    IPFE-FR&$t_G$&$2t_M$&$t_M$&\multicolumn{2}{c|}{$t_S+4t_M$}&$2t_M$\\
    \hline
    &$\mathbf{Dec}$&$\mathbf{GroupUpdate}$&$\mathbf{UptKeyGen}$& $ \mathbf{CTUpdate}$&$\mathbf{FUpdate}$&$\mathbf{KeyUpdate}$\\
    \hline
    \cite{zhu2025revocable}&$p_T+2m_T+2b_T$&$-$&$-$&$2p_G+p_T$&$-$&$-$\\
    \hline
    IPFE-FR&$4t_M$&$t_M$&$t_S+2t_M+t_P$&$t_B+t_M$&$3t_M$&$3t_M$\\
    \hline
    \end{tabular}}
\end{table*}

\subsection{Implement} 
In this section, we implement and evaluate our lattice-based IPFE-FR scheme using SageMath, with the parameter set to 
$n=64$. 
All experiments are performed on a Dell Inspiron 13-5320 running Ubuntu 22.04.4 LTS, powered by a 12th Gen Intel Core i5-1240P processor, with 16 GB RAM and a 512 GB solid-state drive.
The computational cost of our IPFE-FR scheme is presented in Fig. \ref{figure:test_time}.

\begin{figure*}[!ht]
    \centering
        \subfloat[$\mathbf{SystemSetup}$]{\label{figure:test_SystemSetup}\includegraphics[width = 0.25\textwidth]{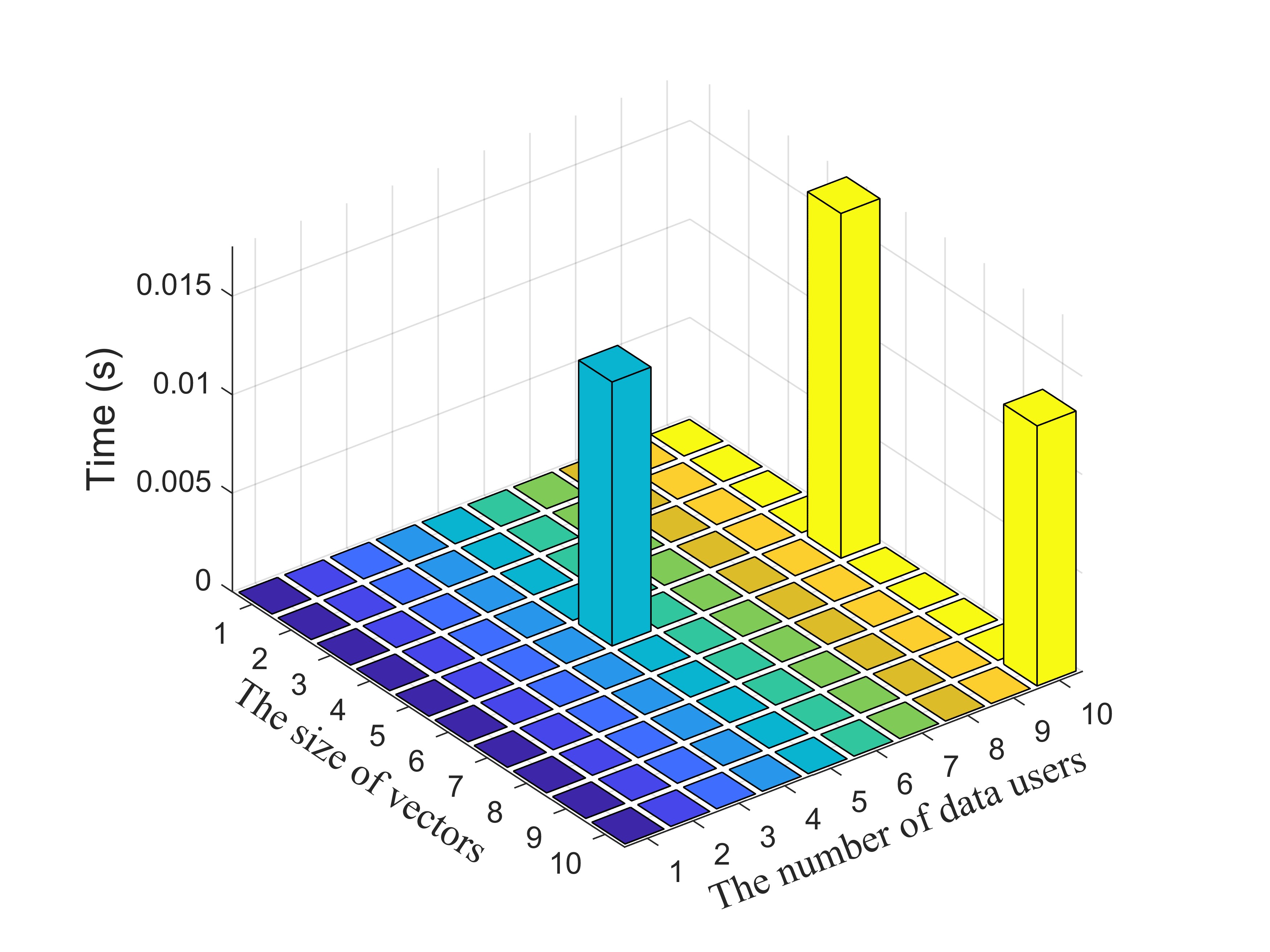}}
        \subfloat[$\mathbf{GroupSetup}$]{\label{figure:test_GroupSetup}\includegraphics[width = 0.25\textwidth]{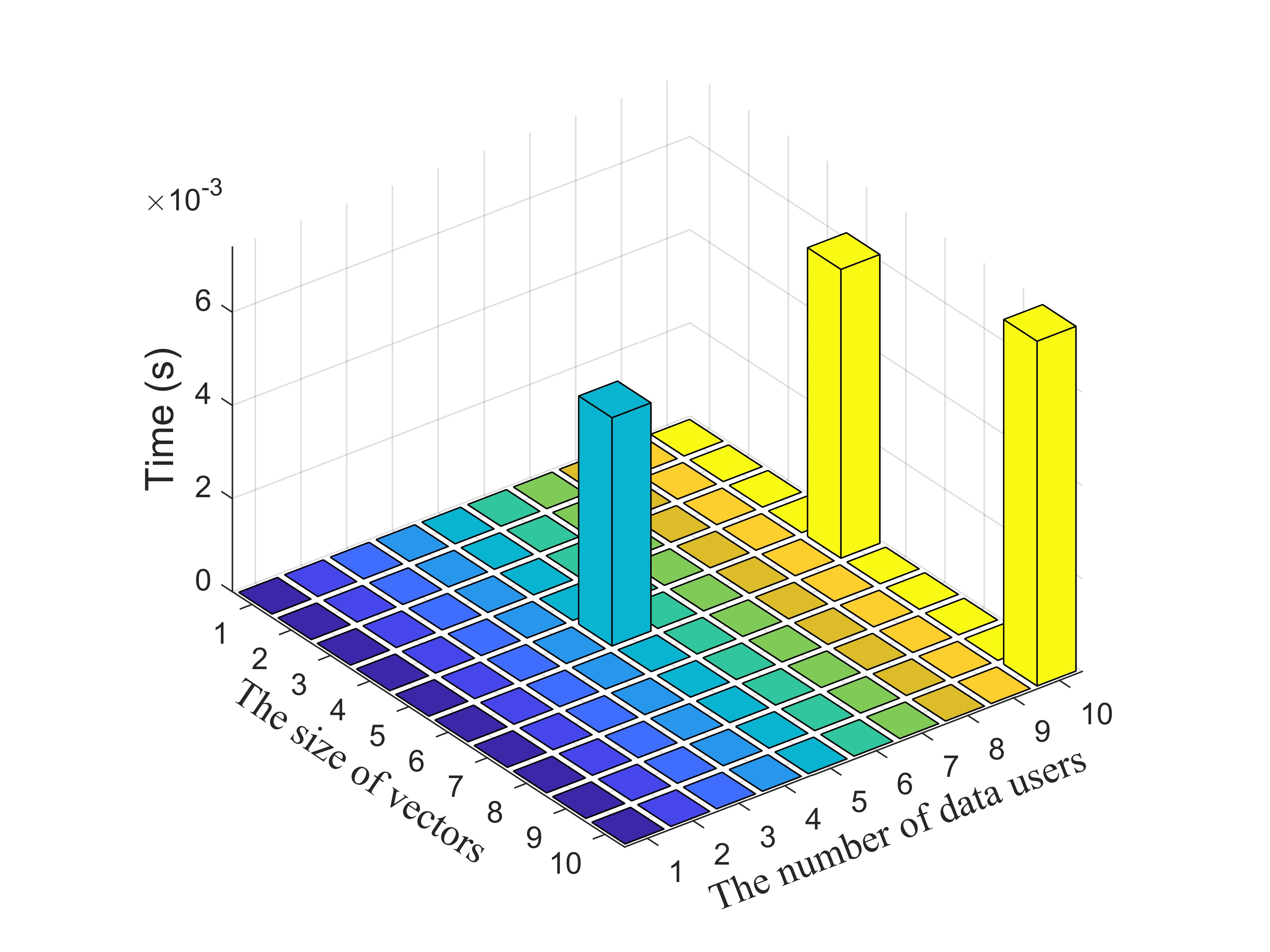}}
        \subfloat[$\mathbf{UKeyGen}$]{\label{figure:test_UKeyGen}\includegraphics[width = 0.25\textwidth]{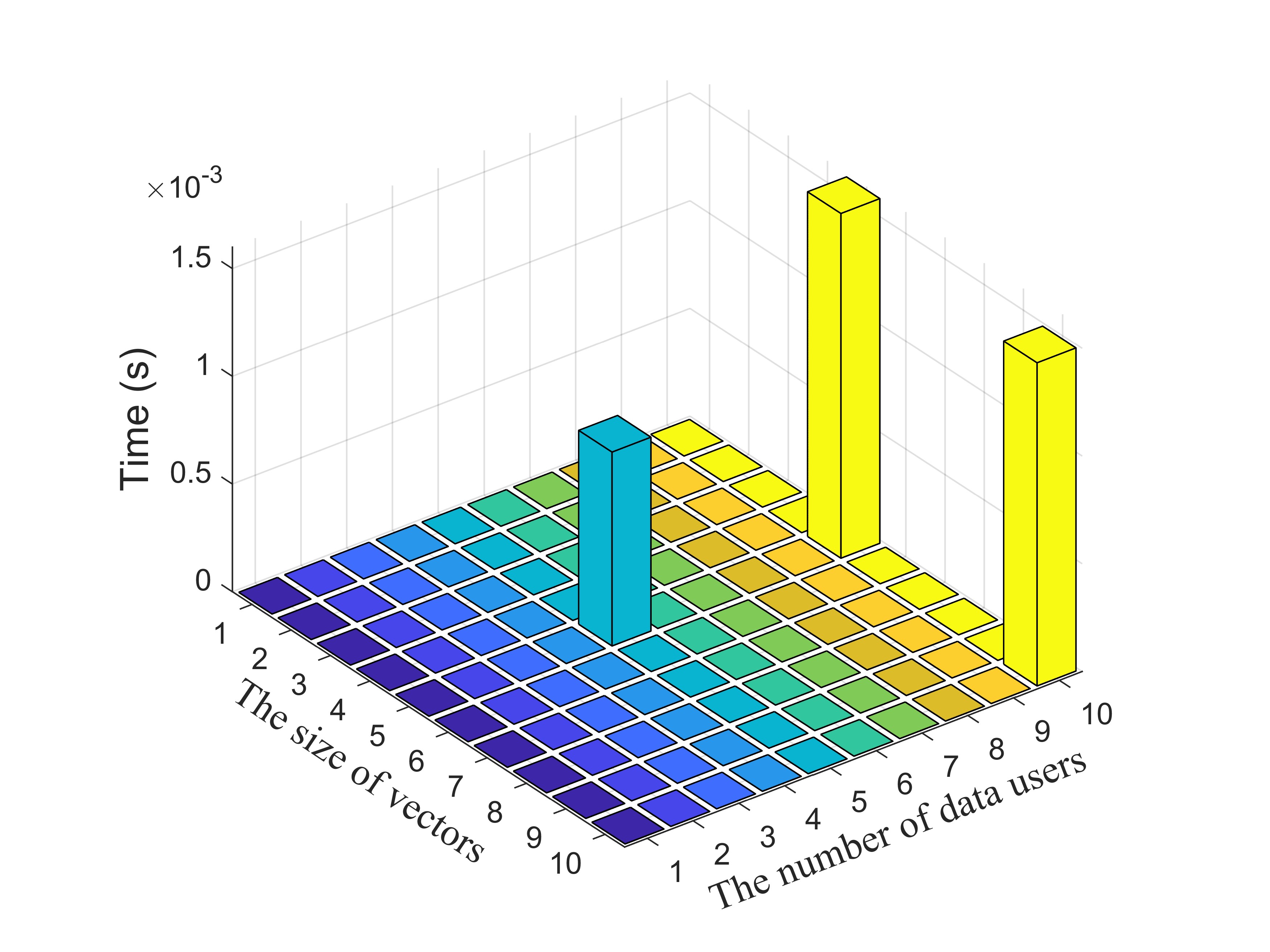}}
        \subfloat[$\mathbf{FKeyGen}$]{\label{figure:test_FKeyGen}\includegraphics[width = 0.25\textwidth]{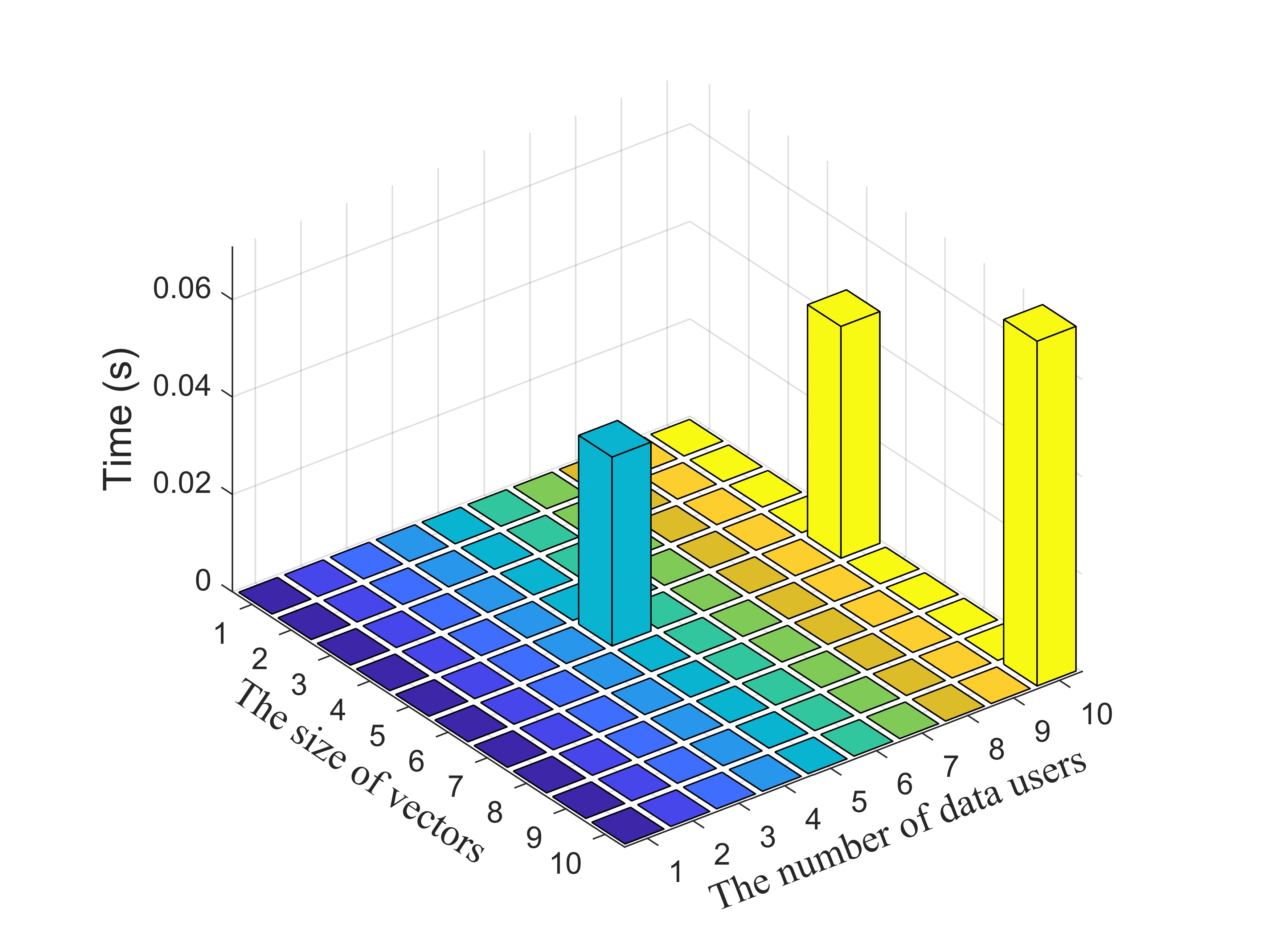}}
        
        \subfloat[$\mathbf{Enc}$]{\label{figure:test_Enc}\includegraphics[width = 0.25\textwidth]{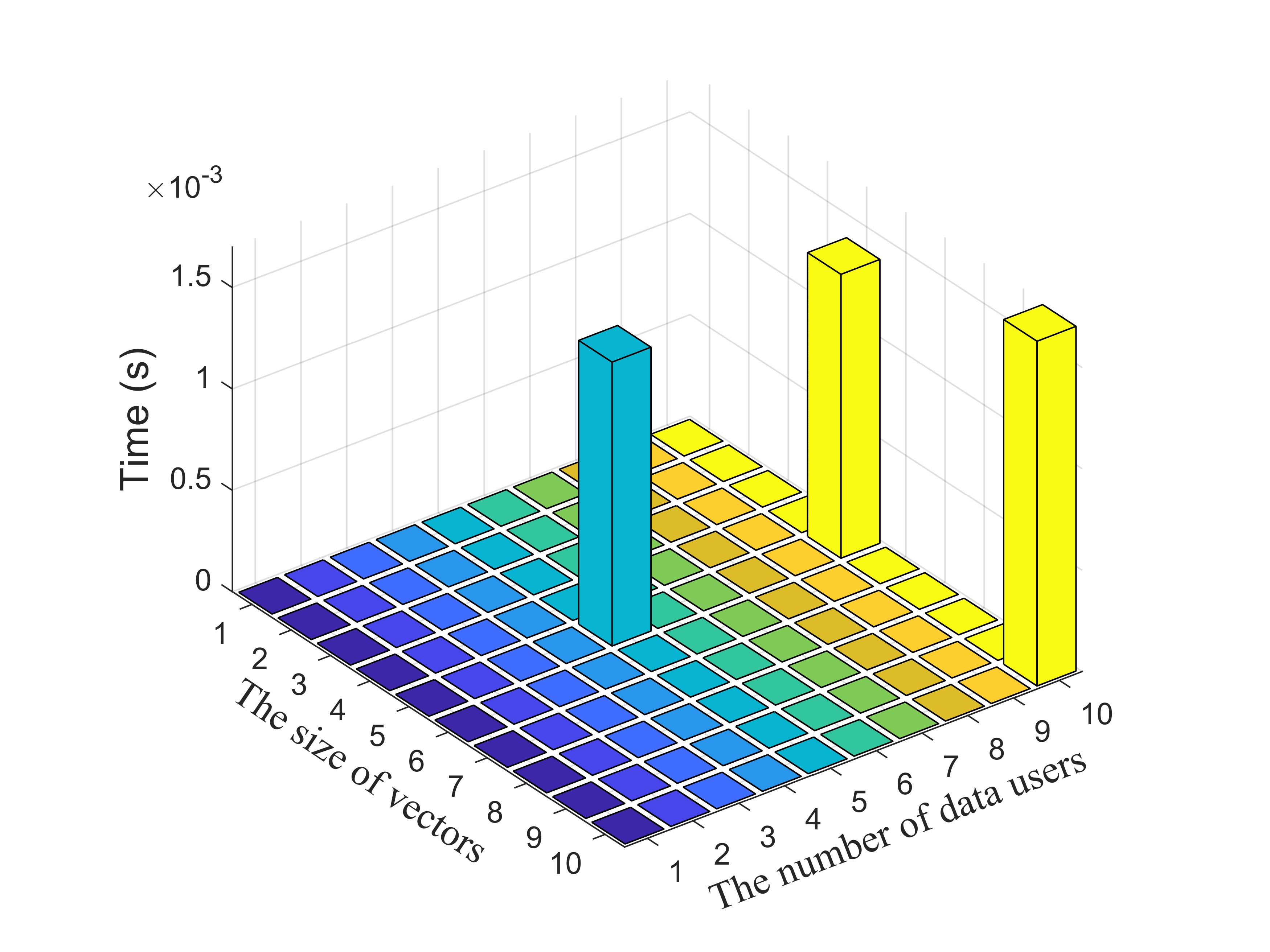}}
        \subfloat[$\mathbf{Dec}$]{\label{figure:test_Dec}\includegraphics[width = 0.25\textwidth]{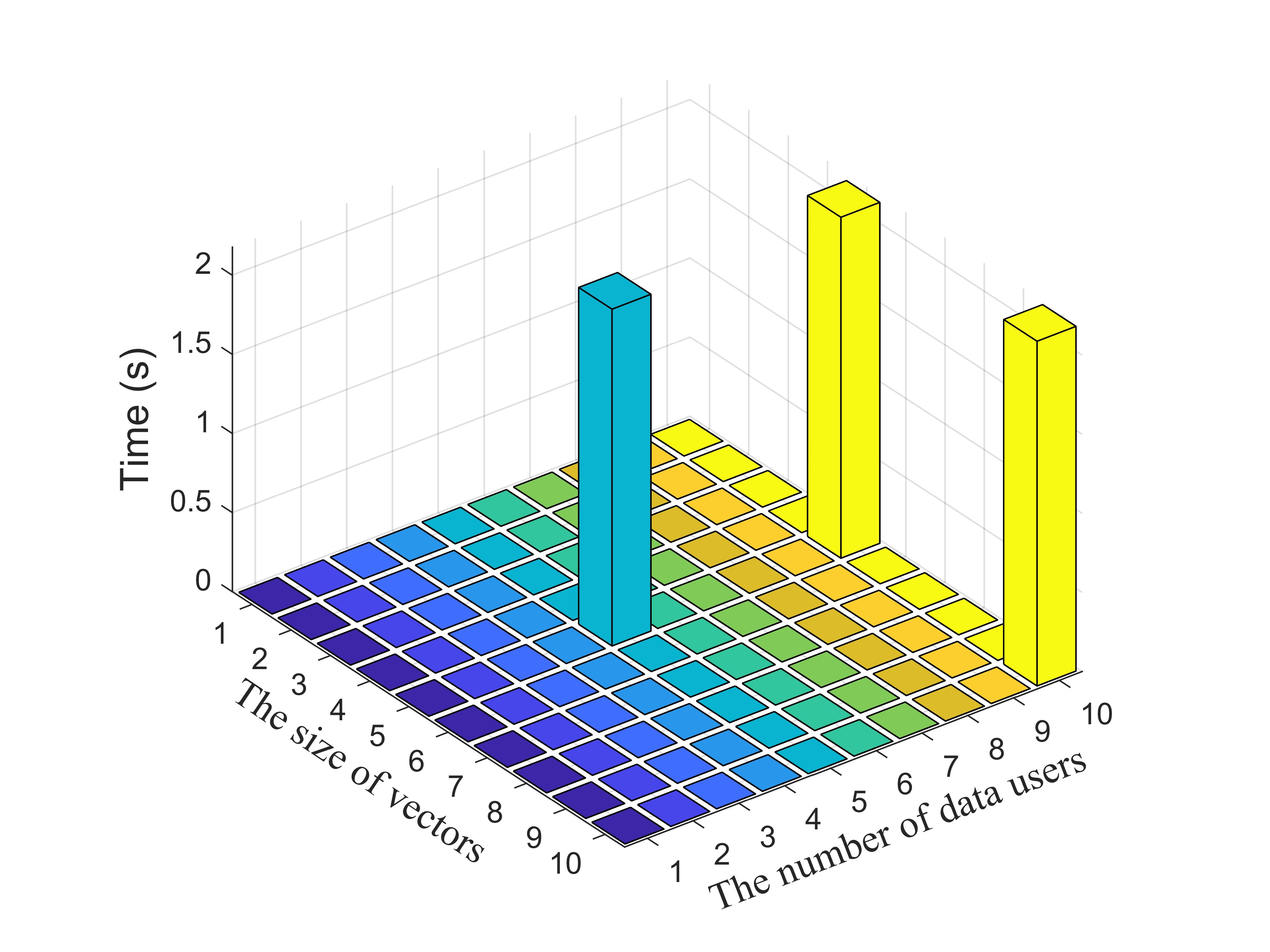}}
        \subfloat[$\mathbf{GroupUpdate}$]{\label{figure:test_GroupUpdate}\includegraphics[width = 0.25\textwidth]{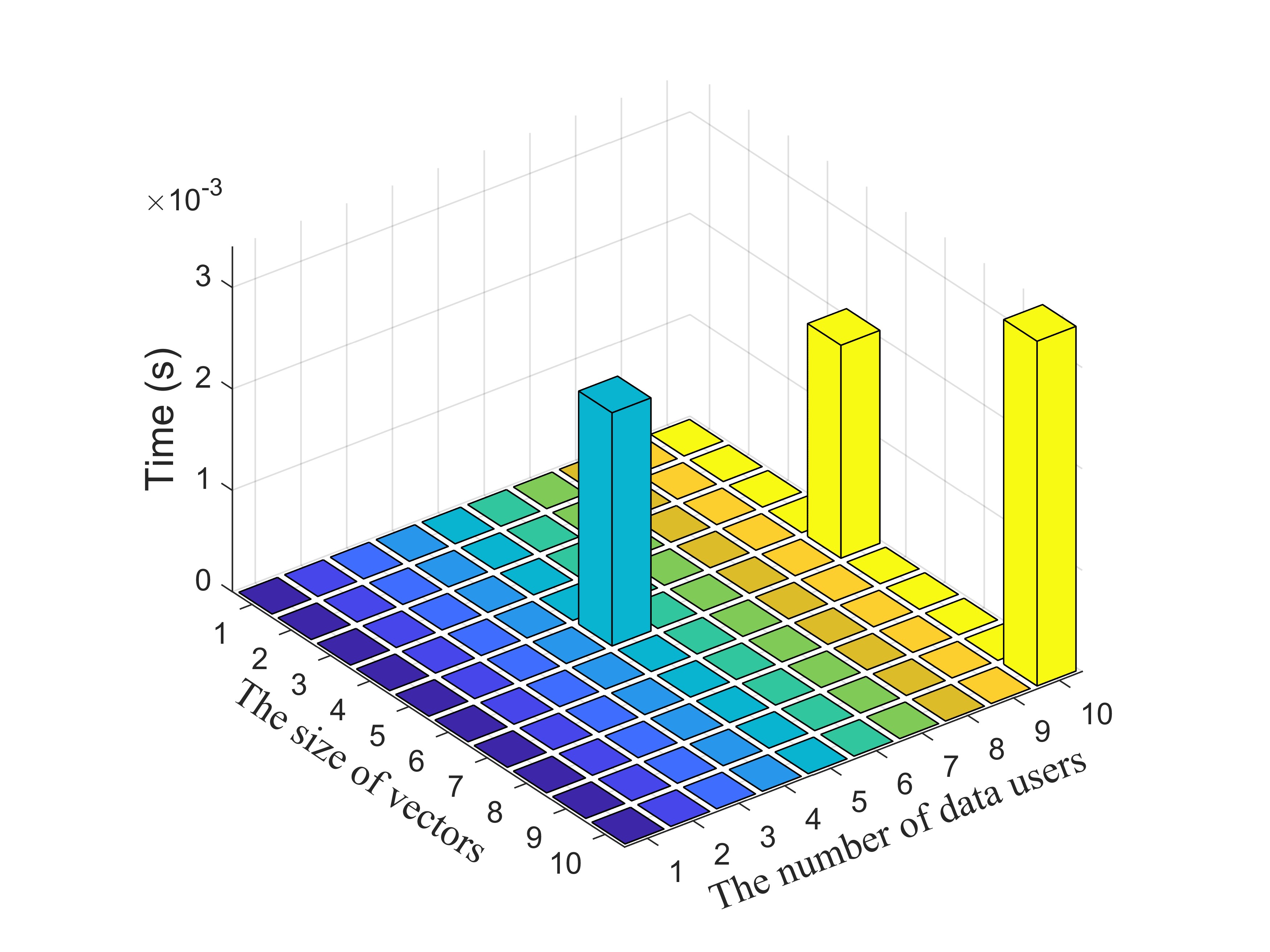}}
        \subfloat[$\mathbf{UptKeyGen}$]{\label{figure:test_RkGen}\includegraphics[width = 0.25\textwidth]{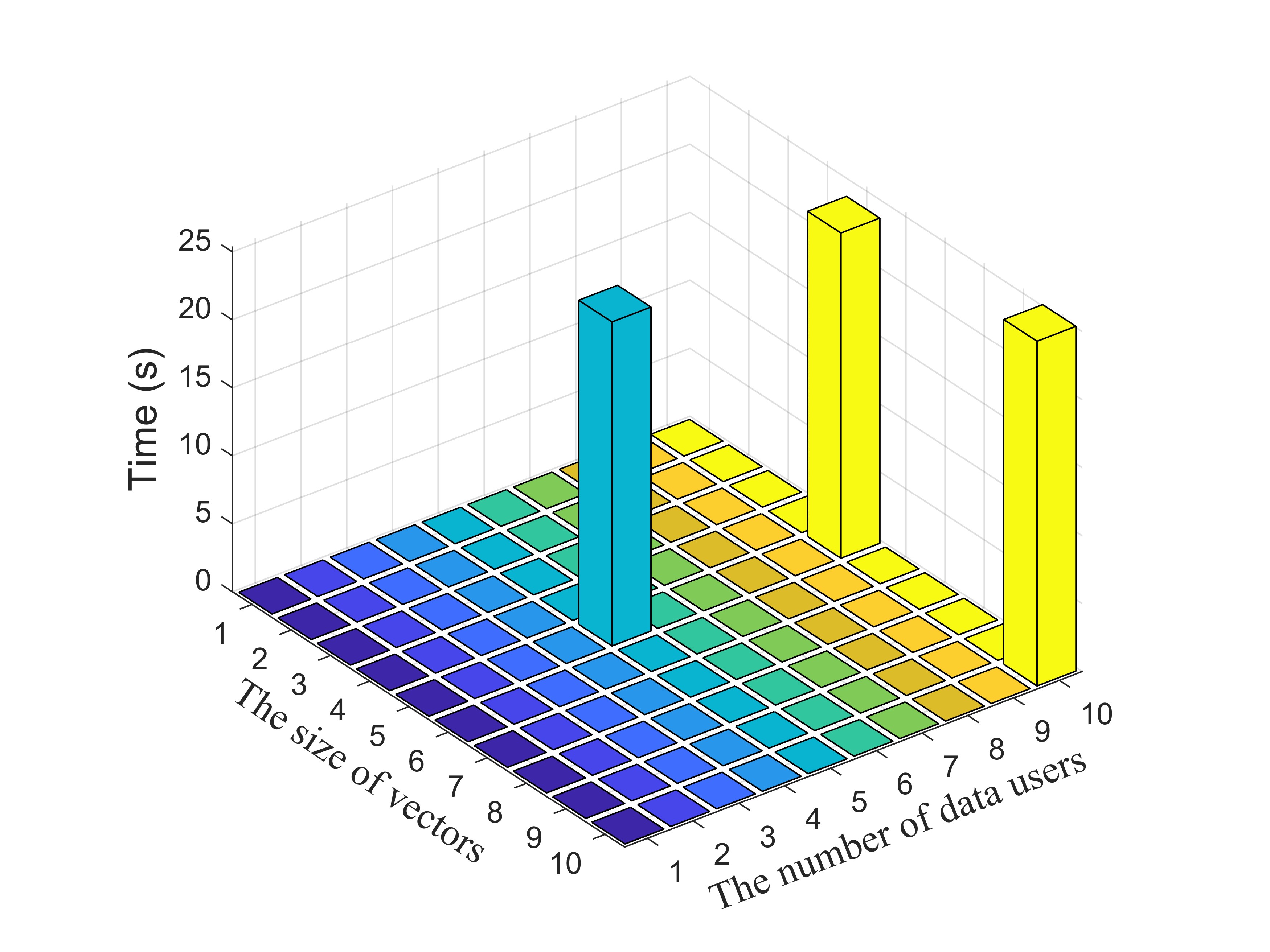}}
        
        \subfloat[$\mathbf{CTUpdate}$]{\label{figure:test_ReEnc}\includegraphics[width = 0.25\textwidth]{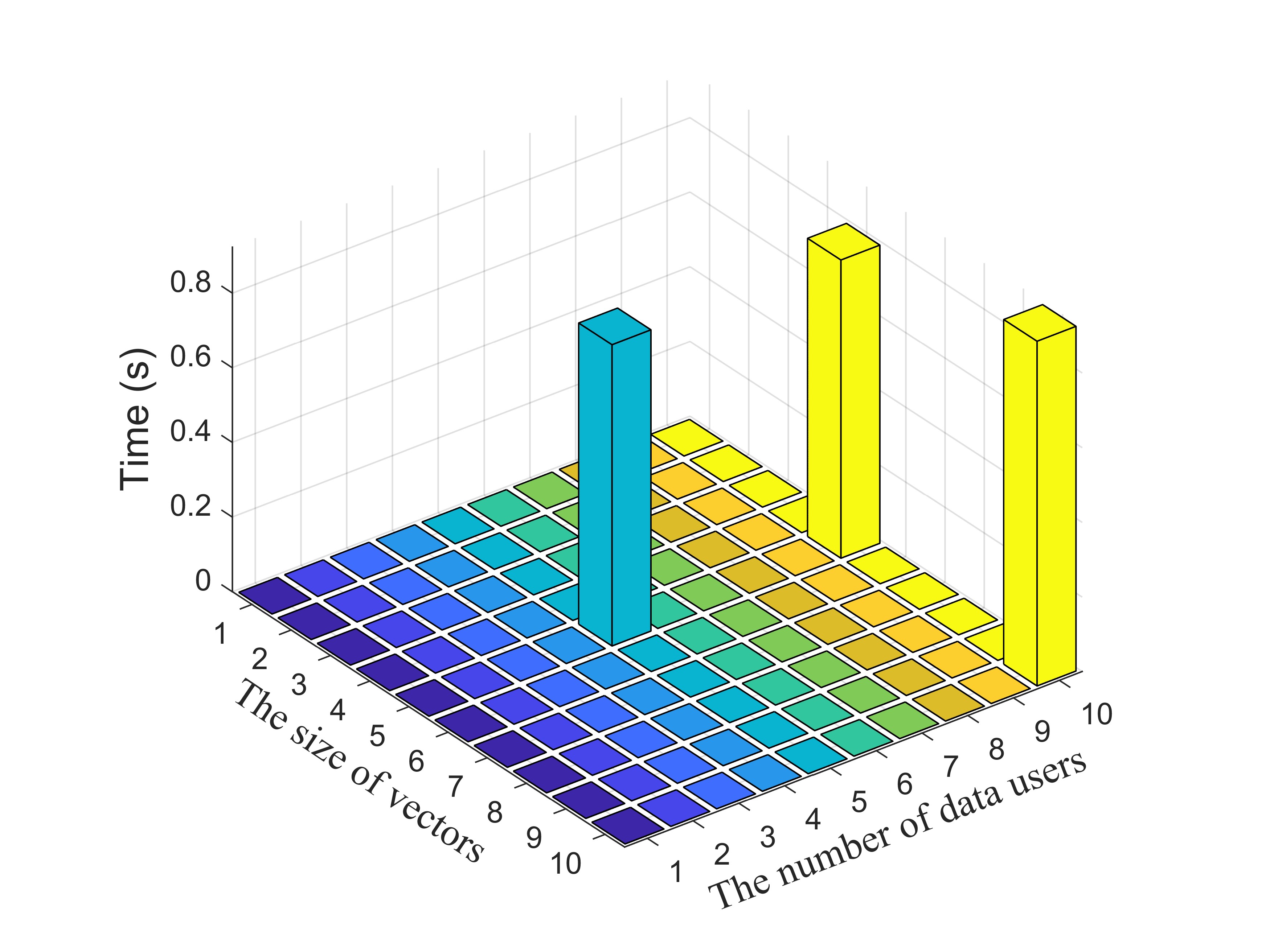}}
        \subfloat[$\mathbf{FUpdate}$]{\label{figure:test_FUpdate}\includegraphics[width = 0.25\textwidth]{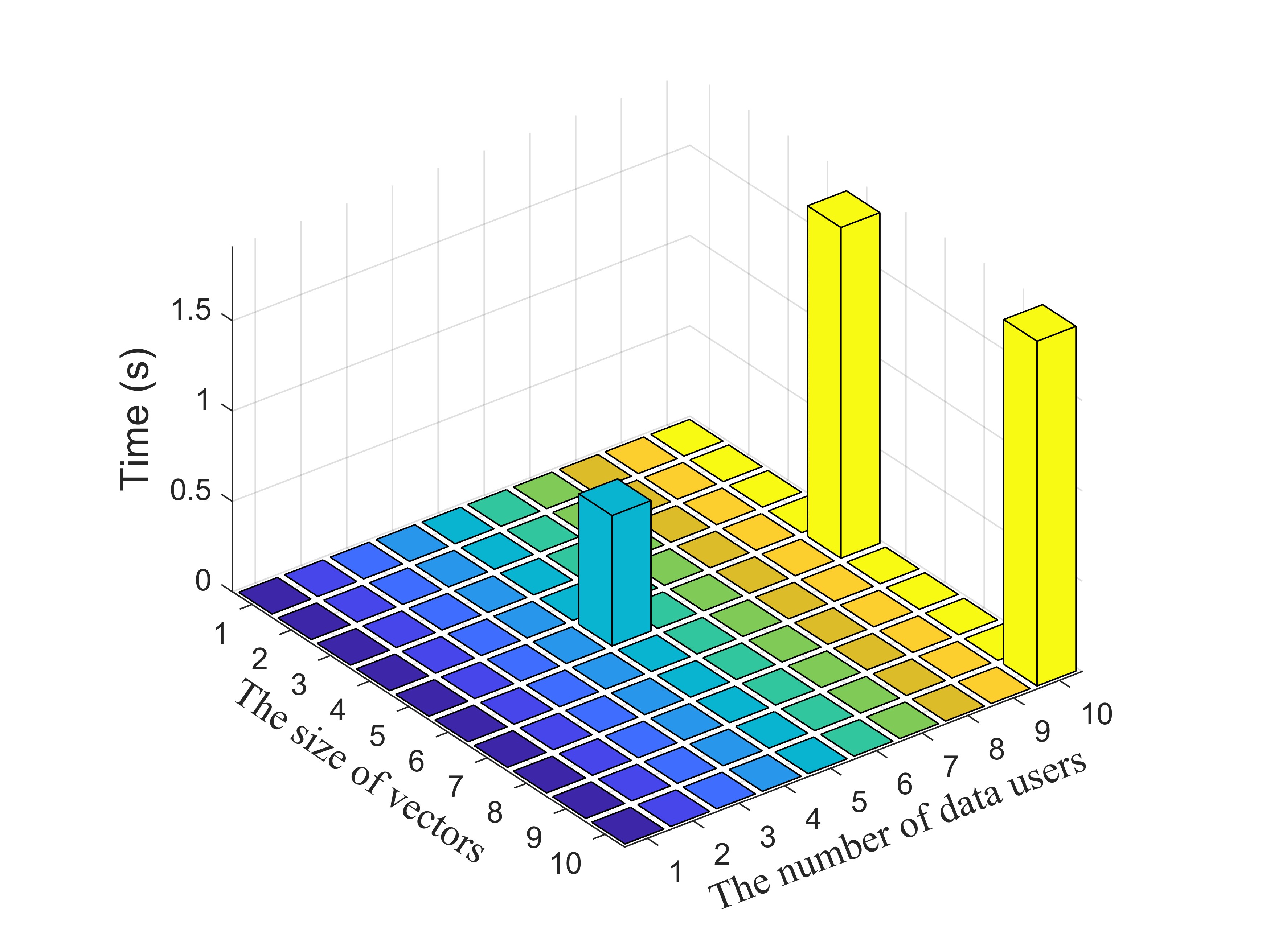}}
        \subfloat[$\mathbf{KeyUpdate}$]{\label{figure:test_KeyUpdate}\includegraphics[width = 0.25\textwidth]{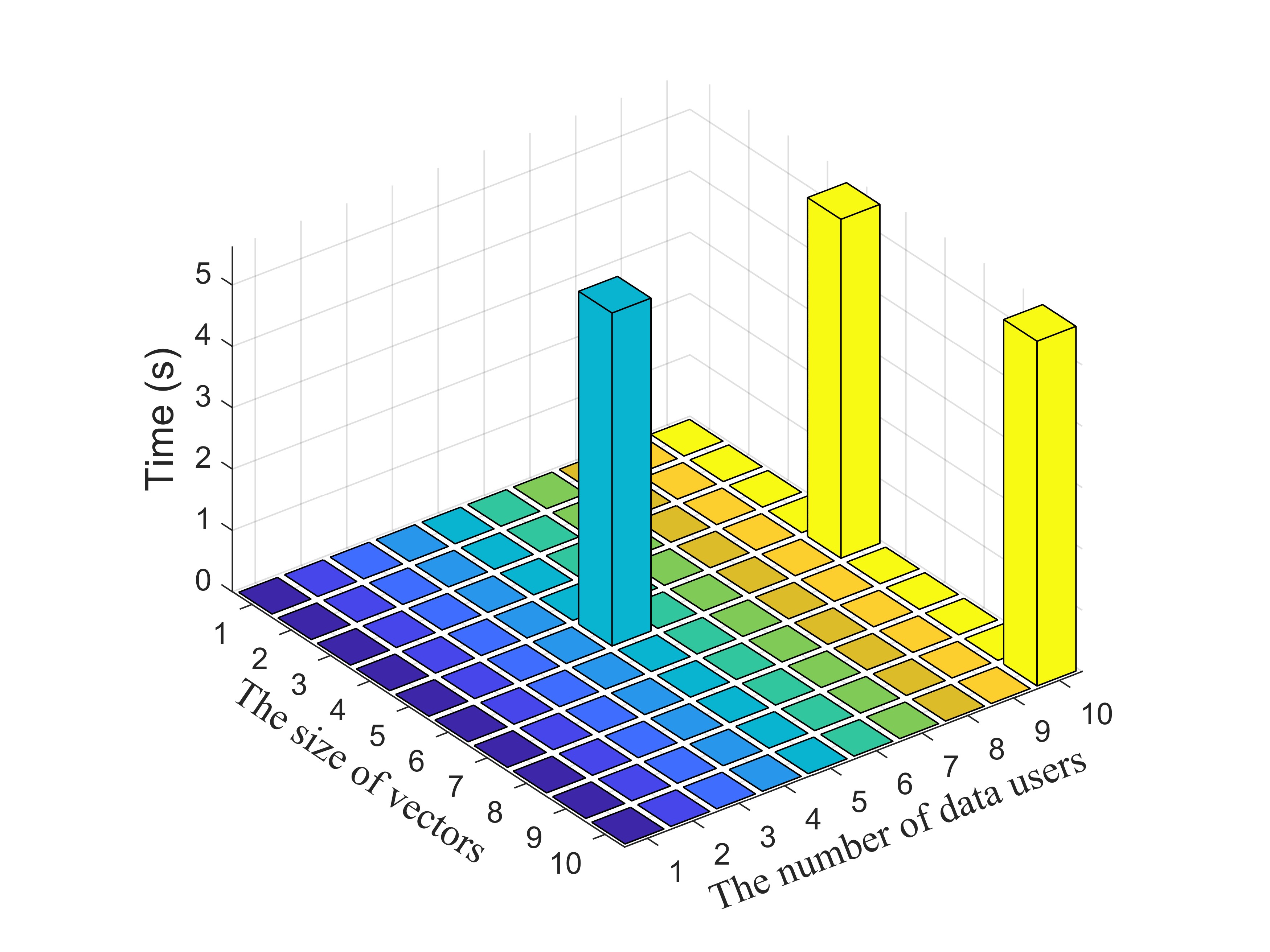}}
        \caption{The computational cost of algorithms in our IPFE-FR scheme}\label{figure:test_time}
\end{figure*}

In our implementation, each algorithm is evaluated using the following three parameter configurations: 
(1) $l_1=5,\mathcal{N}=5$; (2) $l_1=5,\mathcal{N}=10$; (3) $l_1=10,\mathcal{N}=10$, where 
$l_1$ and $\mathcal{N}$  denotes the dimension of the vectors and 
represents the upper bound number of medical institutions, respectively.

As shown in Fig. \subref*{figure:test_SystemSetup}, the $\mathbf{SystemSetup}$ algorithm takes 0.0134 s, 0.0175 s, and 0.0132 s in the three respective cases.
These results indicate that the computational cost of the $\mathbf{SystemSetup}$ algorithm remains constant across different parameter settings. 

Fig. \subref*{figure:test_GroupSetup} illustrates the computational cost of the $\mathbf{GroupSetup}$ algorithm, which takes 0.0049 s, 0.0062 s, and 0.0074 s in the three respective cases.
The results show that the computational cost of $\mathbf{GroupSetup}$ grows linearly with both $l_1$ and $\mathcal{N}$. 

Fig. \subref*{figure:test_UKeyGen} illustrates the computational cost of the $\mathbf{UKeyGen}$ algorithm. 
This algorithm requires 0.0009 s, 0.0016 s, and 0.0015 s to generate user keys for 
$\mathcal{N}$ medical institutions in the three cases, respectively.
This indicates that the computational cost of $\mathbf{UKeyGen}$ grows linearly with 
$\mathcal{N}$.

Fig. \subref*{figure:test_FKeyGen} presents the computational cost of the $\mathbf{FKeyGen}$ algorithm for 
$\mathcal{N}$ medical institutions, which takes 0.0388 s, 0.0476 s, and 0.0708 s in the three cases, respectively.
These results demonstrate that the computational cost of $\mathbf{FKeyGen}$ grows linearly with both 
$l_1$ and $\mathcal{N}$.

Fig. \subref*{figure:test_Enc} illustrates the computational cost of the $\mathbf{Enc}$ algorithm. The algorithm takes 0.0014 s, 0.0014 s and 0.0017 s 
in the three cases, respectively. 
The results show that the $\mathbf{Enc}$ algorithm's computational cost grows linearly with $l_1$.

Fig. \subref*{figure:test_Dec} shows the computational cost of the $\mathbf{Dec}$ algorithm, 
which takes 2.1285 s, 2.1547 s, and 2.1785 s in the three cases, respectively. The computational cost of $\mathbf{Dec}$ 
grows linearly with both $l_1$ and $\mathcal{N}$. 

As shown in Fig. \subref*{figure:test_GroupUpdate}, the $\mathbf{GroupUpdate}$ algorithm takes 0.0023 s, 0.0021 s, and 0.0034 s to update the system version in the three cases, respectively. 
The computational cost of $\mathbf{GroupUpdate}$ grows 
linearly with $l_1$. 

Fig. \subref*{figure:test_RkGen} shows the computational cost of the $\mathbf{UptKeyGen}$ algorithm, which takes 23.8351 s, 23.9209 s, and 25.3554 s in the three cases, respectively. 
The computational cost of the $\mathbf{UptKeyGen}$ algorithm grows linearly with $l_1$.

The runtime of the $\mathbf{CTUpdate}$ for updating ciphertext is illustrated in Fig. \subref*{figure:test_ReEnc}. 
In the three cases, the algorithm takes 0.8071 s, 0.7989 s, and 0.9235 s, respectively. 
This indicates that the computational cost of $\mathbf{CTUpdate}$ grows linearly with $l_1$.

Fig. \subref*{figure:test_FUpdate} shows the computational cost of the $\mathbf{FUpdate}$ algorithm for a single function. The algorithm takes 0.7219 s, 1.8303 s, and 1.908 s in the three cases, respectively. 
These results demonstrate that the computational cost of $\mathbf{FUpdate}$ grows linearly with both $l_1$ and $\mathcal{N}$.

Fig. \subref*{figure:test_KeyUpdate} shows the computational cost of the $\mathbf{KeyUpdate}$ algorithm, 
which takes 5.4285 s, 5.5255 s, and 5.6204 s in the three cases, respectively. 
This indicates that the computational cost of $\mathbf{KeyUpdate}$ grows linearly with both $l_1$ and $\mathcal{N}$.

\section{Security}\label{section:sixth}
The security proof of the IPFE-FR scheme is provided in the following section.
Moreover, we analyse the collusion resistance of our scheme. 

\begin{theorem}
Let $\lambda$ be the security parameter. 
The IPFE-FR scheme is 
$(t,\epsilon(\lambda))$-sIND-CPA secure 
if $\mathsf{PRF}$ is a secure pseudorandom function, ALS-IPFE is $(t,\epsilon'(\lambda))$-IND-CPA secure
and $H_{1}$, $H_{2}$ are random oracles, where 
$\epsilon(\lambda) \leq \epsilon'(\lambda)$. 
\end{theorem}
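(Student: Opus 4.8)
The plan is to proceed by a sequence of game hops, starting from the real sIND-CPA game and ending at a game in which the challenge ciphertext is statistically independent of the bit $b$, so that any adversary's advantage is bounded by the sum of the advantages lost at each hop. The selective nature of the security model helps considerably: since $\mathcal{A}$ commits to $(ver^*,\mathbf{x}^*,\mathcal{R}^*)$ up front, the challenger knows in advance which version and which vector are ``sensitive'' and can program the random oracles and set up the public directory entries accordingly.

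First I would replace $\mathsf{PRF}(k_p,\cdot)$ by a truly random function; this is a direct reduction to PRF security and changes $\mathcal{A}$'s view only negligibly. In the resulting game the secret vectors $\mathbf{t}_{\mathbf{x}}$ attached to each queried $\mathbf{x}$ are independent and uniform in $\mathbb{Z}_p^{l_2}$, in particular $\mathbf{t}_{\mathbf{x}^*}$ is uniform and, crucially, independent of everything else. The next step is to argue that the collusion-resistance mechanism leaks nothing useful: from the public-directory pair $(pd_{\mathbf{x}^*,1},pd_{\mathbf{x}^*,2})$, a user holding $\mathsf{usk}_{\mathsf{id}}=(\mathbf{id},\mathbf{D}\cdot\mathbf{id})$ can recover only $\langle\mathbf{id},\mathbf{t}_{\mathbf{x}^*}\rangle \bmod p$, and likewise from $\mathsf{UPI}_{\mathbf{x},ver+1}$ only $k_t\langle\mathbf{id},\mathbf{v}_{\mathcal{R}_{\mathbf{x}}}\rangle$; with $H_2$ modeled as a random oracle, $(\mathbf{B}_{ver+1}-\mathbf{B}_{ver})\cdot\mathbf{x}$ is hidden to anyone who cannot recover $k_t$, i.e. to revoked users. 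This is where the restriction $\mathcal{R}^*\subseteq\mathcal{R}_{\mathbf{x}}$ when $(\mathbf{x},ver)=(\mathbf{x}^*,ver^*-1)$ is used: every $\mathsf{id}\in\mathcal{R}^*$ must satisfy $\langle\mathbf{id},\mathbf{v}_{\mathcal{R}_{\mathbf{x}^*}}\rangle=0$, so the $k_t$-encryption reveals nothing to exactly those parties who must not learn how to update $\mathsf{fsk}_{\mathbf{x}^*,\mathsf{id},ver^*-1}$ to version $ver^*$. I would formalize this by showing the joint view of all queried user keys, function keys (for $(\mathbf{x},ver)\neq(\mathbf{x}^*,ver^*)$), update keys and function-update messages can be simulated given only the ALS-IPFE public key, the ALS-IPFE function keys for $\{\mathbf{x}:\mathbf{x}\in T_x\setminus\{\mathbf{x}^*\}\}$, and fresh randomness — in particular without ever needing an ALS-IPFE function key for $\mathbf{x}^*$ at the challenge version.

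The heart of the argument is the final reduction to ALS-IND-CPA. Observe that for a fixed version $ver$ the quantity $\mathbf{C}+\mathbf{U}_{ver}=\mathbf{A}\mathbf{Z}_{ver}$ plays exactly the role of the ALS matrix $\mathbf{U}=\mathbf{A}\mathbf{Z}$, and the IPFE-FR ciphertext $(\mathbf{c}_{ver^*,1},\mathbf{c}_{ver^*,2})$ has precisely the ALS form with matrix $\mathbf{C}+\mathbf{U}_{ver^*}$. The reduction $\mathcal{B}$ receives the ALS public key $(\mathbf{A},\mathbf{U})$, embeds it so that $\mathbf{C}+\mathbf{U}_{ver^*}=\mathbf{U}$ (sampling $\mathbf{B}_{ver^*}$-related data so the embedding is consistent — here it is convenient that $CA$ holds $\mathbf{T}_{\mathbf{A}}$ only to re-derive $\mathbf{Z}_{ver}$, which $\mathcal{B}$ does not actually need for the challenge version since it will never hand out $\mathsf{fsk}_{\mathbf{x}^*,\cdot,ver^*}$), answers every function-key query at versions $ver\neq ver^*$ and every allowed query at $ver^*$ by forwarding $\mathbf{x}$ to its own ALS key oracle and re-randomizing into the right version via the $\mathbf{B}_{ver+1}-\mathbf{B}_{ver}$ shifts, and forwards $\mathcal{A}$'s challenge pair $(\mathbf{y}_0,\mathbf{y}_1)$ — whose validity constraint $\langle\mathbf{x},\mathbf{y}_0\rangle=\langle\mathbf{x},\mathbf{y}_1\rangle$ for all $\mathbf{x}\in T_x\setminus\{\mathbf{x}^*\}$ is exactly the ALS admissibility condition for the keys $\mathcal{B}$ has queried. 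A correct ALS challenge ciphertext, together with the simulated directory and update material, is a perfectly distributed IPFE-FR challenge, so $\mathcal{B}$ wins with the same advantage, giving $\epsilon(\lambda)\leq\epsilon'(\lambda)$ up to the negligible PRF and random-oracle terms.

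The main obstacle I anticipate is the bookkeeping around forward security across versions: the adversary may obtain $\mathsf{fsk}_{\mathbf{x}^*,\mathsf{id},ver}$ for every $ver<ver^*$, all update keys $\mathsf{uptk}_{ver+1}$, and all function-update messages, and one must check that none of these, when composed through the telescoping identity $\mathbf{A}(\mathbf{Z}_{ver}+\sum(\mathbf{B}_{j+1}-\mathbf{B}_j))=\mathbf{C}+\mathbf{U}_{ver'}$, lets the adversary reconstruct a version-$ver^*$ key for $\mathbf{x}^*$. The delicate point is that the increments $\mathbf{B}_{ver^*}-\mathbf{B}_{ver^*-1}$ are only released inside $\mathsf{UPI}_{\mathbf{x}^*,ver^*}$ encrypted under $H_2(k_t)$ with $k_t$ recoverable only by unrevoked users, and that the update key $\mathsf{uptk}_{ver^*}$, while it does contain $\mathsf{PowerT}_p(\mathbf{Z}_{ver^*-1})$ masked by $\mathbf{E}_1\cdot(\text{stuff})+\mathbf{E}_3$, hides $\mathbf{Z}_{ver^*-1}$ under LWE-style noise so it cannot be peeled off — this masking is where an additional LWE hybrid (or an appeal to the IPFEPRE security of \cite{luo2024updatable}) may be needed, and I would isolate it as its own game hop rather than fold it into the ALS reduction. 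Making all these simulations statistically faithful — in particular that the re-randomized function keys have the right Gaussian-shifted distribution and that the $\mathbf{v}_{\mathcal{R}_{\mathbf{x}}}$ vectors are distributed as in the real scheme — is the routine-but-lengthy part, and the genuinely load-bearing steps are the random-oracle programming of $H_2$ and the embedding of the ALS matrix at the challenge version.
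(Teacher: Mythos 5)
Your proposal matches the paper's proof in all essentials: a short game sequence that replaces the trapdoor-generated $\mathbf{A}$ and the $\mathsf{SamplePre}$'d $\mathbf{Z}_{ver}$ by a uniform $\mathbf{A}$ and self-sampled Gaussian $\mathbf{B}_{ver}$ with $\mathbf{U}_{ver}=\mathbf{A}\mathbf{B}_{ver}-\mathbf{C}$ for all $ver<ver^*$, embeds the ALS public key as $\mathbf{C}+\mathbf{U}_{ver^*}=\mathbf{U}_{ALS}$, answers version-$ver^*$ function-key queries by forwarding $\mathbf{x}$ to the ALS key oracle, simulates the $(\mathbf{x}^*,ver^*-1)$ function-update message by substituting a random string for $\mathsf{bin}((\mathbf{B}_{ver^*}-\mathbf{B}_{ver^*-1})\cdot\mathbf{x}^*)$ under the $H_2(k_t)$ mask, and forwards $(\mathbf{y}_0,\mathbf{y}_1)$ to the ALS challenger, giving $\epsilon(\lambda)\leq\epsilon'(\lambda)$. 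The only divergences are that the paper's reduction retains $k_p$ and all $\mathbf{B}_{ver}$ for $ver<ver^*$, so it needs neither your PRF-switching hop nor your extra LWE hybrid for the update keys (every $\mathsf{uptk}_{ver+1}$, $ver\leq ver^*-1$, is generated honestly from the known $\mathbf{Z}_{ver}=\mathbf{B}_{ver}$); and, like you, it rests the forward-security argument entirely on the $H_2(k_t)$ masking in the Case-3 update message, which it asserts rather than proves in the random-oracle model.
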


\begin{proof}
    Suppose that if a PPT adversary $\mathcal{A}$ exists with a non-negligible advantage 
    against the sIND-CPA security of our IPFE-FR scheme, then this implies the existence of an algorithm
    $\mathcal{B}$ that can break the IND-CPA security of 
    the ALS-IPFE scheme \cite{agrawal2016fully}. 
    Notably, the ALS-IPFE scheme is secure 
    under the $\mathrm{LWE}_{n,q,m,\alpha}$ assumption \cite{agrawal2016fully}. 
    Let $\mathcal{S}$ be the challenger in the IND-CPA game for the ALS-IPFE scheme. 

    $\mathbf{Game\ 0:}$ The original sIND-CPA game, as described in Section $\ref{model}$. 
    
    $\mathbf{Game\ 1:}$
    Apart from the modifications listed below, this game is the same as the previous one.
    Rather than generating $(\mathbf{A},\mathbf{T}_{\mathbf{A}})$ using $\mathsf{TrapGen}(1^n, 1^m, p)$, 
    we sample $\mathbf{A}$ uniformly from $\mathbb{Z}_p^{n \times m}$
    Moreover, we sample  
    $\mathbf{B}_{ver} \gets \mathcal{D}_{\mathbb{Z}^{m\times l_1},\rho_1}$ for $ver = 1,\ldots , ver^* -1 $,  
    $\mathbf{C} \gets \mathbb{Z}_p^{n\times l_1}$ and compute 
    $\mathbf{U}_{ver} = \mathbf{A}\cdot\mathbf{B}_{ver} -\mathbf{C}$ for $ver = 1,\ldots , ver^* -1 $. 
    Hence, we can use $\{\mathbf{B}_{ver}\}_{ver \in \left[ 1,\ldots , ver^*-1 \right]}$ to answer 
    function key queries for $ver \neq ver^*$. 

    By Lemma \ref{TrapGen}, we know that $\mathbf{B}_{ver} \gets \mathcal{D}_{\mathbb{Z}^{m\times l_1},\rho_1}$ is statistically close to 
    $\mathbf{Z}_{ver} \gets \mathsf{SamplePre}(\mathbf{A},\mathbf{T}_{\mathbf{A}}, \mathbf{C}+\mathbf{U}_{ver},\rho_1)$, and
    $\mathbf{A}\cdot \mathbf{B}_{ver} \in \mathbb{Z}_p ^{n\times l_1}$ is statistically close to uniform distribution over $\mathbb{Z}_p ^{n\times l_1}$. 
    Since $\mathbf{C}$ is uniformly distributed over $\mathbb{Z}_p ^{n\times l_1}$, 
    $\mathbf{U}_{ver}$ is statistically close to uniform distribution over $\mathbb{Z}_p ^{n\times l_1}$.
    Hence,  $\mathbf{Game\ 1}$ is statistically close to $\mathbf{Game\ 0}$. 
    
    $\mathbf{Game\ 2:}$
    The security of our scheme is reduced to the ALS-IPFE scheme in this game.

    $\mathbf{Init.}$ 
    $\mathcal{A}$ submits a version number $ver^*$, 
    a vector $\mathbf{x}^{*} \in \mathbb{Z}_p^{l_1}$ and 
    a set of revoked users $\mathcal{R}^*$ 
    to $\mathcal{B}$. 

    $\mathbf{Setup.}$ 
    $\mathcal{B}$ obtains $(\mathbf{A}_{ALS},\mathbf{U}_{ALS})$ from $\mathcal{S}$.
    Then, $\mathcal{B}$ sets $\mathbf{A} = \mathbf{A}_{ALS}$,  
    samples $\mathbf{V}\gets \mathbb{Z}^{n\times m}_q$, $\mathbf{D} \gets \mathcal{D}_{\mathbb{Z}^{m\times l_2},\rho_2}$, 
    $\mathbf{B}_{ver} \gets \mathcal{D}_{\mathbb{Z}^{m\times l_1},\rho_1}$ for $ver = 1,\ldots , ver^*-1$ and 
    computes $\mathbf{F}=\mathbf{V}\cdot \mathbf{D} \in \mathbb{Z}_q^{n\times l_2}$. 
    
    Furthermore, $\mathcal{B}$ computes
    $\mathbf{U}_{ver} = \mathbf{A}\cdot\mathbf{B}_{ver} -\mathbf{C}$ 
    for $ver = 1,\ldots, ver^*-1$
    and sets 
    $\mathbf{U}_{ver^*} = \mathbf{U}_{ALS} -\mathbf{C}$, which implies 
    $\mathbf{Z}_{ver} = \mathbf{B}_{ver}$ for $ver = 1,\ldots, ver^*$,  
    $\mathbf{Z}_{ver^*} = \mathbf{Z}_{ALS}$. 
    Let 
       $H_1: \{0,1\}^* \to \mathbb{Z}_p^{l_2}$, $H_2: \mathbb{Z}_p \to \{ 0,1\} ^ t $ be cryptographic hash functions 
       and $\mathsf{PRF}:\mathbb{Z}_p \times \mathbb{Z}_p^{l_1} \to \mathbb{Z}_p^{l_2}$ is a 
       pseudorandom function, where 
       $t = m \cdot\left\lceil \mathrm{log\ }(2Xl_1\rho_2) \right\rceil$ 

    To generate update key, $\mathcal{B}$  
    samples $\mathbf{E}_{1} \gets \mathbb{Z}_{p}^{hm \times n}$, 
    $\mathbf{E}_{2} \gets \mathcal{D}_{\mathbb{Z}^{hm\times m},\sigma_1}$ and 
    $\mathbf{E}_{3} \gets \mathcal{D}_{\mathbb{Z}^{hm\times l_1},\sigma_1}$ where $h=\lceil \mathrm{log\ }p \rceil$, 
    and computes 
    $\mathsf{uptk}_{ver+1}=$
    \begin{equation*}
    \begin{bmatrix} 
    \mathbf{E}_{1}\mathbf{A}_1+\mathbf{E}_{2} & \ \ \mathbf{E}_{1} \cdot (\mathbf{C}+ \mathbf{U}_{ver+1}) + \mathbf{E}_{3}-\mathsf{PowerT}_{p}(\mathbf{Z}_{ver})\\ 
    \mathbf{0}_{l_1\times m} & \ \ \mathbf{I}_{l_1\times l_1} \\
    \end{bmatrix} 
    \end{equation*}
    for $ver = 1,\dots ver^*-1$. 

    Finally, $\mathcal{B}$ stores $\{\mathbf{D}, \mathbf{Z}_{ver},
    \mathsf{uptk}_{ver+1}\}_{ver\in \left[1,ver^*-1\right]}$ and sends 
    $(\mathsf{mpk}= (\mathbf{A},\mathbf{V},\mathbf{C}),$ 
    $\left\{ \mathsf{gpk}_{ver}=(\mathbf{F},\mathbf{U}_{ver})\right\}_{ver\in \left[1,ver^*\right]},$
    $\mathsf{pp})$ to $\mathcal{A}$. 

    $\mathbf{Phase-1.}$

    $\mathsf{User\ Key\ Query}.$ $\mathcal{A}$ adaptively submits an identity $\mathsf{id}$ 
    %and a version number $i \in [1,ver^{*}]$ 
    to $\mathcal{B}$. 
    %with the restriction that $\mathsf{id} \in \mathcal{R}_{\mathbf{x}^*}$. 
    $\mathcal{B}$ 
    sets $\mathbf{id} = H_1(\mathsf{id}) \in \mathbb{Z}^{l_2}_p$, computes $\mathbf{u}_{\mathsf{id}}=\mathbf{D}\cdot \mathbf{id}$, 
    and forwards $\mathsf{usk}_{\mathsf{id}} = (\mathbf{id},\mathbf{u}_{\mathsf{id}})$ 
    to $\mathcal{A}$. 
    
    $\mathsf{Function\ Key\ Query.}$ 
    $\mathcal{A}$ adaptively submits $(\mathbf{x},\mathsf{id},ver)$ 
    to $\mathcal{B}$, where $\mathbf{x}\in \mathbb{Z}_p^{l_1}$ and $ver  \in \left[ 1,ver^*\right]$, 
    with the restriction that $(\mathbf{x},ver) \neq (\mathbf{x}^*,ver^*)$. 
    If $ver=ver^*$, $\mathcal{B}$ forwards $\mathbf{x}$ to the ALS-IPFE challenger $\mathcal{S}$ and obtains 
    $(\mathbf{x},\mathsf{sk}_{\mathbf{x}}^{ALS})$. Then, $\mathcal{B}$ lets $\mathbf{f}_{\mathbf{x},ver} = 
    \mathsf{sk}^{ALS}_{\mathbf{x}}$; otherwise, $\mathcal{B}$ 
    computes $\mathbf{f}_{\mathbf{x},ver} = \mathbf{Z}_{ver} \cdot \mathbf{x}$. 
    Then, 
    $\mathcal{B}$ sets 
    $\mathbf{t}_{\mathbf{x}} = \mathsf{PRF}(k_{p},\mathbf{x}) \in \mathbb{Z}^{l_2}_p$, 
    $\mathbf{id} = H_1(\mathsf{id}) \in \mathbb{Z}^{l_2}_p$
    and computes $\mathbf{f}_{\mathbf{x},\mathsf{id},ver} = \mathbf{f}_{\mathbf{x},ver} - 
    (\overbrace{ v_{\mathbf{x},\mathsf{id}},\cdots, v_{\mathbf{x},\mathsf{id}}}^{m})^{\top} \in \mathbb{Z}^m$, where 
    $v_{\mathbf{x},\mathsf{id}} = \langle \mathbf{id}, \mathbf{t}_{\mathbf{x}}\rangle$ mod $p$.
    
    Let the public directory $\mathsf{pd}$ contains $(pd_{\mathbf{x},1},pd_{\mathbf{x},2})$, 
    where $\mathbf{x}$ is key queries that have been made so far. 
    If $(pd_{\mathbf{x},1},pd_{\mathbf{x},2}) \notin \mathsf{pd}$,  
    sample $\mathbf{s}_1 \gets \mathbb{Z}_{q}^{n}$, 
    $\mathbf{e}_{1} \gets \mathcal{D}_{\mathbb{Z},\sigma_2}^{m}$, 
    $\mathbf{e}_{2} \gets \mathcal{D}_{\mathbb{Z},\sigma_2}^{l_2}$ and 
    computes $pd_{\mathbf{x},1} = \mathbf{V}^{\top} \cdot \mathbf{s}_1 + \mathbf{e}_{1}$ and 
    $pd_{\mathbf{x},2} = \mathbf{F}^{\top} \cdot \mathbf{s}_1 + \mathbf{e}_{2} + p^{k-1} \cdot \mathbf{t}_{\mathbf{x}}$. 
    Then, append $(pd_{\mathbf{x},1},pd_{\mathbf{x},2})$ to $\mathsf{pd}$.
    
    Lastly, $\mathcal{B}$ sends $\mathsf{fsk}_{\mathbf{x},\mathsf{id},ver} = (\mathbf{x},\mathbf{f}_{\mathbf{x},\mathsf{id},ver})$ 
    to $\mathcal{A}$ and 
    adds $\mathbf{x}$ into a (initially empty) set $T_x$ if $\mathbf{x} \notin T_x$. 

    $\mathsf{Update\ Key\ Query.}$ $\mathcal{A}$ adaptively submits a version number $ver \in [1,ver^{*}-1]$ 
    to $\mathcal{B}$. $\mathcal{B}$ sends $\mathsf{uptk}_{ver+1}$ to $\mathcal{A}$. Notably, $\mathsf{uptk}_{ver+1}$ has been 
    generated by $\mathcal{B}$ in $\mathbf{Setup}$ phase. 

    $\mathsf{Function\ Update\ Query.}$ $\mathcal{A}$ submits a version number $ver \in [1,ver^{*}-1]$, a vector $\mathbf{x}\in T_{\mathbf{x}}$ 
    and a set of revoked users $\mathcal{R}_{\mathbf{x}}$ for $\mathbf{x}$, with the restriction that 
    $\mathcal{R}^{*} \subseteq \mathcal{R}_{\mathbf{x}}$ if $(\mathbf{x},ver) = (\mathbf{x}^*,ver^*-1)$.   
    $\mathcal{B}$ randomly selects $K\in \mathbb{Z}_{p}$, 
    computes $\mathbf{v}_{\mathbf{x}}\in \mathbb{Z}^{l_2}_{p}\backslash \{\mathbf{0}\}$ such that 
    $\langle \mathbf{id},\mathbf{v}_{\mathbf{x}} \rangle = 0$ for every $\mathsf{id} \in  \mathcal{R}_{\mathbf{x}}$ 
    and samples 
    $\mathbf{s}_3 \gets \mathbb{Z}_{q}^{n}$, $\mathbf{e}_{5} \gets \mathcal{D}_{\mathbb{Z}^{m},\sigma_1}$ and
    $\mathbf{e}_{6} \gets \mathcal{D}_{\mathbb{Z}^{l_{1}},\sigma_1}$. 
    \begin{itemize}
        \item Case 1: $ver \neq ver^*-1$. 
            $\mathcal{B}$ computes 
            $\mathsf{upi}_{\mathbf{x},ver,1}=\mathbf{V}^{\top} \cdot \mathbf{s}_3 + \mathbf{e}_{5}$, 
            $\mathsf{upi}_{\mathbf{x},ver,2}=\mathbf{F}^{\top} \cdot \mathbf{s}_3 + \mathbf{e}_{6} + p^{k-1} \cdot (k_t \cdot \mathbf{v}_{\mathbf{x}})$ and 
            $\mathsf{upi}_{\mathbf{x},ver,3} = 
            H_2(k_t) \oplus \mathsf{bin}((\mathbf{B}_{ver+1} - \mathbf{B}_{ver})\cdot \mathbf{x})$. 
        \item Case 2: $ver=ver^*-1$ and $\mathbf{x} \neq \mathbf{x}^*$. 
            $\mathcal{B}$ sends $\mathbf{x}$ to the ALS-IPFE challenger $\mathcal{S}$ and obtains a secret key 
            $(\mathbf{x},\mathsf{sk}^{ALS}_{\mathbf{x}})$ for the vector $\mathbf{x}$. 
            $\mathcal{B}$ computes 
            $\mathsf{upi}_{\mathbf{x},ver,1}=\mathbf{V}^{\top} \cdot \mathbf{s}_3 + \mathbf{e}_{5}$, 
            $\mathsf{upi}_{\mathbf{x},ver,2}=\mathbf{F}^{\top} \cdot \mathbf{s}_3 + \mathbf{e}_{6} + p^{k-1} \cdot (k_t \cdot \mathbf{v}_{\mathbf{x}})$ and 
            $\mathsf{upi}_{\mathbf{x},ver,3} = 
            H_2(k_t) \oplus \mathsf{bin}(\mathsf{sk}^{ALS}_{\mathbf{x}}- \mathbf{B}_{ver}\cdot \mathbf{x})$. 
        \item Case 3: $(\mathbf{x},ver) = (\mathbf{x}^*,ver^*-1)$. 
            $\mathcal{B}$ randomly selects $rb \gets \{0,1\}^t$, 
            and computes 
            $\mathsf{upi}_{\mathbf{x},ver,1}=\mathbf{V}^{\top} \cdot \mathbf{s}_3 + \mathbf{e}_{5}$, 
            $\mathsf{upi}_{\mathbf{x},ver,2}=\mathbf{F}^{\top} \cdot \mathbf{s}_3 + \mathbf{e}_{6} + p^{k-1} \cdot (k_t \cdot \mathbf{v}_{\mathbf{x}})$ and 
            $\mathsf{upi}_{\mathbf{x},ver,3} = 
            H_2(k_t) \oplus rb $. 
    \end{itemize}
    $\mathcal{B}$ forwards $\mathsf{UPI}_{\mathbf{x},ver}=(\mathsf{upi}_{\mathbf{x},ver,1},\mathsf{upi}_{\mathbf{x},ver,2},$ $\mathsf{upi}_{\mathbf{x},ver,3},\mathbf{v}_{\mathbf{x}})$
    to $\mathcal{A}$. 

    $\mathbf{Challenge.}$ $\mathcal{A}$ submits two vectors $\mathbf{y}_{0}$ and $\mathbf{y}_{1}$,   %and a vector $\mathbf{x}^* \in \mathsf{Table}_{\mathbf{x}}$, 
constrained such that for all $\mathbf{x} \in T_x \setminus \{\mathbf{x}^*\}$, the equality $\langle \mathbf{x}, \mathbf{y}_{0} \rangle = \langle \mathbf{x}, \mathbf{y}_{1} \rangle$ holds.
    $\mathcal{B}$ forwards $(\mathbf{y}_{0},\mathbf{y}_{1})$ to the ALS-IPFE challenger $\mathcal{S}$ and 
    obtains the challenger ciphertext $\mathsf{ct}_{\mathbf{y}_b} =(\mathbf{c}^{ALS}_1,\mathbf{c}^{ALS}_2)$. 
    Then, $\mathcal{B}$  returns $\mathsf{ct}_{\mathbf{y}_b} =(\mathbf{c}^{ALS}_1,\mathbf{c}^{ALS}_2)$ to $\mathcal{A}$. 

    $\mathbf{Phase\ 2.}$  $\mathcal{A}$ is allowed to make queries  
    for user keys, function keys, update keys, and function update information, all of which are answered by
    $\mathcal{B}$ following the procedure defined in $\mathbf{Phase\ 1}$. 
        
    $\mathbf{Guess.}$ $\mathcal{A}$ produces a guess $b'\in\{0,1\}$ for the bit $b$. Then, $\mathcal{B}$ returns $b'$ to the ALS-IPFE 
    challenger $\mathcal{S}$ as its guess.  
    If $b=b'$, then $\mathcal{A}$ wins the above game, and as a result,  $\mathcal{B}$ 
    successfully breaks the IND-CPA security of the ALS-IPFE scheme.
    To sum up, $\epsilon'(\lambda)$ represents the maximum advantage of all PPT adversary in compromising the IND-CPA security of the ALS-IPFE scheme. 
    Therefore, the advantage $\epsilon(\lambda)$ of all PPT adversaries $\mathcal{A}$ in winning the above this game satisfies 
    $\epsilon(\lambda) \leq  \epsilon'(\lambda)$. 
\end{proof} 

To capture the forward security of our scheme, we allow the adversary $\mathcal{A}$ to obtain the function keys 
$\mathsf{fsk}_{\mathbf{x}^*,\mathsf{id},ver}$, where $ver\in \{1,\ldots,ver^*-1\}$. 

$\mathbf{Collusion\ resistance}.$ In our IPFE-FR scheme, 
$MI$s cannot collude with each other to perform inner product calculations over encrypted data.
Without loss of generality, we take medical institutions $MI$ and $MI'$ as an example. 
The $MI$ and $MI'$ have secret keys $(\mathsf{usk}_{\mathsf{id}}, 
\mathsf{fsk}_{\mathbf{x},\mathsf{id},ver})$ and 
$(\mathsf{usk}_{\mathsf{id}'}, \mathsf{fsk}_{\mathbf{x}',\mathsf{id}',ver})$, respectively. 
According to our scheme, we have 
$\mathsf{usk}_{\mathsf{id}} = (\mathbf{id}, \mathbf{u}_{\mathsf{id}}= \mathbf{D} \cdot \mathbf{id})$, 
$\mathsf{fsk}_{\mathbf{x},\mathsf{id},ver} = (\mathbf{x},\mathbf{f}_{\mathbf{x},\mathsf{id},ver} = \mathbf{Z}_{ver} \cdot \mathbf{x} - 
(\overbrace{v_{\mathbf{x},\mathsf{id}}, \cdots, v_{\mathbf{x},\mathsf{id}} }^{m})^{\top})$, 
$\mathsf{usk}_{\mathsf{id}'} = (\mathbf{id}', \mathbf{u}_{\mathbf{x}',\mathsf{id}'}= \mathbf{D} \cdot \mathbf{id}')$ 
and $\mathsf{fsk}_{\mathbf{x}',\mathsf{id}',ver} = (\mathbf{x}',\mathbf{f}_{\mathbf{x}',\mathsf{id}',ver} = \mathbf{Z}_{ver} \cdot \mathbf{x}'
- (\overbrace{v_{\mathbf{x}',\mathsf{id}'},
\cdots,v_{\mathbf{x}',\mathsf{id}'}}^{m})^{\top})$.
Notably, the ciphertext $(pd_{\mathbf{x},1}, pd_{\mathbf{x},2})$ encrypting $\mathbf{t}_{\mathbf{x}}$ and 
the ciphertext $(pd_{\mathbf{x}',1},pd_{\mathbf{x}',2})$ encrypting $\mathbf{t}_{\mathbf{x}'}$
are stored in the public directory $\mathsf{pd}$.

For the same $MI$'s secret keys such as $(\mathsf{usk}_{\mathsf{id}}, 
\mathsf{fsk}_{\mathbf{x},\mathsf{id},ver})$, 
one can use $\mathsf{usk}_{\mathsf{id}}$ to learn $v_{\mathbf{x},\mathsf{id}}$ 
from $(pd_{\mathbf{x},1},pd_{\mathbf{x},2})$. 
Then, he/she can further compute $\mathbf{f}_{\mathbf{x},ver} = 
\mathbf{Z}_{ver}\cdot \mathbf{x} = \mathbf{f}_{\mathbf{x},\mathsf{id},ver} + (\overbrace{v_{\mathbf{x},\mathsf{id}},
\cdots, v_{\mathbf{x},\mathsf{id}}}^{m})^{\top}
= \mathbf{Z}_{ver}\cdot \mathbf{x} + (\overbrace{v_{\mathbf{x},\mathsf{id}},
\cdots, v_{\mathbf{x},\mathsf{id}}}^{m})^{\top} -(\overbrace{v_{\mathbf{x},\mathsf{id}},
\cdots, v_{\mathbf{x},\mathsf{id}}}^{m})^{\top}$. 
Hence, the $MI$ can use $\mathbf{f}_{\mathbf{x},ver}$ to compute 
inner product from $\mathsf{CT}_{ver+1}$. 

However, when the secret keys are from different $MI$s, 
such as $(\mathsf{usk}_{\mathbf{x}},\mathsf{fsk}_{\mathbf{x}',\mathsf{id}',ver})$, 
they cannot be combined to perform inner product calculations over encrypted data.. 
Using $\mathsf{usk}_{\mathsf{id}}$, one can only calculate $v_{\mathbf{x},\mathsf{id}}$ and not 
$v_{\mathbf{x}',\mathsf{id}'}$, which is guaranteed by the security of IPFE. 
Hence, the vector $(\overbrace{v_{\mathbf{x}',\mathsf{id}'},
\cdots,v_{\mathbf{x}',\mathsf{id}'}}^{m})^{\top}$ in $\mathbf{f}_{\mathbf{x}',\mathsf{id}',ver}$ cannot be eliminated 
to obtain $\mathbf{f}_{\mathbf{x}',ver} = \mathbf{Z}_{ver} \cdot \mathbf{x}'$ and further, the inner product on encrypted data cannot be calculated. 

\section{Conclusion and Future Work} \label{section:eighth}
In this paper, to achieve fine-grained control over
users’ function computing right in EHR sharing system, we proposed a new notion called IPFE-FR, 
where the revocation of partial function computing rights of medical institutions is supported. 
Moreover, if a medical institution’s function computing right
is revoked, it cannot compute the function of encrypted
EHR data generated before revocation. 
To resist collusion attacks, the secret keys belonging to the same medical institution are tied together. 
We formalised the formal definition and security model of the proposed IPFE-FR.
Furthermore, we proposed a concrete construction of IPFE-FR and proved its security under the LWE assumption, which can resist quantum computing attacks. 
Finally, we conducted theoretical analysis and experimental implementation to evaluate its efficiency. In future work, we aim to construct IPFE-FR scheme based on the Ring Learning with Errors (RLWE) assumption to enhance the efficiency of our current scheme.

\section*{Acknowledgments}
This work was supported by the National Natural Science Foundation of China (Grant No. 62372103) and the Natural Science Foundation of Jiangsu Province (Grant No. BK20231149), the Jiangsu Provincial Scientific Research Center of Applied Mathematics (Grant No.BK202330020) and the Start-up Research Fund of Southeast University (Grant No. RF1028623300).

\bibliographystyle{IEEEtran}
\bibliography{paper}

\end{document}